\newtheorem{theorem}{Theorem}
\newtheorem{definition}{Definition}
\newtheorem{lemma}{Lemma}
\newtheorem{example}{Example}
\author{{\rm Xiaolan Gu}\\ University of Arizona\\{\rm xiaolang@arizona.edu}
\and {\rm Ming Li}\\ University of Arizona\\{\rm lim@arizona.edu} 
\and {\rm Li Xiong}\\ Emory University\\{\rm lxiong@emory.edu}}
\begin{document}

%don't want date printed
\date{}

\title{\Large \bf DP-BREM: Differentially-Private and Byzantine-Robust Federated Learning\\ with Client Momentum}

\maketitle

\begin{abstract}
Federated Learning (FL) allows multiple participating clients to train machine learning models collaboratively while keeping their datasets local and only exchanging the gradient or model updates with a coordinating server. Existing FL protocols are vulnerable to attacks that aim to compromise data privacy and/or model robustness. Recently proposed defenses focused on ensuring either privacy or robustness, but not both. In this paper, we focus on simultaneously achieving differential privacy (DP) and Byzantine robustness for cross-silo FL, based on the idea of learning from history. The robustness is achieved via client momentum, which averages the updates of each client over time, thus reducing the variance of the honest clients and exposing the small malicious perturbations of Byzantine clients that are undetectable in a single round but accumulate over time. In our initial solution DP-BREM, DP is achieved by adding noise to the aggregated momentum, and we account for the privacy cost from the momentum, which is different from the conventional DP-SGD that accounts for the privacy cost from the gradient. Since DP-BREM assumes a trusted server (who can obtain clients' local models or updates), we further develop the final solution called DP-BREM\textsuperscript{+}, which achieves the same DP and robustness properties as DP-BREM without a trusted server by utilizing secure aggregation techniques, where DP noise is securely and jointly generated by the clients. Both theoretical analysis and experimental results demonstrate that our proposed protocols achieve better privacy-utility tradeoff and stronger Byzantine robustness than several baseline methods, under different DP budgets and attack settings.
\end{abstract}

\section{Introduction}

Federated learning (FL) \cite{mcmahan2017communication}  is an emerging paradigm that enables multiple clients to collaboratively learn models without explicitly sharing their data. The clients upload their local model updates to a coordinating server, which then shares the global average with the clients in an iterative process. This offers a promising solution to mitigate the potential privacy leakage of sensitive information about individuals (since the data stays local with each client), such as typing history, shopping transactions, geographical locations, and medical records. However, recent works have demonstrated that FL may not always provide sufficient \emph{privacy} and \emph{robustness} guarantees.  In terms of privacy leakage, exchanging the model updates throughout the training process can still reveal sensitive information \cite{bhowmick2018protection,melis2019exploiting} and cause deep leakage such as pixel-wise accurate image recovery  \cite{zhu2019deep,yin2021see}, either to a third-party (including other participating clients) or the central server. In terms of robustness, the decentralization design of FL systems opens up the training process to be manipulated by malicious clients, aiming to either prevent the convergence of the global model (a.k.a. Byzantine attacks) \cite{fang2020local,baruch2019little,xie2020fall}, or implant a backdoor trigger into the global model to cause targeted misclassification (a.k.a. backdoor attacks)  \cite{bagdasaryan2020backdoor,wang2020attack}.

To mitigate the privacy leakage in FL, Differential Privacy (DP) \cite{dwork2006calibrating,dwork2014algorithmic} has been adopted as a rigorous privacy notion. Existing frameworks  \cite{mcmahan2018learning,geyer2017differentially,li2020differentially} applied DP in FL to provide \emph{client-level} privacy under the assumption of a trusted server: whether a client has participated in the training process cannot be inferred by a third party from the released global model. Other works in FL \cite{zheng2021federated,li2020differentially,xu2019hybridalpha,truex2019hybrid} focused on \emph{record-level} privacy: whether a data record at a client has participated during training cannot be inferred by the server or other adversaries that have access to the model updates or the global model. Record-level privacy is more relevant in cross-silo (as versus cross-device) FL scenarios, such as multiple hospitals collaboratively learn a prediction model for COVID-19, in which case what needs to be protected is the privacy of each patient (corresponding to each record in a hospital's dataset). In this paper, we focus on cross-silo FL with \emph{record-level} DP, where each client possesses a set of raw records, and each record corresponds to an individual's private data.

To defend against Byzantine attacks, robust FL protocols are proposed to ensure that the training procedure is robust to a fraction of potentially malicious clients. This problem has received significant attention from the community. Most existing approaches replace the averaging step at the server with a robust aggregation rule, such as the median \cite{chen2017distributed,yin2018byzantine,blanchard2017machine,mhamdi2018hidden}. However, recent state-of-the-art attacks \cite{baruch2019little,xie2020fall,shejwalkar2021manipulating} have demonstrated the failure of the above robust aggregators. Furthermore, a recent work \cite{karimireddy2021learning} shows that there exist realistic scenarios where these robust aggregators fail to converge, even if there are no Byzantine attackers and the data distribution is identical (i.i.d.) across the clients, and proposed a new solution called Learning From History (LFH) to address this issue.  LFH achieves robustness via client momentum with the motivation of averaging the updates of each client over time, thus reducing the variance of the honest clients and exposing the small malicious perturbations of Byzantine clients that are undetectable in a single round but accumulate over time.

In this paper, we focus on achieving record-level DP and Byzantine robustness simultaneously in cross-silo FL. Existing  FL protocols with DP-SGD \cite{abadi2016deep} do not achieve the robustness property intrinsically.  Directly implementing an existing robust aggregator over the privatized client gradients will lead to poor utility because these aggregators (such as median \cite{yin2018byzantine,blanchard2017machine,mhamdi2018hidden}) usually have large sensitivity and require large DP noise, leading to poor utility. It is desirable to achieve DP guarantees based on average-based aggregators. Although LFH \cite{karimireddy2021learning} is an average-based Byzantine-robust FL protocol, it aggregates client momentum instead of gradient, thus it is non-trivial to achieve DP on top of LFH. We show that a direct combination of LFH with DP-SGD momentum has several limitations, leading to both poor utility and robustness. Therefore, we aim to address these limitations in our solution.

To achieve an enhanced privacy-utility tradeoff, we start our problem from an assumption that the server is trusted and developed a \textbf{D}ifferentially-\textbf{P}rivate and \textbf{B}yzantine-\textbf{R}obust f\textbf{E}derated learning algorithm with client \textbf{M}omentum (\textbf{DP-BREM}), which essentially is a DP version of the Byzantine-robust method LFH \cite{karimireddy2021learning}. Instead of adding DP noise to the gradient and then aggregating momentum as post-processing, we add DP noise to the aggregated momentum with carefully computed sensitivity to account for the privacy cost. Since the noise is added to the final aggregate (instead of intermediate local gradient), our basic solution DP-BREM maintains the non-private LFH's robustness as much as possible, which we show both theoretically (via convergence analysis) and empirically (via experimental results). Then, we relax our trust assumption to a malicious server (for privacy only) and develop our final solution DP-BREM\textsuperscript{+}. It utilizes secure multiparty computation (MPC) techniques, including secure aggregation and secure noise generation, to achieve the same DP and robustness guarantees as in DP-BREM. In Table \ref{tab:DP_compare}, we compare DP-BREM and DP-BREM\textsuperscript{+} with existing approaches (or the variants) that achieve both DP and Byzantine robustness (DDP-RP \cite{wang2022privacy} and DP-RSA \cite{zhu2022bridging} are described in Sec. 7). These approaches will be evaluated and compared in experiments. Our main contributions are:

1) We propose DP-BREM, a novel differentially private and Byzantine-robust FL protocol that adds DP noise to aggregated client momentum with computed sensitivity. Our privacy analysis (Theorem \ref{thm:privacy_analysis}) accounts for momentum, differing from conventional DP-SGD which accounts for gradient. Our convergence analysis (Theorem \ref{thm:convergence_rate}) shows minimal convergence rate sacrifice for DP compared to the baseline.

2) Considering that DP-BREM is developed under the assumption of a trusted server, we propose the final solution called DP-BREM\textsuperscript{+} (Section \ref{sec:secure_aggregation}), which achieves the same privacy and robustness properties as DP-BREM, even under a \emph{malicious} server (for privacy only), using secure multiparty computation techniques. DP-BREM\textsuperscript{+} is built based on the framework of secure aggregation with verifiable inputs (SAVI) \cite{roy2022eiffel}, but extends it to guarantee the integrity of DP noise via a novel secure distributed noise generation protocol. Our extended SAVI protocol is general enough to be applied to other DP and robust FL protocols that are average-based.

3) We demonstrate our protocols' effectiveness through extensive experiments on MNIST, CIFAR-10, and FEMNIST datasets (Section \ref{sec:experiments}), showing improved utility with the same record-level DP guarantees and strong robustness against Byzantine clients under state-of-the-art attacks, compared to baseline methods.

\section{Preliminaries}
\subsection{Differential Privacy (DP)}
\label{sec:preliminaries_DP}

Differential Privacy (DP) is a rigorous mathematical framework for the release of information derived from private data. Applied to machine learning, a differentially private training mechanism allows the public release of model parameters with a strong privacy guarantee: adversaries are limited in what they can learn about the original training data based on analyzing the parameters, even when they have access to arbitrary side information. The formal definition is as follows:
\begin{definition}[$(\epsilon,\delta)$-DP \cite{dwork2014algorithmic,dwork2006calibrating}]
\label{def:DP}
    For $\epsilon\in[0,\infty)$ and $\delta\in[0,1)$, a randomized mechanism $\mathcal{M}:\mathcal{D}\rightarrow \mathcal{R}$ with a domain $\mathcal{D}$ (e.g., possible training datasets) and range $\mathcal{R}$ (e.g., all possible trained models) satisfies $(\epsilon,\delta)$-Differential Privacy (DP) if for any two neighboring datasets $D, D^\prime\in\mathcal{D}$ that differ in only one record and for any subset of outputs $S\subseteq \mathcal{R}$, it holds that
    \begin{align*}
        \mathbb{P}[\mathcal{M}(D)\in S]\leqslant e^\epsilon\cdot \mathbb{P}[\mathcal{M}(D^\prime)\in S] +\delta
    \end{align*}
    where $\epsilon$ and $\delta$ are privacy parameters (or privacy budget), and a smaller $\epsilon$ and $\delta$ indicate a more private mechanism.
\end{definition}

\textbf{Gaussian Mechanism.} 
A common paradigm for approximating a deterministic real-valued function $f:\mathcal{D}\rightarrow\mathbb{R}$ with a differentially-private mechanism is via additive noise calibrated to $f$'s sensitivity $s_f$, which is defined as the maximum of the absolute distance $|f(D)-f(D^\prime)|$. The Gaussian Mechanism is defined by $\mathcal{M}(D)=f(D)+\mathcal{N}(0,s_f^2\cdot\sigma^2)$, where $\mathcal{N}(0,s_f^2\cdot\sigma^2)$ is noise drawn from a Gaussian distribution. It was shown that $\mathcal{M}$ satisfies $(\epsilon,\delta)$-DP if $\delta\geqslant\frac{4}{5}e^{-(\sigma\epsilon)^2/2}$ and $\epsilon<1$ \cite{dwork2014algorithmic}. Note that we use an advanced privacy analysis tool proposed in \cite{dong2019gaussian}, which works for all $\epsilon>0$. 

\textbf{DP-SGD Algorithm.}
The most well-known differentially-private algorithm in machine learning is DP-SGD \cite{abadi2016deep}, which introduces two modifications to the vanilla stochastic gradient descent (SGD). First, a \emph{clipping step} is applied to the gradient so that the gradient is in effect bounded for a finite sensitivity. The second modification is \emph{Gaussian noise augmentation} on the summation of clipped gradients, which is equivalent to applying the Gaussian mechanism to the updated iterates. The privacy accountant of DP-SGD is shown in Appendix \ref{apx:GDP}.

\subsection{Federated Learning (FL) with DP}
\label{sec:preliminaries_FL_DP}

Federated Learning (FL) \cite{kairouz2021advances,mcmahan2017communication} is a collaborative learning setting to train machine learning models. We consider the horizontal cross-silo FL setting, which involves multiple clients, each holding their own private dataset of the same set of features, and a central server that implements the aggregation. Unlike the traditional centralized approach, data is not stored at a central server; instead, clients train models locally and exchange updated parameters with the server, which aggregates the received local model parameters and sends them to the clients. Based on the participating clients and scale, federated learning can be classified into two types: \emph{cross-device} FL where clients are typically mobile devices and the client number can reach up to a scale of millions; \emph{cross-silo} FL (our focus) where clients are organizations or companies and the client number is relatively small (e.g., within hundreds).

%FL involves multiple iterations. At each iteration, the server randomly chooses a subset of clients and sends them the current model parameter; then these clients update the local models or local gradients according to their local datasets and send the updated parameters to the server. The server aggregates the results and updates the global model. After a certain number of iterations (or until convergence), the final model parameter is returned as the output of the FL process. 

\textbf{FL with DP.} In FL, the \emph{neighboring datasets} $D$ and $D^\prime$ in Definition \ref{def:DP} can be defined at two distinct levels:  \emph{record-level} and \emph{client-level}. In cross-device FL, each device usually stores one individual's data, and then the whole device's data should be protected. It corresponds to client-level DP, where $D^\prime$ is obtained by adding or removing one client/device's whole training dataset from $D$. In cross-silo FL, each record corresponds to one individual's data, then record-level DP should be provided, where $D^\prime$ is obtained by adding or removing a single training record/example from $D$. Since we consider cross-silo FL, achieving \emph{record-level} DP is our privacy goal.

\subsection{Byzantine Attacks and Defenses}
\label{sec:preliminaries_LFH}

In a Byzantine attack, the adversary aims to destroy the convergence of the model. Due to the decentralization design, FL systems are vulnerable to Byzantine clients, who may not follow the protocol and can send arbitrary updates to the server. Also, they may have complete knowledge of the system and can collude with each other. Most state-of-the-art defense mechanisms \cite{chen2017distributed,yin2018byzantine,blanchard2017machine,mhamdi2018hidden} play with median statistics of client gradients. However, recent attacks \cite{baruch2019little,xie2020fall} have empirically demonstrated the failure of the above robust aggregations. 

\textbf{LFH: Non-private Byzantine-Robust Defense.}
Recently, Karimireddy et al. \cite{karimireddy2021learning} showed that most state-of-the-art robust aggregators require strong assumptions and may not converge even in the complete absence of Byzantine attackers. Then, they proposed a new Byzantine-robust scheme called "learning from history" (LFH) that essentially utilizes two simple strategies: \emph{client momentum} (during local update) and \emph{centered clipping} (during server aggregation). In each iteration $t$,  client $\mathsf{C}_i$ receives the global model parameter $\bm{\theta}_{t-1}$ from the server, and computes the local gradient of the random dataset batch  $\mathcal{D}_{i,t}\subseteq\mathcal{D}_{i}$ by
\begin{align}
    \label{equ:LFH_local_gradident}
    \bm{g}_{t,i}=\frac{1}{|\mathcal{D}_{i,t}|}\sum\nolimits_{\bm{x}\in\mathcal{D}_{i,t}}\nabla_{\bm{\theta}}\ell(\bm{x}, \bm{\theta}_{t-1})
\end{align}
where $\nabla_{\bm{\theta}}\ell(\bm{x}, \bm{\theta}_{t-1})$ is the per-record gradient w.r.t. the loss function $\ell(\cdot)$. The client momentum can be computed via
\begin{align}
    \label{equ:LFH_local_momentum}
    \bm{m}_{t,i} =
    (1-\beta)\bm{g}_{t,i} + \beta\bm{m}_{t-1,i}
\end{align}
where $\beta\in[0,1)$. After receiving $\bm{m}_{t, i}$ from all clients, the server implements aggregation with centered clipping via
\begin{align}
    \label{equ:LFH_server_aggregate}
    \bm{m}_t = \bm{m}_{t-1} + \frac{1}{n}\sum\nolimits_{i=1}^n \mathsf{Clip}_C(\bm{m}_{t,i} - \bm{m}_{t-1}) 
\end{align}
where $\mathsf{Clip}_C(\cdot)$ with scalar $C>0$ is the clipping function:
\begin{align}
    \label{equ:clipping_function}
    \mathsf{Clip}_C(\bm{x})\coloneqq \bm{x}\cdot\min\{1,~C/\|\bm{x}\|\}
\end{align}
and $\|\bm{x}\|$ is the L2-norm of any vector $\bm{x}$. The clipping operation $\mathsf{Clip}_C(\bm{m}_{t,i} - \bm{m}_{t-1})$ essentially bounds the distance between client's local momentum $\bm{m}_{t,i}$ and the previous aggregated momentum $\bm{m}_{t-1}$, thus restricts the impact from Byzantine clients. Then, the global model $\bm{\theta}_t$ can be updated by $\bm{\theta}_t = \bm{\theta}_{t-1}-\eta_t \bm{m}_t$ with learning rate $\eta_t$. The convergence rate under Byzantine attacks is shown by the following lemma.

\begin{lemma}[Convergence Rate of LFH \cite{karimireddy2021learning}]
\label{lem:convergence_rate_of_LFH}
With some parameter tuning, the convergence rate of the Byzantine-robust algorithm LFH is asymptotically (ignoring constants and higher order terms) of the order
\begin{align}
    \label{equ:convergence_rate_of_LFH}
    \frac{1}{T}\sum\nolimits_{t=1}^T\mathbb{E}\|\nabla\ell(\bm{\theta}_{t-1})\|^2 \lesssim \sqrt{\frac{\rho^2}{T}\frac{1+|\mathcal{B}|}{n}}
\end{align}
where $\ell(\cdot)$ is the loss function, $T$ is the total number of training iterations, $|\mathcal{B}|$ is the number of Byzantine clients, $n$ is the number of all clients, and $\rho$ is a parameter that quantifies the variance of honest clients' stochastic gradients:
\begin{align}
    \label{equ:definition_rho_LFH}
    \mathbb{E}\|\bm{g}_{t,i}-\mathbb{E}[\bm{g}_{t,i}]\|^2\leqslant\rho^2
\end{align}
\end{lemma}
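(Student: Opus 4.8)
\emph{Proof sketch (plan).}
This is the main convergence guarantee of \cite{karimireddy2021learning}; here I outline the argument I would follow to establish it. The plan uses the standard three-part template for momentum-based Byzantine-robust SGD: (i) bound how far the honest clients' averaged momentum is from the true gradient, (ii) bound the additional error that centered clipping incurs because of the Byzantine clients, and (iii) feed both into the descent inequality for an $L$-smooth, lower-bounded loss $\ell$, telescope over $t$, and finally tune $\eta_t$ and $\beta$ to balance the resulting terms. Throughout I assume $\ell$ is $L$-smooth with $\ell^\ast:=\inf_{\bm\theta}\ell(\bm\theta)>-\infty$, honest stochastic gradients are unbiased, and their variance satisfies \eqref{equ:definition_rho_LFH}.

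\emph{Steps 1--2 (the two error sources).}
Let $\mathcal{G}$ be the honest clients and $\bar{\bm m}_t:=\frac1{|\mathcal{G}|}\sum_{i\in\mathcal{G}}\bm m_{t,i}$. Unrolling \eqref{equ:LFH_local_momentum} writes $\bar{\bm m}_t$ as a geometrically weighted sum of past mini-batch gradients, and comparing it to $\nabla\ell(\bm\theta_{t-1})$ splits $\mathbb{E}\|\bar{\bm m}_t-\nabla\ell(\bm\theta_{t-1})\|^2$ into a noise part and a drift part: the noise part is $O\!\big((1-\beta)\rho^2/|\mathcal{G}|\big)$, since the momentum averages over an effective horizon of $1/(1-\beta)$ rounds and over $|\mathcal{G}|\approx n$ independent clients, while the drift part is controlled by $L$-smoothness in terms of $\eta^2L^2$ times a running bound on $\|\bm m\|^2$; this yields a recursion $\mathbb{E}\|\bar{\bm m}_t-\nabla\ell(\bm\theta_{t-1})\|^2\le \beta\,\mathbb{E}\|\bar{\bm m}_{t-1}-\nabla\ell(\bm\theta_{t-2})\|^2+O\!\big((1-\beta)\rho^2/n+\eta^2L^2\cdot(\text{running bound})\big)$. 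For the aggregation, I would show that centered clipping \eqref{equ:LFH_server_aggregate} is an agnostic robust aggregator: if the honest increments $\bm m_{t,i}-\bm m_{t-1}$ have second moment at most $\zeta^2$ around their mean, then taking $C=\Theta(\zeta)$ gives $\mathbb{E}\|\bm m_t-\bm m_{t-1}-\bar{\bm m}_t\|^2=O\!\big(\tfrac{|\mathcal{B}|}{n}\zeta^2\big)$ — proved by splitting the server sum over $\mathcal{G}$ and $\mathcal{B}$, noting clipping moves an honest term only by its excess over $C$ (small once $C\gtrsim\zeta$) while each of the $|\mathcal{B}|$ adversarial terms contributes at most $C$ in norm after clipping. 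Since $\zeta^2=O\!\big(\rho^2+\|\nabla\ell(\bm\theta_{t-1})\|^2+\mathbb{E}\|\bar{\bm m}_t-\nabla\ell(\bm\theta_{t-1})\|^2\big)$, this bound feeds back into the Step 1 recursion.

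\emph{Step 3 (descent and parameter tuning).}
From $L$-smoothness, $\ell(\bm\theta_t)\le\ell(\bm\theta_{t-1})-\eta_t\langle\nabla\ell(\bm\theta_{t-1}),\bm m_t\rangle+\tfrac{L\eta_t^2}{2}\|\bm m_t\|^2$. Writing $\bm m_t=\nabla\ell(\bm\theta_{t-1})+(\bar{\bm m}_t-\nabla\ell(\bm\theta_{t-1}))+(\bm m_t-\bm m_{t-1}-\bar{\bm m}_t)+(\text{lag term in }\bm m_{t-1})$ and taking expectations, the inner-product term contributes $-\eta_t\mathbb{E}\|\nabla\ell(\bm\theta_{t-1})\|^2$ plus cross terms that are absorbed by the Step 1 and Step 2 bounds. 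Rearranging, summing $t=1,\dots,T$, telescoping $\ell(\bm\theta_0)-\ell^\ast$, and bounding $\sum_t\eta_t^2\mathbb{E}\|\bm m_t\|^2$, one gets (for constant $\eta$) $\frac1T\sum_t\mathbb{E}\|\nabla\ell(\bm\theta_{t-1})\|^2\lesssim\frac1{\eta T}+\eta L\rho^2\cdot\frac{1+|\mathcal{B}|}{n}+(\text{higher order})$, once $1-\beta$ is set $\asymp\eta L$ so the momentum bias matches its own variance. Optimizing the step size (so the first two terms balance, $\eta\asymp\sqrt{n/(TL\rho^2(1+|\mathcal{B}|))}$) then gives the claimed rate $\sqrt{\tfrac{\rho^2}{T}\cdot\tfrac{1+|\mathcal{B}|}{n}}$ up to constants and higher-order terms — this optimization is exactly the ``some parameter tuning'' in the statement.

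\emph{Main obstacle.}
The crux is the mutual coupling between Steps 1 and 2: the honest-momentum error depends on the model drift, hence on $\|\bm m\|^2$, while the robust-aggregation error depends on $\zeta^2$, which in turn depends on the momentum error and the current gradient norm. Closing this loop requires unrolling both recursions simultaneously and choosing $\eta$, $\beta$, and $C$ jointly so that every cross term stays subdominant; a careless bound either destroys the $1/n$ variance reduction from client momentum or weakens the Byzantine dependence from $|\mathcal{B}|/n$ to $|\mathcal{B}|$.
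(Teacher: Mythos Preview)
The paper does not prove this lemma at all: it is stated in the preliminaries section as a result quoted directly from \cite{karimireddy2021learning} (their Theorem~VI, in informal form), with only an interpretation paragraph following it. When the authors later need the convergence rate for their own algorithm (Theorem~\ref{thm:convergence_rate}, Appendix~\ref{apx:proof_of_thm_convergence_rate}), they again simply invoke \cite[Theorem~VI]{karimireddy2021learning} as a black box and substitute their modified aggregation-error bound into it. So there is no ``paper's own proof'' to compare your attempt against.

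That said, your sketch is a faithful reconstruction of the argument in the original LFH paper: the three-part template (momentum-error recursion, centered-clipping as an agnostic robust aggregator with error $O(\tfrac{|\mathcal{B}|}{n}\zeta^2)$ when $C=\Theta(\zeta)$, descent lemma plus telescoping and step-size optimization) is exactly how Karimireddy et al.\ proceed, and you have correctly identified the coupling between the momentum-bias recursion and the aggregation-error bound as the technical crux. One small wrinkle: in your Step~3 decomposition of $\bm m_t$, the ``lag term in $\bm m_{t-1}$'' is not a separate error source to be bounded --- the recursion \eqref{equ:LFH_server_aggregate} already expresses $\bm m_t$ as $\bm m_{t-1}$ plus the clipped-average increment, so the natural split is $\bm m_t - \nabla\ell(\bm\theta_{t-1}) = (\bar{\bm m}_t - \nabla\ell(\bm\theta_{t-1})) + (\bm m_t - \bar{\bm m}_t)$, with the second piece handled directly by the robust-aggregation guarantee rather than through a further telescoping in $\bm m_{t-1}$. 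This is cosmetic and does not affect the final rate.
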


\textbf{Interpretation of Lemma \ref{lem:convergence_rate_of_LFH}.} 
When there are no Byzantine clients, LFH recovers the optimal rate of $\frac{\rho}{\sqrt{nT}}$. In the presence of a $|\mathcal{B}|/n$ fraction of Byzantine clients, the rate has an additional term $\rho\sqrt{\frac{|\mathcal{B}|/n}{T}}$, which depends on the fraction $|\mathcal{B}|/n$ but does not improve with increasing clients.

\section{Problem Statement and Motivation}

\subsection{Problem Statement}
\label{sec:problem_statement}

\textbf{System Model.}
Our system model follows the general setting of Fed-SGD \cite{mcmahan2017communication}. There are multiple parties in the FL system: one aggregation server and $n$ participating clients $\{\mathsf{C}_1,\cdots,\mathsf{C}_n\}$.  The server holds a global model $\bm{\theta}_t\in\mathbb{R}^d$  and each client $\mathsf{C}_i$, $i\in\{1,\cdots,n\}$ possesses a private training dataset $\mathcal{D}_i$. The server communicates with each client through a secure (private and authenticated) channel. During the iterative training process, the server broadcasts the global model in the current iteration to all clients and aggregates the received gradient/momentum from all clients (or a subset of clients) to update the global model until convergence. The final global model is returned after the training process as the output.

\textbf{Threat Model.}
The considered adversary aims to perform a 1) privacy attack and/or 2) Byzantine attack with the following threat model, respectively.

1) Privacy Attack. Following the conventional FL setting, we assume the server has no access to the client's local training data, but may have an incentive to infer clients' private information. In our initial solution called DP-BREM, we assume a trusted server that can obtain clients' local models/updates. The adversary is a third party or the participating clients (can be any set of clients) who have access to the intermediate and final global models and may use them to infer the private data of an honest client $\mathsf{C}_i$. Hence, the privacy goal is to ensure the global model (and its update) satisfies DP. In our final solution DP-BREM\textsuperscript{+}, in addition to third parties and clients, the adversary also includes the server that tries to infer additional information from the local updates (and may deviate from the protocol for privacy inference). Such a model is also adopted in previous work \cite{roy2022eiffel}. Following \cite{roy2022eiffel}, we assume a minority of malicious clients who can deviate from the protocol arbitrarily. 

2) Byzantine Attack. 
Recall that the goal of Byzantine attacks is to destroy the convergence of the global model (discussed in Section \ref{sec:preliminaries_LFH}). We only consider malicious clients as the adversaries for Byzantine attacks because the server's primary goal is to train a robust model, thus no incentive to implement Byzantine attacks. These malicious clients (assumed to be a minority of all participating clients) can deviate from the protocol arbitrarily and have full control of both their local training data and their submission to the servers, but do not influence other honest clients.

\textbf{Objectives.}
The goal of this paper is to achieve both record-level DP and Byzantine robustness at the same time. We aim to provide high utility (i.e., high accuracy of the global model) with the required DP guarantee under the existence of Byzantine attacks from malicious clients.  Our ultimate privacy goal is to provide DP guarantees against an untrusted server and other clients, but we start by assuming a trusted server first in our initial solution.

\begin{table*}[t!]
	\footnotesize
	\centering
	\caption{Comparison of FL approaches with DP and Byzantine-robustness}
	\renewcommand{\arraystretch}{1.8}
	%\vspace{-3mm}
	\begin{threeparttable}
	\begin{tabular}{c|cc|cc|c}
		\Xhline{1pt}
		\multirow{2}{6em}{\makecell[c]{Approaches}} & 
		 \multicolumn{4}{c|}{ Differential Privacy (DP) \tnote{$\mathsection$}} & \multicolumn{1}{c}{Byzantine Robustness}  \\
		 \cline{2-6}
		 & \makecell[c]{ Trust Assumption \\ of Server}  &
		\makecell[c]{Noise \\ Generator } &
		\makecell[c]{Perturbation \\ Mechanism} &
		 \makecell[c]{Standard Deviation    \\ of Noise in Aggregate} &  \makecell[c]{Mechanism\\   }   \\
		\Xhline{1pt}
		\makecell[c]{DP-FedSGD \cite{mcmahan2018learning}    \\ with both record and \\ client norm clippings} &  trusted & server & $\sum_i g_i+\mathcal{N}(0,\sigma^2)$ &   $\sigma$  & client norm clipping \\
           \hline
           \makecell[c]{CM \cite{yin2018byzantine} with \\ DP noise} &  trusted & server & $ \text{median}( \{g_i\}_{i=1}^n)+\mathcal{N}(0,\sigma^2)$ &   $\sigma$  & coordinate-wise median (CM) \\
           \hline
		DDP-RP \cite{wang2022privacy} \tnote{$\diamond$}   & honest-but-curious  & \makecell[c]{clients \\ (distributively)} & $\sum_i[g_i+\mathcal{N}(0,\frac{\sigma^2}{\tau})]$ & $\sqrt{\frac{n}{\tau}}\cdot\sigma$ &  element-wise range proof  \\
		\hline
		DP-RSA \cite{zhu2022bridging} & untrusted  & client& $\sum_i[\text{sign}(g_i)+\mathcal{N}(0,\sigma^2)]$ &   $\sqrt{n}\cdot\sigma$  & aggregation of sign-SGD \\
            \hline
		\makecell[c]{DP-LFH \\ (baseline in Sec. \ref{sec:challenges_and_baseline})} & untrusted   & client & $\sum_i[m_i+\mathcal{N}(0,\sigma^2)]$ &   $\sqrt{n}\cdot\sigma$  & \multirow{3}{*}{\makecell[c]{LFH \cite{karimireddy2021learning}: client momentum \\ and centered clipping}} \\
		\cline{1-5}
		\makecell[c]{DP-BREM \\(our initial solution)}  & trusted  & server & \multirow{2}{*}{$\sum_i m_i+\mathcal{N}(0,\sigma^2)$} & \multirow{2}{*}{$\sigma$} &   \\
        \cline{1-3}
		\makecell[c]{DP-BREM\textsuperscript{+} \\ (our final solution)  \tnote{$\dagger$}} & untrusted  & \makecell[c]{clients \\ (jointly)} &  &  &  \\
		\Xhline{1pt}
	\end{tabular}
	\begin{tablenotes}
    \item[$\mathsection$] We demonstrate the privacy-utility tradeoff by comparing the standard deviation of DP noise on the aggregation, with smaller values indicating less negative impact on utility. Note that different approaches use different aggregation strategies, where $g_i$ is the local gradient and $m_i$ is the local momentum.
     \item[$\diamond$]  DDP-RP assumes an honest-but-curious server and ensures distributed DP (DDP) with secure aggregation. Clients add partial noise with a smaller standard deviation based on the number of honest clients, $\tau$, resulting in a better privacy-utility tradeoff than local DP (LDP).
     \item[$\dagger$] DP-BREM\textsuperscript{+} matches DP-BREM's DP and robustness guarantees with a different server trust assumption. It achieves central DP without a trusted server, as clients securely generate and add noise using the proposed noise generation and secure aggregation protocols.
   \end{tablenotes}
   \end{threeparttable}
	\label{tab:DP_compare}
	\vspace{-5mm}
\end{table*}

\subsection{Challenges and Baseline}
\label{sec:challenges_and_baseline}

\textbf{Challenges.}
Replacing the average-based aggregator with median-based or complex robust aggregators increases DP sensitivity. Achieving both DP and Byzantine robustness with high utility is challenging because these methods result in significantly larger DP sensitivity than averaging, as illustrated in Example \ref{exp:sensitivity_average_median}.

\begin{example}[Sensitivity Computation: Average vs. Median]
\label{exp:sensitivity_average_median}
Consider a dataset with 5 samples: $\mathcal{D}=\{1,3,5,7,9\}$, and its neighboring dataset $\mathcal{D}^\prime$ is obtained by changing one value in $\mathcal{D}$ with at most 1, such as $\mathcal{D}^\prime=\{1,3,\bm{4},7,9\}$. Then, the sensitivity of average-query is $\max\limits_{\mathcal{D}, \mathcal{D}^\prime}|\mathsf{avg}(\mathcal{D})-\mathsf{avg}(\mathcal{D}^\prime)|=1/5=0.2$. However, the sensitivity of median-query is $\max\limits_{\mathcal{D}, \mathcal{D}^\prime}|\mathsf{median}(\mathcal{D})-\mathsf{median}(\mathcal{D}^\prime)|=1$. Moreover, when increasing the size of the dataset, the sensitivity of the average query will be reduced (and then less noise to be added), while the sensitivity of the median query is the same.
\end{example}

%Therefore, median-based aggregation provides worse utility due to the larger noise to guarantee DP with the same privacy budget. On the other hand, some works design a new aggregator that has good support for both DP and Byzantine robustness, but needs to sacrifice the utility. For example, \cite{zhu2022bridging} replaces the vanilla stochastic gradient descent (SGD) by sign-SGD, where the latter only aggregates the sign (instead of the value) of clients' gradients.  Although it has low sensitivity for DP and offers robustness, the sign approximation leads to a degraded convergence rate. 

\textbf{DP-LFH: baseline via direct combination of LFH and DP-SGD.}
As shown in Section \ref{sec:preliminaries_LFH}, the Byzantine-robust scheme LFH \cite{karimireddy2021learning} utilizes an average-based aggregator, which can be regarded as a non-private robust solution to address the disadvantage of the median-based aggregator. A straightforward method to add DP protection on top of LFH is to combine it with the DP-SGD algorithm. However,  LFH requires each client to compute the local momentum $\bm{m}_{t, i}$ for server aggregation, while DP-SGD aggregates gradients and accounts for the privacy cost via the composition of iterative gradient update. In LFH, since the gradient is computed only on the client-side, a straightforward solution to integrate DP is to use DP-SGD at each client to privatize the local gradient, and then compute the momentum from the privatized gradient (thus there is no additional privacy cost due to post-processing).  Formally,  client $\mathsf{C}_i$ computes

%In the central setting (all data are stored on a trusted server), the server can compute the privatized momentum by first implementing DP-SGD where the gradient of each data batch satisfies DP, and then directly computing momentum from this privatized gradient (thus there is no additional privacy cost due to post-processing). In the FL setting (clients' data are stored locally), since gradients are computed by each client in LFH, the direct combination with DP-SGD is to privatize the client's local gradient $\bm{g}_{t, i}$ in  \eqref{equ:LFH_local_gradident} via
\begin{align}
    \label{equ:baseline_local_gradient}
    \bm{g}_{t,i}=\frac{1}{|\mathcal{D}_{i,t}|}\sum_{\bm{x}\in\mathcal{D}_{i,t}}\mathsf{Clip}_R(\nabla_{\bm{\theta}}\ell(\bm{x}, \bm{\theta}_{t-1}))+\mathcal{N}(0, R^2\sigma^2\mathbf{I}_d),
\end{align}
where $\mathbf{I}_d$ is an identity matrix with size $d\times d$ ($d$ is the model size, i.e., $\bm{\theta}_t\in\mathbb{R}^d$), the record-level clipping $\mathsf{Clip}_R(\cdot)$ restricts the sensitivity when adding/removing one record from the local dataset, and Gaussian noise $\mathcal{N}(0, R^2\sigma^2\mathbf{I}_d)$ introduces DP property on $\bm{g}_{t, i}$. Since DP is immune to post-processing, the remaining steps can follow the original LFH without additional privacy costs. This baseline solution DP-LFH achieves record-level DP against an untrusted server but has limitations, leading to poor privacy-utility tradeoff and robustness.

\textbf{Limitation 1: large aggregated noise.}  
Since each client locally adds DP noise, the overall noise after aggregation is larger than the case of the central setting under the same privacy budget $\epsilon$ since only the server adds noise in the central setting. Therefore, DP-LFH has a poor privacy-utility tradeoff.

\textbf{Limitation 2: large impact on Byzantine robustness.} 
Since the DP noise is added locally to each client's gradient before momentum-based clipping, it leads to a negative impact on Byzantine robustness: the noisy client momentum $\bm{m}_{t, i}$ has larger variance than the noise-free one, which leads to larger bias and variance on the clipping step $\mathsf{Clip}_C(\bm{m}_{t, i} - \bm{m}_{t-1})$. Furthermore, this impact will be enlarged when there are more Byzantine clients, which is explained as follows. Since the parameter $\rho^2$ defined in \eqref{equ:definition_rho_LFH} quantifies the variance of client's gradient, and the DP noise is added to the local gradient in \eqref{equ:baseline_local_gradient}, the parameter $\rho$ of the convergence rate shown in \eqref{equ:convergence_rate_of_LFH} is replaced by $\rho+\sqrt{d}\sigma$ (ignoring constants) for DP-LFH, i.e., the convergence rate of DP-LFH is asymptotic of the order
\begin{align}
    \label{equ:convergence_rate_of_baseline}
    \frac{1}{T}\sum\nolimits_{t=1}^T\mathbb{E}\|\nabla\ell(\bm{\theta}_{t-1})\|^2 \lesssim \sqrt{\frac{(\rho+\sqrt{d}\sigma)^2}{T}\frac{1+|\mathcal{B}|}{n}}
\end{align}
Therefore, either a large $d$ (i.e., large model) or a large $\sigma$ (i.e., small privacy budget $\epsilon$) will enlarge the impact from Byzantine clients due to the order $O(\sqrt{d\sigma^2|\mathcal{B}|})$ of convergence rate. We note that Guerraoui et al.'s work \cite{guerraoui2021differential} also shares a similar insight: they show that DP with local noise and Byzantine robustness are incompatible, especially when the dimension of model parameters $d$ is large.

\textbf{Limitation 3: no privacy amplification from client-level sampling due to momentum.}
According to the recursive representation $\bm{m}_{t, i} =(1-\beta)\bm{g}_{t, i} + \beta\bm{m}_{t-1, i}$, client $\mathsf{C}_i$'s momentum in $t$-th iteration $\bm{m}_{t, i}$  is essentially a weighted summation of all previous privatized client gradients: 
\begin{align}
    \label{equ:momentum_gradient}
    \bm{m}_{t,i}
    =(1-\beta)(\bm{g}_{t,i}+\beta \bm{g}_{t-1, i}+\cdots+\beta^{t-2}\bm{g}_{2,i})+\beta^{t-1} \bm{g}_{1,i}
\end{align}
where $\bm{g}_{1,i}, \bm{g}_{2,i}, \cdots, \bm{g}_{t,i}$ are already privatized via local noise. Assume the server samples a subset of clients for aggregation in each iteration. If client $\mathsf{C}_i$'s momentum $\bm{m}_{t,i}$ is not selected in the $t$-th iteration, the aggregate is independent of $\bm{g}_{t,i}$. However, in a later iteration (i.e., $\tau>t$), if client $\mathsf{C}_i$'s momentum $\bm{m}_{\tau,i}$ is included,  it depends on $\bm{g}_{t,i}$ according to \eqref{equ:momentum_gradient}. Thus, we must account for the privacy cost of $\bm{g}_{t, i}$ in all iterations. Sampling clients offers no privacy amplification, resulting in high privacy costs or low utility.

\section{DP-BREM}
\label{sec:our_solution}
To address DP-LFH's limitations, we propose DP-BREM, a differentially-private LFH variant assuming a trusted server that generates DP noise. DP-BREM maintains robustness of LFH and uses a different privacy accountant (Theorem \ref{thm:privacy_analysis}) than DP-SGD. We also provide convergence analysis (Theorem \ref{thm:convergence_rate}) showing minimal deviation from LFH. We further relax the server trust assumption in DP-BREM\textsuperscript{+} (Section \ref{sec:secure_aggregation}) by using secure multiparty computation for secure aggregation and joint noise generation, achieving the same DP and robustness guarantees without a trusted server.

%To address the limitations of DP-LFH, we start from the assumption of a trusted server that can obtain clients' local models/updates and generate DP noise, and propose an initial solution called DP-BREM (in Section \ref{sec:our_algorithm}). It is a differentially-private version of LFH with carefully designed enhancements, achieving a similar level of robustness as the non-private LFH. Since DP-BREM adds DP noise to the momentum (as versus adding noise to the gradient in DP-SGD), our privacy accountant shown in Section \ref{sec:privacy_analysis} is different from the traditional privacy accountant of DP-SGD. We also provide the convergence analysis in Section \ref{sec:convergence_analysis}, where the provable convergence of LFH is maintained with only a small difference. Based on  DP-BREM, we then relax the server's trust assumption in our final solution DP-BREM\textsuperscript{+} (in Section \ref{sec:secure_aggregation}), by adopting secure multiparty computation techniques including secure aggregation with input validation and joint noise generation, which achieves the same DP  guarantee with the same amount of noise as in DP-BREM, without trusting the server.

\subsection{Algorithm Design}
\label{sec:our_algorithm}

The illustration of our design is shown in Figure \ref{fig:framework}, and the algorithm is shown in Algorithm \ref{alg:DP_BREM}, where all clients need to implement local updates (in Line-3), but only a subset of their momentum vectors are aggregated by the server (in Line-4). The details of client updates and server aggregation are described below.

\textbf{Client Updates.} 
The client $\mathsf{C}_i$ first samples a random batch $\mathcal{D}_{i,t}$ from the local dataset $\mathcal{D}_{i}$ with sampling rate $p_{i}$, clips the per-record gradient $\nabla_{\bm{\theta}}\ell(\bm{x}, \bm{\theta}_{t-1})$ by $R$ and multiplies the sum by a constant factor $\frac{1}{p_{i}|\mathcal{D}_{i}|}$ to get the averaged gradient
\begin{equation}
    \label{equ:local_gradient}
    \bar{\bm{g}}_{t,i}=\frac{1}{p_{i}|\mathcal{D}_{i}|}\sum\nolimits_{\bm{x}\in\mathcal{D}_{i,t}}\mathsf{Clip}_R(\nabla_{\bm{\theta}}\ell(\bm{x}, \bm{\theta}_{t-1}) )
\end{equation}
where $\mathsf{Clip}_R(\cdot)$ is the clipping function defined in \eqref{equ:clipping_function}, but is used here to bound the sensitivity for DP (refer to DP-SGD discussed in Section \ref{sec:preliminaries_DP}). $\mathcal{D}_{i,t}$ represents a random subset obtained via subsampling from client $\mathsf{C}_i$’s dataset. This subsampling is essential to apply privacy amplification, enabling the privacy accountant to derive a tight privacy budget $\epsilon$. Note that the batch size $|\mathcal{D}_{i,t}|$ is random and $\mathbb{E}[|\mathcal{D}_{i,t}|]=p_{i}|\mathcal{D}_{i}|$. Then, the local momentum can be computed by
\begin{align}
    \label{equ:local_momentum}
    \bar{\bm{m}}_{t,i} =
    \begin{cases}
    \bar{\bm{g}}_{t,i}, & \text{if } t=1\\
    (1-\beta)\bar{\bm{g}}_{t,i} + \beta\bar{\bm{m}}_{t-1,i}, & \text{if } t>1 \\
    \end{cases}
\end{align}
where $\beta\in[0,1)$ is the momentum parameter.

\begin{figure}[!t]
    \centering
    \includegraphics[width=3.3in]{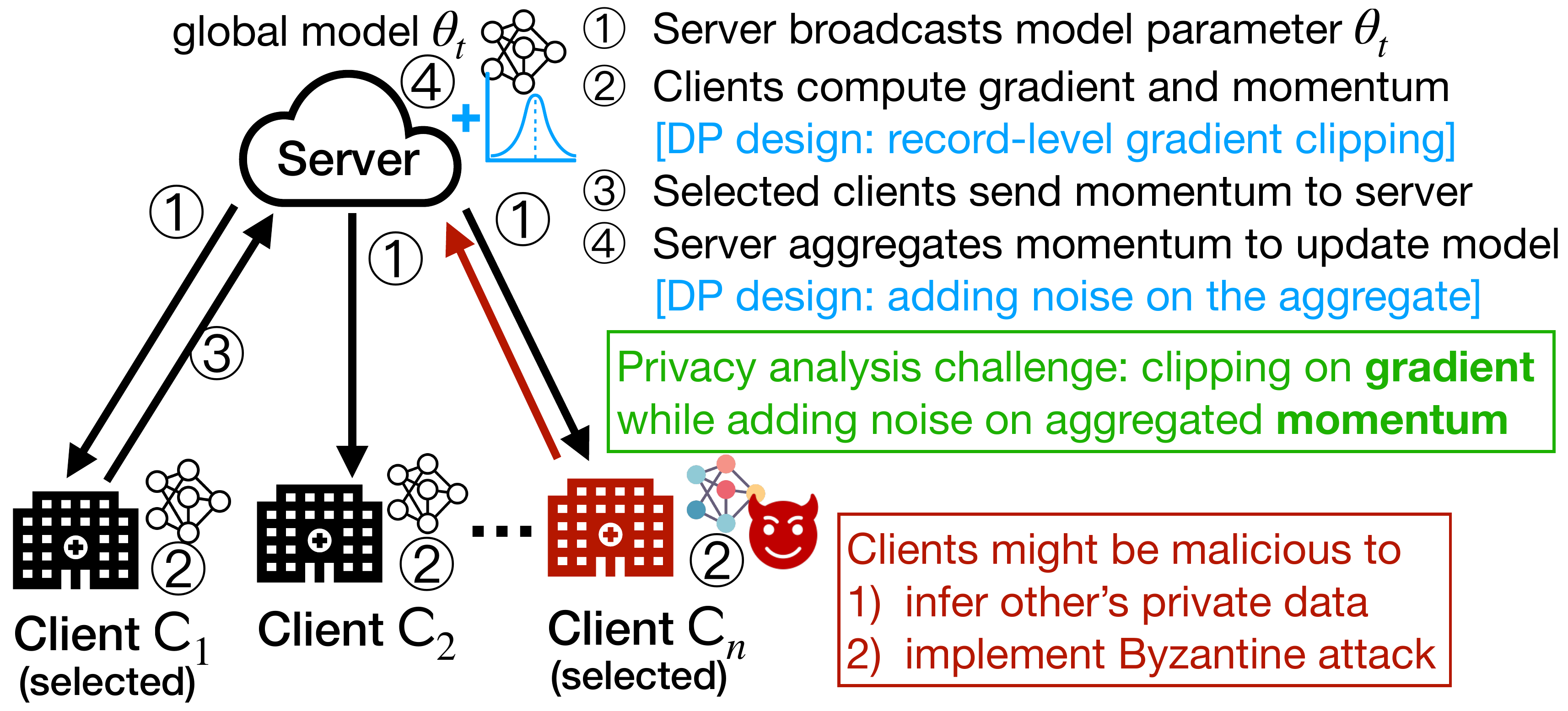}
    \vspace{-4mm}
    \caption{Illustration of our DP-BREM algorithm.}
    \vspace{-3mm}
    \label{fig:framework}
\end{figure}

\textbf{Server Aggregation.} 
The server implements centered clipping with clipping parameter $C>0$ to bound the difference between client momentum $\bar{\bm{m}}_{t,i}$ and the previous noisy global momentum $\tilde{\bm{m}}_{t-1}$ for robustness. Then, it adds Gaussian noise with standard deviation $R\sigma$ (thus the variance is $R^2\sigma^2$) to the sum of clipped terms to get the noisy global momentum $\tilde{\bm{m}}_t$ 
\begin{align}
    \label{equ:server_aggregate}
    \tilde{\bm{m}}_t = \tilde{\bm{m}}_{t-1} + \frac{1}{|\mathcal{I}_t|}\left[\sum\nolimits_{i\in\mathcal{I}_t}\mathsf{Clip}_C(\bar{\bm{m}}_{t,i} - \tilde{\bm{m}}_{t-1}) + \mathcal{N}(0,R^2\sigma^2 \mathbf{I}_d)\right] 
\end{align}
where $\mathbf{I}_d$ is an identity matrix with size $d\times d$, and only the sampled clients in $\mathcal{I}_t$ (which is obtained in Line-2 of Algorithm \ref{alg:DP_BREM} with sampling rate $q$) are aggregated in $t$-th iteration. Note that adding noise $\mathcal{N}(0,R^2\sigma^2 \mathbf{I}_d)$ to the \emph{summation} of clipped client momentum $\sum\nolimits_{i\in\mathcal{I}_t}\mathsf{Clip}_C(\bar{\bm{m}}_{t,i} - \tilde{\bm{m}}_{t-1})$ is equivalent to adding noise $\frac{1}{|\mathcal{I}_t|}\mathcal{N}(0,R^2\sigma^2 \mathbf{I}_d)$ to the \emph{average} result $\frac{1}{|\mathcal{I}_t|}\sum\nolimits_{i\in\mathcal{I}_t}\mathsf{Clip}_C(\bar{\bm{m}}_{t,i} - \tilde{\bm{m}}_{t-1})$. Then, the server updates the global model $\bm{\theta}_t$ with learning rate $\eta_t$
\begin{align}
    \label{equ:model_update}
    \bm{\theta}_t = \bm{\theta}_{t-1}-\eta_t \tilde{\bm{m}}_t
\end{align}

\begin{algorithm}[!t]
	\caption{DP-BREM}
        \small
	\begin{algorithmic}[1]
		\REQUIRE Initialization $\bm{\theta}_0\in\mathbb{R}^d$,  clipping bounds $R$ and $C$, learning rate $\eta_t$ of the global model, total number of iterations $T$, client-level sampling rate $q$, record-level sampling rate $p_i$.
		\FOR{$t=1,\cdots,T$}
		\STATE The server broadcasts the previous model $\bm{\theta}_{t-1}$ to all clients $\{\mathsf{C}_i\}_{i=1}^n$ and selects a subset of client index $\mathcal{I}_t\subseteq\{1,\cdots,n\}$, where each client is selected with probability $q$.
		\STATE Each client $\mathsf{C}_i$ for $i\in\{1,\cdots,n\}$ implements the local updates via \eqref{equ:local_gradient} and \eqref{equ:local_momentum}, while only selected clients need to send the local momentum $\bm{m}_{t,i}$ (for $i\in\mathcal{I}_t$) to the server.
		\STATE The server aggregates received clients' momentum (only for $i\in\mathcal{I}_t$)  with centered clipping and DP noise via \eqref{equ:server_aggregate}, then updates the global model $\bm{\theta}_t$ via \eqref{equ:model_update}.
		\ENDFOR
		\ENSURE The final model parameter $\bm{\theta}_T$.
	\end{algorithmic}
	\label{alg:DP_BREM}
\end{algorithm}

\textbf{Remark: clipping bounds and sampling rates.}
In our algorithm, we use two clipping bounds and two sampling rates. For clipping bounds, each client uses record-level bound $R$ to bound the \emph{per-record} gradient in \eqref{equ:local_gradient} for a finite sensitivity in record-level DP; while the server uses client-level bound $C$ to bound the difference between \emph{client momentum} $\bar{\bm{m}}_{t, i}$ and the previous noisy global momentum $\tilde{\bm{m}}_{t-1}$ in \eqref{equ:server_aggregate}, which achieves Byzantine robustness as in LFH. For sampling rates, the client $\mathsf{C}_i$ samples a batch of records $\mathcal{D}_{i,t}$ from the local dataset $\mathcal{D}_{i}$ with sampling rate $p_i$, which provides privacy amplification for DP from \emph{record-level} sampling; while the server samples a subset of clients with sampling rate $q$ (where the sampled clients set is denoted by $\mathcal{I}_t$), which provides privacy amplification for DP from \emph{client-level} sampling.

\textbf{Remark: comparison with  non-private LFH.} Comparing with the original non-private Byzantine-robust method LFH \cite{karimireddy2021learning} (see Section \ref{sec:preliminaries_LFH}), our differentially-private version has three differences. First, comparing with \eqref{equ:LFH_local_gradident}, the client gradient in \eqref{equ:local_gradient} is computed by averaging the \emph{clipped} per-record gradient (with clipping bound $R$), which bounds the sensitivity of final aggregation when adding/removing one record from the local dataset.  Second, comparing with \eqref{equ:LFH_server_aggregate}, the server adds Gaussian noise when computing the aggregated global momentum $\tilde{\bm{m}}_t$ in \eqref{equ:server_aggregate} to guarantee DP.  Third, instead of aggregating all clients' momentum, our method also considers aggregating a subset of them, reflected by the index set $\mathcal{I}_t$ in \eqref{equ:server_aggregate}. It provides additional privacy amplification from \emph{client-level} sampling with sampling rate $q$. Note that the original privacy amplification is provided by \emph{record-level} sampling. 

\subsection{Privacy Analysis}
\label{sec:privacy_analysis}
Before presenting the final privacy analysis of DP-BREM, we first show how we compute the sensitivity for the summation of clipped client momentum in \eqref{equ:server_aggregate} for privacy analysis of one iteration, shown in Lemma \ref{lem:aggregation_sensitivity}. We note that clients may have different sizes of local datasets $\mathcal{D}_i$ and can use different record-level sampling rates $p_i$, thus the record-level sensitivity (denoted by $S_i$) for different clients can be different.

\begin{lemma}[DP Sensitivity]
    \label{lem:aggregation_sensitivity}
    We use $\|\cdot\|$ to denote L2-norm $\|\cdot\|_2$. In the $t$-th round, denote the query function $Q_t(\mathcal{D})\coloneqq\sum\nolimits_{j\in\mathcal{I}_t}\mathsf{Clip}_C(\bm{m}_{t,j} - \tilde{\bm{m}}_{t-1})$, where $\tilde{\bm{m}}_{t-1}$ is public and $\mathcal{D}=\{\mathcal{D}_j\}_{j\in\mathcal{I}_t}$. Consider the neighboring dataset  $\mathcal{D}^\prime=\{\mathcal{D}_j\}_{j\neq i, j\in\mathcal{I}_t}\cup \mathcal{D}_i^\prime$ that differs in one record from client $\mathsf{C}_i$'s local data ($i\in\mathcal{I}_t$), i.e., $|\mathcal{D}_i-\mathcal{D}_i^\prime|=1$, then the sensitivity with respect to client $\mathsf{C}_i$ is
    \begin{align}
        \label{equ:aggregation_sensitivity}
        S_i  \coloneqq \max_{\mathcal{D}, D^\prime}\|Q_t(\mathcal{D})-Q_t(\mathcal{D^\prime})\|
        =\min\left\{2C,\frac{R}{p_i|\mathcal{D}_{i}|}\right\}
    \end{align}
\end{lemma}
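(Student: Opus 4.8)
The plan is to bound the change in $Q_t$ when one record of client $\mathsf{C}_i$ is modified, using two competing arguments and taking the minimum. Fix all other clients' data; then $Q_t(\mathcal{D}) - Q_t(\mathcal{D}') = \mathsf{Clip}_C(\bm{m}_{t,i} - \tilde{\bm{m}}_{t-1}) - \mathsf{Clip}_C(\bm{m}'_{t,i} - \tilde{\bm{m}}_{t-1})$, since every term $j \neq i$ is identical in the two sums and $\tilde{\bm{m}}_{t-1}$ is public (treated as a constant). So the whole problem reduces to bounding the L2-distance between two clipped vectors.

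First I would establish the bound $2C$: the clipping function $\mathsf{Clip}_C$ always outputs a vector of norm at most $C$, so the difference of any two of its outputs has norm at most $2C$ by the triangle inequality. This holds regardless of the inputs, hence $\|Q_t(\mathcal{D}) - Q_t(\mathcal{D}')\| \leqslant 2C$ always.

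Second I would establish the bound $R/(p_i|\mathcal{D}_i|)$ by exploiting that $\mathsf{Clip}_C$ is $1$-Lipschitz (non-expansive) — a standard property of projection onto the L2-ball — so $\|\mathsf{Clip}_C(\bm{m}_{t,i} - \tilde{\bm{m}}_{t-1}) - \mathsf{Clip}_C(\bm{m}'_{t,i} - \tilde{\bm{m}}_{t-1})\| \leqslant \|\bm{m}_{t,i} - \bm{m}'_{t,i}\|$. Then I would trace how a single record change propagates into $\bm{m}_{t,i}$. Since $\bm{m}_{t,i}$ is, by \eqref{equ:momentum_gradient}, an affine combination (coefficients summing to one) of the per-iteration averaged gradients $\bar{\bm{g}}_{\tau,i}$, and only the current batch's gradient is affected here (the momentum from previous rounds $\bar{\bm{m}}_{t-1,i}$ was computed on the same data in both worlds — one must be a little careful here, but in the one-round sensitivity analysis we treat the past as fixed/public, consistent with how the composition accountant in Theorem~\ref{thm:privacy_analysis} later combines rounds), it suffices to bound $\|\bar{\bm{g}}_{t,i} - \bar{\bm{g}}'_{t,i}\|$. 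From \eqref{equ:local_gradient}, changing one record changes the sum of clipped per-record gradients by a vector of norm at most $R$ (add/remove a single clipped gradient, each bounded by $R$; or in the "change" model, the difference of two clipped gradients times the indicator the record is sampled), and this sum is scaled by $1/(p_i|\mathcal{D}_i|)$, giving $\|\bar{\bm{g}}_{t,i} - \bar{\bm{g}}'_{t,i}\| \leqslant R/(p_i|\mathcal{D}_i|)$. Combining with the $1$-Lipschitz step yields the second bound. Taking the minimum of the two bounds gives \eqref{equ:aggregation_sensitivity}, and I would briefly note both bounds are tight (the $2C$ bound when the two momenta straddle the ball far apart, the other when the perturbation is small), so the minimum is the exact sensitivity.

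The main obstacle I anticipate is the subtlety around the recursive/historical dependence of $\bm{m}_{t,i}$: a change to a record can, through \eqref{equ:momentum_gradient}, influence $\bm{m}_{t,i}$ through $\bar{\bm{m}}_{t-1,i}$ as well, not just $\bar{\bm{g}}_{t,i}$. The resolution is that the sensitivity lemma is stated for a single round's query $Q_t$ with $\tilde{\bm{m}}_{t-1}$ and the past treated as public/fixed, so only the fresh gradient $\bar{\bm{g}}_{t,i}$ matters for this lemma; the cumulative cost across rounds is handled separately by the composition in the privacy accountant (Theorem~\ref{thm:privacy_analysis}). Getting this framing stated cleanly — rather than the bookkeeping of the two norm inequalities, which is routine — is the part that needs care.
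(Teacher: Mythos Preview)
Your reduction to the difference of two clipped vectors and the two competing bounds ($2C$ via triangle inequality, $\|\bm{m}_{t,i}-\bm{m}'_{t,i}\|$ via non-expansiveness of projection) is exactly what the paper does, and your appeal to the $1$-Lipschitz property of $\mathsf{Clip}_C$ is cleaner than the paper's case-by-case Lemma~\ref{lem:clipping_max_bound}. The gap is in how you handle the history. Your resolution --- ``treat the past as fixed/public, composition handles the rest'' --- conflates two different objects: $\tilde{\bm{m}}_{t-1}$ (the noisy \emph{global} aggregate) is indeed public, but $\bar{\bm{m}}_{t-1,i}$ (client $\mathsf{C}_i$'s \emph{local} momentum) is never released and depends on $\mathcal{D}_i$ through every past gradient $\bar{\bm{g}}_{\tau,i}$. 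When the neighboring dataset removes one record from $\mathcal{D}_i$, that record may have been sampled in earlier rounds too, so $\bar{\bm{m}}_{t-1,i}\neq\bar{\bm{m}}'_{t-1,i}$ in general. Composition lets the round-$t$ mechanism depend on previously \emph{released} outputs, not on private internal state; the sensitivity of $Q_t$ must still be computed against the full dependence of $\bar{\bm{m}}_{t,i}$ on $\mathcal{D}_i$.

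The fix is precisely the affine-combination observation you already made but then discarded: each $\bar{\bm{g}}_{\tau,i}$ (for \emph{every} $\tau\leqslant t$) changes by at most $R/(p_i|\mathcal{D}_i|)$ under the record change, and since the coefficients $(1-\beta),(1-\beta)\beta,\ldots,(1-\beta)\beta^{t-2},\beta^{t-1}$ in \eqref{equ:momentum_gradient} are nonnegative and sum to $1$, the triangle inequality gives $\|\bm{m}_{t,i}-\bm{m}'_{t,i}\|\leqslant R/(p_i|\mathcal{D}_i|)$ directly. This is exactly the paper's argument in Appendix~\ref{apx:proof_of_lem_aggregation_sensitivity}. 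So the bound you state is correct, but the reason it holds is the convex-combination structure across \emph{all} past rounds, not a ``freeze the past'' convention.
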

\begin{proof}
(Sketch) According to \eqref{equ:local_gradient}, the sensitivity of $\bar{\bm{g}}_{t,i}$ is $\frac{R}{p_i|\mathcal{D}_{i}|}$ because each clipped term $\mathsf{Clip}_R(\cdot)$ has bounded L2-norm, i.e., $\|\mathsf{Clip}_R(\cdot)\|\leqslant R$. Then, due to the recursive representation of local momentum in \eqref{equ:local_momentum}, the sensitivity of $\bm{m}_{t,i}$ is $\frac{R}{p_i|\mathcal{D}_{i}|}$. Finally, the client-level clipping $\mathsf{Clip}_C(\cdot)$ introduces another upper bound for the sensitivity. Refer to Appendix \ref{apx:proof_of_lem_aggregation_sensitivity} for the detailed proof.
\end{proof}

\textbf{Remark: comparison with the privacy accountant of DP-SGD momentum.} 
As discussed in Section \ref{sec:challenges_and_baseline}, the privacy accountant of DP-SGD with momentum (i.e., account for privacy cost of gradient, then do post-processing for momentum) requires clients to add noise in the \emph{local gradients}, which leads to poor utility especially when Byzantine attacks exist. In Lemma \ref{lem:aggregation_sensitivity}, we account for the privacy cost of \emph{aggregated momentum}, where the sensitivity is carefully computed from the bounded record-level gradient. Therefore, our scheme solves the three limitations shown in Section \ref{sec:challenges_and_baseline}, which is explained as follows. First, only the server adds noise (which is the same as the central setting), thus the privacy-utility tradeoff is not impacted. Second, the noise is added after the centered clipping $\mathsf{Clip}_C(\bar{\bm{m}}_{t,i} - \tilde{\bm{m}}_{t-1})$, thus it only introduces unbiased error. We also show that (in Section \ref{sec:convergence_analysis}) the impact from the added noise is separate from the impact from Byzantine attacks, as versus the impact from the local noise is enlarged with Byzantine attacks in DP-LFH (see Section \ref{sec:challenges_and_baseline}). Third, since privacy is accounted on momentum, and only the aggregated momentum leaks privacy, our solution enjoys privacy amplification from client-level sampling. 

The final privacy analysis of DP-BREM is shown in Theorem \ref{thm:privacy_analysis}. It presents how to compute the privacy budget $\epsilon$ and privacy parameter $\delta$ when the parameters (such as $T$, $\sigma$, $q$, etc.) of the algorithm are given.  We use an advanced privacy accountant tool called Gaussian DP (GDP) \cite{dong2019gaussian}, then convert it to  $(\epsilon,\delta)$-DP. Note that in our privacy analysis, clients can use different record-level sampling rates $p_i$, thus different sensitivity $S_i$ shown in \eqref{equ:aggregation_sensitivity}. Therefore, the final privacy budget (denoted by $\epsilon_i$) of DP-BREM may be different for different clients, which provides personalized privacy if these parameters are different for each client. 

\begin{theorem}[Privacy Analysis]
    \label{thm:privacy_analysis}
    DP-BREM (in Algorithm \ref{alg:DP_BREM}) satisfies record-level $(\epsilon_i, \delta)$-DP for an honest client $\mathsf{C}_i$ with $\epsilon_i$ and $\delta$ satisfying
    \begin{align}
    \label{equ:delta_epsilon}
	\delta=\Phi\left(-\frac{\epsilon_i}{\mu_i}+\frac{\mu_i}{2}\right)-e^{\epsilon_i}\cdot\Phi\left(-\frac{\epsilon_i}{\mu_i}-\frac{\mu_i}{2}\right),
    \end{align}
    where $\Phi(\cdot)$ denotes the cumulative distribution function (CDF) of standard normal distribution, and $\mu_i$ is defined by 
\begin{align}
    \label{equ:mu_i}
    \mu_i=
    qp_i\sqrt{T(e^{1/(2\sigma_i^2)}-1)},\quad \text{with } \sigma_i = \sigma\cdot\max\left\{\frac{R}{2C},p_i|\mathcal{D}_{i}|\right\}
\end{align}
\end{theorem}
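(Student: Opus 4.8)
The plan is to track the privacy cost of the entire transcript $(\tilde{\bm{m}}_1,\dots,\tilde{\bm{m}}_T)$ (equivalently $(\bm{\theta}_1,\dots,\bm{\theta}_T)$) by reducing DP-BREM to a $T$-fold adaptive composition of a subsampled Gaussian mechanism, and then invoking the Gaussian-DP (GDP) machinery of \cite{dong2019gaussian}. First, I would condition on the released transcript up to round $t-1$; then $\tilde{\bm{m}}_{t-1}$ is fixed and the only randomized quantity released at round $t$ is $\tilde{\bm{m}}_t=\tilde{\bm{m}}_{t-1}+\frac{1}{|\mathcal{I}_t|}\big[Q_t(\mathcal{D})+\mathcal{N}(0,R^2\sigma^2\mathbf{I}_d)\big]$ with $Q_t(\mathcal{D})=\sum_{j\in\mathcal{I}_t}\mathsf{Clip}_C(\bar{\bm{m}}_{t,j}-\tilde{\bm{m}}_{t-1})$. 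By Lemma \ref{lem:aggregation_sensitivity}, the $\ell_2$-sensitivity of $Q_t$ with respect to one record of an honest client $\mathsf{C}_i$ is $S_i=\min\{2C,R/(p_i|\mathcal{D}_i|)\}$, so round $t$ is an instance of the Gaussian mechanism whose noise-to-sensitivity ratio equals $R\sigma/S_i=\sigma_i$ with $\sigma_i=\sigma\max\{R/(2C),p_i|\mathcal{D}_i|\}$; hence a single round, ignoring subsampling, is $\tfrac{1}{\sigma_i}$-GDP. This is exactly where the trusted-server design pays off: noise enters once per round, applied to the aggregate, so the per-round object is a plain Gaussian mechanism rather than a sum of locally-perturbed terms.

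Next I would incorporate the two independent layers of subsampling. Client-level sampling keeps $\mathsf{C}_i$'s term out of $Q_t$ with probability $1-q$, in which case that round's output is independent of $\mathsf{C}_i$'s data; within a round, the record-level Poisson sampling (rate $p_i$) controls whether the changed record enters the fresh batch $\mathcal{D}_{i,t}$. The subtlety here, and the step I expect to be the main obstacle, is that the momentum recursion \eqref{equ:local_momentum} makes $\bar{\bm{m}}_{t,i}$ (hence $Q_t$) depend on \emph{all} past batches $\mathcal{D}_{i,1},\dots,\mathcal{D}_{i,t}$, so the per-round outputs are correlated given the record and a naive per-round amplification argument does not directly apply. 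To handle this I would attribute the privacy cost to each ``record-in-batch'' event: if the changed record enters $\mathcal{D}_{i,s}$, its influence on $\bar{\bm{m}}_{t,i}$ for $t\ge s$ is damped geometrically by $(1-\beta)\beta^{t-s}$, and, crucially, Lemma \ref{lem:aggregation_sensitivity} already certifies that the \emph{total} influence across all rounds remains bounded by $S_i$ (the geometric weights sum to one). Combining this bounded-total-influence fact with the independence of the sampling events across rounds lets one dominate the composed mechanism, in the GDP trade-off-function order, by the $(qp_i)$-subsampled Gaussian with noise multiplier $\sigma_i$.

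With that reduction in place, I would apply the central-limit-theorem composition result for subsampled Gaussian mechanisms from \cite{dong2019gaussian}: the $T$-fold composition of a $(qp_i)$-subsampled Gaussian with noise multiplier $\sigma_i$ is asymptotically $\mu_i$-GDP with $\mu_i=qp_i\sqrt{T(e^{1/(2\sigma_i^2)}-1)}$, which is \eqref{equ:mu_i}. Finally I would convert $\mu_i$-GDP into $(\epsilon_i,\delta)$-DP through the tight GDP-to-DP duality of \cite{dong2019gaussian}: $\mu$-GDP is equivalent to the trade-off function of testing $\mathcal{N}(0,1)$ versus $\mathcal{N}(\mu,1)$, and for every $\epsilon_i\ge 0$ this yields $(\epsilon_i,\delta)$-DP with $\delta=\Phi(-\epsilon_i/\mu_i+\mu_i/2)-e^{\epsilon_i}\Phi(-\epsilon_i/\mu_i-\mu_i/2)$, i.e., \eqref{equ:delta_epsilon}. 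Since $S_i$, hence $\sigma_i$ and $\mu_i$, depend on the client-specific $p_i$ and $|\mathcal{D}_i|$, the resulting budget $\epsilon_i$ is automatically per-client, giving the personalized-privacy statement. The parts I expect to require the most care are (i) the momentum/subsampling interaction described above, and (ii) verifying that the asymptotic GDP central limit theorem is being applied in the regime the algorithm actually operates in (small $qp_i$, moderately large $T$), as is standard for DP-SGD-style accountants.
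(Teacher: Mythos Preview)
Your proposal follows essentially the same approach as the paper: compute the per-round noise multiplier $\sigma_i=R\sigma/S_i$ from Lemma~\ref{lem:aggregation_sensitivity}, treat each round as a $(qp_i)$-subsampled Gaussian mechanism, apply the GDP central limit theorem of \cite{dong2019gaussian} to obtain $\mu_i$-GDP, and then convert to $(\epsilon_i,\delta)$-DP via the GDP-to-DP duality. The paper's proof is much terser than yours and does not discuss the momentum/record-subsampling interaction you flag as the main obstacle---it simply asserts that the combined sampling rate is $qp_i$ and invokes the DP-SGD accountant (Lemma~\ref{lem:GDP_privacy_account}) directly.
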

\begin{proof}
This result is obtained by the composition of multiple iterations and the privacy amplification from sampling. See Appendix \ref{apx:proof_of_thm_privacy_analysis} for the detailed proof.
\end{proof}

\subsection{Convergence Analysis}
\label{sec:convergence_analysis}

Before presenting the final convergence analysis of our solution, we first show the aggregation error for one iteration in Theorem \ref{thm:aggregation_error}.

\begin{theorem}[Aggregation Error]
\label{thm:aggregation_error}

Denote $\bm{m}_t^*\coloneqq\frac{1}{|\mathcal{H}|}\sum\nolimits_{i\in\mathcal{H}} \bm{m}_{t,i}$ as the ground truth aggregated raw momentum, where $\bm{m}_{t,i}$ is the client momentum computed from gradient \emph{without} record-level clipping. Assume the local momentum of all honest clients $\{\bm{m}_{t,i}\}_{i\in\mathcal{H}}$ are i.i.d. with expectation $\bm{\mu}\coloneqq\mathbb{E}[\bm{m}_{t,i}]$, and the variance is bounded (in terms of L2-norm)
\begin{align}
    \label{equ:definition_rho}
    \mathbb{E}\|\bm{m}_{t,i}-\bm{\mu}\|^2\leqslant\rho^2
\end{align}
After some parameter tuning (the detailed tuning is shown under \eqref{equ:E_m_t_mu} in Appendix \ref{apx:proof_of_thm_aggregation_error}) of the clipping bounds:
\begin{align}
    \label{equ:tuning_of_R_and_C}
    R\propto O\left(\rho\sqrt{n/(|\mathcal{B}|+\sqrt{d}\sigma/q)}\right), \quad C\propto O(R)
\end{align}
we have the following aggregation error due to clipping, DP noise, and Byzantine clients:
\begin{align}
\label{equ:E_m_t_m_t_informal}
\mathbb{E}\|\tilde{\bm{m}}_{t}-\bm{m}_t^*\|^2 \leqslant O\left(\frac{\rho^2(|\mathcal{B}|+\sqrt{d}\sigma/q)}{n}\right)
\end{align}    
where $|\mathcal{B}|$ is the number of Byzantine clients, $d$ is the dimension of model parameter $\bm{\theta}_t$, $\sigma$ is the noise multiplier (for DP) shown in \eqref{equ:server_aggregate}, $q$ is the client-level sampling rate shown in Line-2 of Algorithm \ref{alg:DP_BREM}, and $\rho$ is defined in \eqref{equ:definition_rho}. The formal version of \eqref{equ:E_m_t_m_t_informal} is shown in \eqref{equ:E_m_t_m_t} of  Appendix \ref{apx:proof_of_thm_aggregation_error}.
\end{theorem}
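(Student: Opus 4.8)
\textbf{Proof proposal for Theorem~\ref{thm:aggregation_error} (Aggregation Error).}
The plan is to decompose the total error $\|\tilde{\bm{m}}_t - \bm{m}_t^*\|$ into three conceptually separate pieces and bound each one, then combine them and choose $R,C$ to balance the trade-offs. Writing $\bm{m}_t^{\text{clip}} \coloneqq \frac{1}{|\mathcal{H}|}\sum_{i\in\mathcal{H}}\bar{\bm{m}}_{t,i}$ for the average of the \emph{record-level clipped} honest momenta, I would first triangle-split as
\begin{align*}
\mathbb{E}\|\tilde{\bm{m}}_t - \bm{m}_t^*\|^2 \lesssim \underbrace{\mathbb{E}\|\tilde{\bm{m}}_t - \bm{m}_t^{\text{clip}}\|^2}_{\text{(A) centered clipping + DP noise + Byzantine}} + \underbrace{\mathbb{E}\|\bm{m}_t^{\text{clip}} - \bm{m}_t^*\|^2}_{\text{(B) record-level clipping bias}}.
\end{align*}
Term~(B) is the price of the $\mathsf{Clip}_R$ operation that is absent in vanilla LFH; since the parameter tuning \eqref{equ:tuning_of_R_and_C} takes $R \propto \rho\sqrt{n/(|\mathcal{B}|+\sqrt{d}\sigma/q)}$, which is large relative to the typical gradient norm controlled by $\rho$, I expect a standard argument (e.g.\ a Markov/second-moment bound on the event that the per-record gradient exceeds $R$) to show (B) is dominated by the target bound. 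Term~(A) is where the LFH centered-clipping analysis of Karimireddy et al.\ \cite{karimireddy2021learning} gets adapted: the new ingredient is the additive Gaussian noise $\frac{1}{|\mathcal{I}_t|}\mathcal{N}(0,R^2\sigma^2\mathbf{I}_d)$, whose expected squared norm is $\frac{dR^2\sigma^2}{|\mathcal{I}_t|^2}$, and the fact that we aggregate only the sampled subset $\mathcal{I}_t$ of size $\approx qn$ rather than all honest clients.

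For term~(A) I would follow the LFH template: condition on $\tilde{\bm{m}}_{t-1}$, use that $\mathsf{Clip}_C(\bar{\bm{m}}_{t,i}-\tilde{\bm{m}}_{t-1})$ has norm at most $C$ so Byzantine contributions are bounded by $\frac{|\mathcal{B}|}{|\mathcal{I}_t|}C$ in norm (their worst-case squared contribution $\lesssim (|\mathcal{B}|/|\mathcal{I}_t|)^2 C^2$), and for the honest clients split the clipping error into a bias term (from clients whose momentum deviates from $\tilde{\bm{m}}_{t-1}$ by more than $C$) and a variance term. The bias term is controlled by $\rho^2/C^2$ times $C^2$-type factors via the same Cauchy–Schwarz + tail argument as in LFH, giving something like $O(\rho^2 |\mathcal{B}|/n)$ after the $C \propto R \propto \rho\sqrt{n/(\cdots)}$ substitution; the honest variance term contributes $O(\rho^2/|\mathcal{H}|)$; the Gaussian term contributes $\frac{dR^2\sigma^2}{(qn)^2} \propto \frac{d\sigma^2 \rho^2 n /(|\mathcal{B}|+\sqrt{d}\sigma/q)}{q^2 n^2} = \frac{d\sigma^2 \rho^2}{q^2 n (|\mathcal{B}|+\sqrt{d}\sigma/q)}$, and one checks that with the chosen $R$ this is $O(\rho^2 (\sqrt{d}\sigma/q)/n)$ because $\frac{d\sigma^2}{q^2(|\mathcal{B}|+\sqrt{d}\sigma/q)} \le \frac{d\sigma^2}{q^2 \cdot \sqrt{d}\sigma/q} = \sqrt{d}\sigma/q$. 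Summing the three pieces yields the claimed $O\!\left(\rho^2(|\mathcal{B}|+\sqrt{d}\sigma/q)/n\right)$.

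The subtle point — and the step I expect to be the main obstacle — is handling the \emph{client-level subsampling} $\mathcal{I}_t$ cleanly together with the centered-clipping bias analysis. In plain LFH the average is over the fixed honest set $\mathcal{H}$, so the i.i.d.\ assumption \eqref{equ:definition_rho} applies directly; here the honest clients actually aggregated form a random subset $\mathcal{I}_t \cap \mathcal{H}$ of random size, and the fraction of Byzantine clients within $\mathcal{I}_t$ fluctuates around $|\mathcal{B}|/n$. I would deal with this by taking expectation over the sampling: conditioning on $|\mathcal{I}_t|$ and on which clients are honest, the honest part is still an i.i.d.\ average so its variance is $\rho^2/|\mathcal{I}_t\cap\mathcal{H}|$, and then using concentration of the Binomial (or simply $\mathbb{E}[1/|\mathcal{I}_t|] = O(1/(qn))$ for $qn$ not too small, which the cross-silo regime supports) to replace random sizes by their expectations up to constants. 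One must also verify that the centered-clipping recursion still contracts — i.e.\ that $\tilde{\bm{m}}_{t-1}$ stays close enough to $\bm{\mu}$ that the ``most honest clients are within $C$ of it'' event holds with high probability — which in LFH is an inductive argument over $t$; I would carry the same induction through, absorbing the extra Gaussian-noise variance into the inductive hypothesis. The remaining calculations (tail bounds, the explicit constant in the parameter tuning, and verifying \eqref{equ:tuning_of_R_and_C} indeed minimizes the sum) are routine and deferred to Appendix~\ref{apx:proof_of_thm_aggregation_error}.
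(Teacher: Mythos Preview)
Your overall strategy is sound and matches the paper's template: decompose, bound honest/Byzantine/noise contributions separately, then optimize $R,C$. Two differences are worth noting. First, the paper splits via the \emph{expectation} $\bm{\mu}=\mathbb{E}[\bm{m}_{t,i}]$ rather than via your clipped honest average $\bm{m}_t^{\text{clip}}$: it bounds $\mathbb{E}\|\tilde{\bm{m}}_t-\bm{\mu}\|^2$ (carrying all of clipping bias, Byzantine, noise, and subsampling) and then $\mathbb{E}\|\bm{\mu}-\bm{m}_t^*\|^2\le\rho^2/|\mathcal{H}|$ separately. This is slightly cleaner because the second piece is a pure i.i.d.\ variance, whereas your term~(B) still mixes record-level clipping bias with sampling variance. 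Second---and this directly addresses what you flag as ``the main obstacle''---the paper does \emph{not} run an inductive contraction argument over $t$ to keep $\tilde{\bm{m}}_{t-1}$ near $\bm{\mu}$. Instead it simply \emph{assumes} $\|\tilde{\bm{m}}_{t-1}-\bm{\mu}\|\le\tau$ and $\|\tilde{\bm{m}}_{t-1}-\bar{\bm{\mu}}\|\le\phi$ with $\phi,\tau=O(\rho)$, and likewise handles the random set sizes by the approximation $|\mathcal{H}_t|\approx q|\mathcal{H}|$, $|\mathcal{B}_t|\approx q|\mathcal{B}|$ rather than the Binomial-concentration argument you sketch. So your proposal is actually more careful on these points than the paper; if you follow the paper you can drop the induction and the $\mathbb{E}[1/|\mathcal{I}_t|]$ analysis, at the cost of stating those extra assumptions explicitly. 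The weighted Cauchy--Schwarz / optimal-$\gamma$ device (Lemma~\ref{lem:useful_inequality} and Lemma~\ref{lem:KKT_optimal_solution}) is what the paper uses to combine the three terms tightly, which your ``$\lesssim$'' triangle split would also achieve up to constants.
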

\begin{proof}
(Sketch) Directly bounding $\mathbb{E}\|\tilde{\bm{m}}_{t}-\bm{m}_t^*\|^2$ is not easy, thus we utilize the upper bounds of $\mathbb{E}\|\tilde{\bm{m}}_{t}-\bm{\mu}\|^2$ and $\mathbb{E}\|\bm{\mu}-\bm{m}_t^*\|^2$ to get the final result, where $\bm{\mu}\coloneqq\mathbb{E}[\bm{m}_{t, i}]$ is the expected local momentum (we assume clients' local momentum are i.i.d.). When upper bounding $\mathbb{E}\|\tilde{\bm{m}}_{t}-\bm{\mu}\|^2$, we decompose errors into three types: honest clients' error (from clipping randomness and bias), Byzantine clients' error (from perturbation), and DP noise error. Optimizing parameters $C$ and $R$ minimizes the total error. See Appendix \ref{apx:proof_of_thm_aggregation_error} for the detailed proof.
\end{proof}

\textbf{Interpretation of Theorem \ref{thm:aggregation_error}.} The value of $\mathbb{E}\|\tilde{\bm{m}}_{t}-\bm{m}_t^*\|^2$ quantifies the aggregation error, i.e., how the aggregated privatized momentum $\tilde{\bm{m}}_{t}$ (with clipping, DP noise, and Byzantine clients' impact) differs from the "pure" momentum aggregation $\bm{m}_t^*$, where only honest clients participate and without clipping and DP noise. According to \eqref{equ:E_m_t_m_t_informal}, the aggregation error is proportional to $\rho^2$ and $\frac{|\mathcal{B}|}{n}+\frac{\sqrt{d}\sigma}{nq}$, where $\rho^2$ quantifies the variance of honest clients' local momentum, $\frac{|\mathcal{B}|}{n}$ is the fraction of Byzantine clients, and  $\frac{\sigma}{nq}=O(1/\epsilon)$ for $\epsilon$-DP. In other words, the aggregation error will be enlarged when: honest clients' variance is large, or the Byzantine attacker corrupts more clients, or the training model is complex (i.e., the model dimension $d$ is large), or we need stronger privacy (i.e., a smaller $\epsilon$), or the number of clients n is small. Furthermore, due to the format of $\frac{|\mathcal{B}|}{n}+\frac{\sqrt{d}\sigma}{nq}$, the impact from DP noise is independent of the increase of Byzantine clients $|\mathcal{B}|$ (versus Limitation 2 of DP-LFH in Section \ref{sec:challenges_and_baseline}). On the other hand, according to the parameter tuning in \eqref{equ:tuning_of_R_and_C},  we could theoretically set a smaller record-level clipping bound $R$ when $\sigma$, $d$, and $|\mathcal{B}|$ are large, or $\rho$ and $n$ are small. The tuning of client-level clipping bound $C$ should be adjusted according to the value of $R$. Recall that $R$ is for DP, while $C$ is for robustness.

By following the convergence analysis in \cite{karimireddy2021learning} and using the result in \eqref{equ:E_m_t_m_t_informal}, we have the convergence rate shown below.

\begin{theorem}[Convergence Rate of DP-BREM]
\label{thm:convergence_rate}
The convergence rate of DP-BREM in Algorithm \ref{alg:DP_BREM} is asymptotically (ignoring constants and higher order terms) of the order
\begin{align}
    \label{equ:convergence_rate}
    \frac{1}{T}\sum\nolimits_{t=1}^T\mathbb{E}\|\nabla\ell(\bm{\theta}_{t-1})\|^2 \lesssim \sqrt{\frac{\rho^2}{T}\frac{|\mathcal{B}|+(1+\sqrt{d}\sigma)/q}{n}}
\end{align}
where $\ell(\cdot)$ is the loss function, $T$ is the total number of training iterations, and other parameters are the same as in \eqref{equ:E_m_t_m_t_informal}.
\end{theorem}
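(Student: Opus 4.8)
The plan is to reuse the convergence analysis of LFH \cite{karimireddy2021learning} essentially verbatim, exploiting that that analysis is modular: under the same ($L$-smoothness) assumptions it combines (i) the descent lemma for the update $\bm{\theta}_t=\bm{\theta}_{t-1}-\eta_t\tilde{\bm{m}}_t$, (ii) a per-iteration bound on the aggregation error $\mathbb{E}\|\tilde{\bm{m}}_t-\bm{m}_t^*\|^2$ between the server's output and the clipping-free, noise-free, honest-only aggregate $\bm{m}_t^*$, and (iii) a bound on the momentum bias $\mathbb{E}\|\bm{m}_t^*-\nabla\ell(\bm{\theta}_{t-1})\|^2$. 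For DP-BREM the ingredients (i) and (iii) carry over unchanged --- the global update rule and the per-client momentum recursion \eqref{equ:local_momentum} coincide with LFH's, and the only new operation, the per-record $R$-clipping, acts on $\tilde{\bm{m}}_t$ but not on $\bm{m}_t^*$, so its effect is absorbed entirely into (ii) --- whereas (ii) is supplied by our Theorem \ref{thm:aggregation_error}. Since that theorem contributes the factor $\tfrac{|\mathcal{B}|+(1+\sqrt{d}\sigma)/q}{n}$ in place of LFH's $\tfrac{1+|\mathcal{B}|}{n}$, running the same argument produces \eqref{equ:convergence_rate}.

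Concretely, write $\bm{e}_t\coloneqq\tilde{\bm{m}}_t-\nabla\ell(\bm{\theta}_{t-1})$. The $L$-smooth descent lemma, after Young's inequality on $\langle\nabla\ell(\bm{\theta}_{t-1}),\bm{e}_t\rangle$ and on $\|\tilde{\bm{m}}_t\|^2\le 2\|\nabla\ell(\bm{\theta}_{t-1})\|^2+2\|\bm{e}_t\|^2$, gives (for $\eta_t$ small enough) $\mathbb{E}\ell(\bm{\theta}_t)\le\mathbb{E}\ell(\bm{\theta}_{t-1})-\tfrac{\eta_t}{4}\mathbb{E}\|\nabla\ell(\bm{\theta}_{t-1})\|^2+O(\eta_t)\,\mathbb{E}\|\bm{e}_t\|^2$; telescoping with $\eta_t\equiv\eta$ reduces the target to controlling $\tfrac1T\sum_t\mathbb{E}\|\bm{e}_t\|^2$, and $\mathbb{E}\|\bm{e}_t\|^2\le 2\,\mathbb{E}\|\tilde{\bm{m}}_t-\bm{m}_t^*\|^2+2\,\mathbb{E}\|\bm{m}_t^*-\nabla\ell(\bm{\theta}_{t-1})\|^2$. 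The first summand is Theorem \ref{thm:aggregation_error} (in its formal version, under the $R,C$ tuning \eqref{equ:tuning_of_R_and_C}), whose factor $|\mathcal{B}|+(1+\sqrt{d}\sigma)/q$ over $n$ collects the Byzantine bias ($|\mathcal{B}|/n$), the variance of averaging only the $\approx nq$ sampled clients $\mathcal{I}_t$ per round rather than all $n$ honest ones ($1/(nq)$), and the DP noise $\mathcal{N}(0,R^2\sigma^2\mathbf{I}_d)$ scaled by $1/|\mathcal{I}_t|$ ($\sqrt{d}\sigma/(nq)$). The second summand is the classical momentum bias: writing $\bm{m}_t^*=(1-\beta)\bm{g}_t^*+\beta\bm{m}_{t-1}^*$ with $\bm{g}_t^*$ the honest average of the unclipped stochastic gradients, the deviation $\bm{\xi}_t\coloneqq\bm{m}_t^*-\nabla\ell(\bm{\theta}_{t-1})$ obeys a $\beta$-contractive recursion --- split off the conditionally zero-mean term $(1-\beta)(\bm{g}_t^*-\nabla\ell(\bm{\theta}_{t-1}))$, bound $\|\nabla\ell(\bm{\theta}_{t-1})-\nabla\ell(\bm{\theta}_{t-2})\|\le L\eta\|\tilde{\bm{m}}_{t-1}\|$, and apply a $(1+c)$-Young step with $c=\tfrac1\beta-1$ --- whose unrolled bound is $\mathbb{E}\|\bm{\xi}_t\|^2\lesssim(1-\beta)\tfrac{\rho^2}{n}+(1-\beta)^{-2}L^2\eta^2\max_s\mathbb{E}\|\tilde{\bm{m}}_s\|^2$, where $\rho$ is the per-client gradient-variance parameter as in Lemma \ref{lem:convergence_rate_of_LFH} (in this normalization Theorem \ref{thm:aggregation_error}, stated through the momentum variance $\asymp(1-\beta)\rho^2$, likewise carries a $(1-\beta)$ factor).

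The delicate point --- and the main obstacle --- is the circular dependence this creates: the momentum-bias bound has a drift term $\propto\mathbb{E}\|\tilde{\bm{m}}_{t-1}\|^2$ which, via $\|\tilde{\bm{m}}_s\|^2\le 2\|\nabla\ell(\bm{\theta}_{s-1})\|^2+2\|\bm{e}_s\|^2$, feeds back into the quantities being bounded. As in \cite{karimireddy2021learning} it is broken by choosing $\eta$ and $\beta$ jointly with $\eta\lesssim(1-\beta)/L$ --- so that $(1-\beta)^{-2}L^2\eta^2\le\tfrac12$ --- and by summing a Lyapunov potential ($\ell(\bm{\theta}_t)$ plus a suitable multiple of $\mathbb{E}\|\bm{\xi}_t\|^2$) rather than the two bounds in isolation; this absorbs the feedback into the left-hand side and leaves $\tfrac1T\sum_t\mathbb{E}\|\nabla\ell(\bm{\theta}_{t-1})\|^2\lesssim\tfrac{L\Delta}{(1-\beta)T}+(1-\beta)\cdot\tfrac{\rho^2\big(|\mathcal{B}|+(1+\sqrt{d}\sigma)/q\big)}{n}$ with $\Delta\coloneqq\ell(\bm{\theta}_0)-\inf_{\bm{\theta}}\ell(\bm{\theta})$. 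The only DP-BREM-specific step is to re-verify that the enlarged aggregation error of Theorem \ref{thm:aggregation_error} --- carrying the extra $\sqrt{d}\sigma/q$ (DP noise) and $1/q$ (client subsampling) factors --- does not push $\eta$ out of the regime $\eta\lesssim(1-\beta)/L$ in which the potential argument closes; it does not, because the tuning \eqref{equ:tuning_of_R_and_C} of $R$ and $C$ was chosen precisely so that the aggregation error stays of order $\tfrac{\rho^2}{n}$ times $|\mathcal{B}|+(1+\sqrt{d}\sigma)/q$, i.e. structurally identical and of the same order as LFH's Byzantine term $\tfrac{1+|\mathcal{B}|}{n}$. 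Optimizing $1-\beta$ to balance $\tfrac{L\Delta}{(1-\beta)T}$ against the $(1-\beta)$-proportional term (with $\eta\asymp(1-\beta)/L$), exactly the tuning of \cite{karimireddy2021learning}, then gives $\tfrac1T\sum_t\mathbb{E}\|\nabla\ell(\bm{\theta}_{t-1})\|^2\lesssim\sqrt{\tfrac{\rho^2}{T}\cdot\tfrac{|\mathcal{B}|+(1+\sqrt{d}\sigma)/q}{n}}$, which is \eqref{equ:convergence_rate}.
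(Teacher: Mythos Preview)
Your proposal is correct and follows essentially the same approach as the paper: both reduce the proof to substituting the DP-BREM aggregation-error bound (Theorem~\ref{thm:aggregation_error}) for LFH's into the modular convergence analysis of~\cite{karimireddy2021learning}, then adjusting for client subsampling via $n\to nq$. The paper's proof simply invokes~\cite[Theorem~VI]{karimireddy2021learning} as a black box, whereas you spell out the descent-plus-momentum-bias Lyapunov argument explicitly; this is more detailed but not a different route.
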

\begin{proof}
See Appendix \ref{apx:proof_of_thm_convergence_rate}.
\end{proof}

\textbf{Remark: comparison with LFH and DP-LFH.} 
The convergence rate of the non-private LFH, DP-LFH, and the proposed solution DP-BREM, showing in \eqref{equ:convergence_rate_of_LFH}, \eqref{equ:convergence_rate_of_baseline}, and \eqref{equ:convergence_rate} respectively, are summarized in Table \ref{tab:convergence_rate}. Though both DP-LFH and DP-BREM pay an additional term of $\sqrt{d}\sigma/q$ to get the DP property, they have different impacts on the convergence. As discussed in Limitation 2 of Section \ref{sec:challenges_and_baseline}, the additional term  $\sqrt{d}\sigma/q$ of DP-LFH (due to DP noise added to clients' gradient) is on the term $\rho$, thus it will enlarge the impact of Byzantine clients (i.e., the term $|\mathcal{B}|$). However, the additional term  $\sqrt{d}\sigma/q$ of our solution DP-BREM (due to DP noise added to the aggregated momentum) is on the term $1+|\mathcal{B}|$, which has a squared-root order. Therefore, DP noise only has a limited impact on the convergence of DP-BREM when there are Byzantine clients. We will validate the above theoretical analysis via experimental results in Section \ref{sec:experiments}.

\begin{table}[!t]
\centering
\small
\caption{Comparison of Convergence Rate}
\begin{tabular}{c|cc}
\hline
    & Where to add noise  & Convergence Rate \\
\hline 
    LFH \cite{karimireddy2021learning} & None & $O(\rho\sqrt{1+|\mathcal{B}|})$\\
    DP-LFH & Clients' gradients & $O\left((\rho+\sqrt{d}\sigma)\sqrt{1+|\mathcal{B}|}\right)$ \\
    DP-BREM & Aggregated momentum & $O\left(\rho\sqrt{1+|\mathcal{B}|+\sqrt{d}\sigma}\right)$\\
\hline
\end{tabular}
%\vspace{-5mm}
\label{tab:convergence_rate}
\end{table}

\section{DP-BREM\textsuperscript{+} with Secure Aggregation}
\label{sec:secure_aggregation}

The private and robust FL solution DP-BREM (in Section \ref{sec:our_solution}) assumes a \emph{trusted} server which can access clients' momentum. In this section, we propose DP-BREM\textsuperscript{+}, which assumes a \emph{malicious} server and utilizes secure aggregation techniques, achieving the same DP and robustness guarantees as DP-BREM. As discussed in Section \ref{sec:problem_statement}, we consider the server as malicious only for data privacy, while clients are malicious for both data privacy and Byzantine attacks.

\subsection{Challenges}
\label{sec:secure_aggregation_problem_overview}

Considering the server is malicious for data privacy, the noisy aggregate of momentum with centered clipping shown in \eqref{equ:server_aggregate} must be implemented securely with the goals of 1) privacy, i.e., each party, including clients and the server, learns nothing but the differentially-private output; and 2) integrity, i.e., the output is correctly computed. Since the noisy aggregated momentum of the previous iteration $\tilde{\bm{m}}_{t-1}$ already satisfies DP, we can regard it as public information and only need to focus on securely computing the term  $\sum\nolimits_{i\in\mathcal{I}_t}\mathsf{Clip}_C(\bar{\bm{m}}_{t,i} - \tilde{\bm{m}}_{t-1}) + \mathcal{N}(0, R^2\sigma^2 \mathbf{I}_d)$ in \eqref{equ:server_aggregate}.

\textbf{Secure Aggregation with Verified Inputs (SAVI).}
The key crypto technique we leverage to achieve the above objectives is  SAVI \cite{roy2022eiffel}, which is a type of protocols that securely aggregate only well-formed inputs. The security goals include both \emph{privacy} and \emph{integrity}. Specifically, privacy means that no party should be able to learn anything about the raw input of an honest client, other than what can be learned from the final aggregation result. Integrity means that the output of the protocol returns the correct aggregate of well-formed input, where 1) an input $u$ passes the input integrity check with a public validation predicate $\mathsf{Valid}(\cdot)$ if and only if $\mathsf{Valid}(u)=1$, and 2) the aggregation is correctly computed. An instantiation of the SAVI protocol is EIFFeL \cite{roy2022eiffel} (described in Appendix \ref{apx:secure_aggregation_preliminaries}).

\textbf{Challenge: Secure Generation of Gaussian Noise.}
A SAVI protocol can potentially solve the problem of securely aggregating the clipped vectors (by enforcing a norm-bound on the client momentum difference). However, the Gaussian noise $\mathcal{N}(0, R^2\sigma^2 \mathbf{I}_d)$ needs to be securely generated and aggregated as well. In DP-BREM with a trusted server, the Gaussian noise $\mathcal{N}(0, R^2\sigma^2 \mathbf{I}_d)$ is generated by the server to guarantee DP. However, when the server is assumed as malicious, the added Gaussian noise for DP cannot be directly generated by the server.  

%since all parties (including the server) know the final noisy aggregation result, if the server knows the noise, then the exact aggregation result (without adding Gaussian noise) can be directly recovered. Therefore, 

A straightforward approach is to use a semi-honest server, as proposed in \cite{roy2020crypt}, to generate DP noise and manage the privacy engine. However, relying on another non-colluding server may be impractical, so we assume only a single server. An alternative is Distributed DP \cite{shi2011privacy}, where clients locally generate Gaussian noise. The aggregated noise follows a Gaussian distribution with an enlarged standard deviation, ensuring DP through cryptographic techniques. This method, however, has two limitations: it requires more noise to achieve the same privacy level due to potential collusion among malicious clients, and the robustness is compromised as malicious clients can generate arbitrary local noise.

%A straightforward solution is to follow \cite{roy2020crypt} that assumes the existence of another semi-honest server (but does not collude with the original server) that will generate DP noise and execute the privacy engine. However, the assumption of another non-colluding server may not be practical and we assume only a single server. 

%Another alternative solution is to leverage Distributed DP (DDP) \cite{shi2011privacy}, where Gaussian noise is generated by clients in a distributed way: each client generates a Gaussian noise locally, and the aggregation of Gaussian noise also follows a Gaussian distribution with an enlarged standard deviation. Since only the aggregated result is released (with the help of crypto techniques), each client can add a smaller noise with the guarantee that the \emph{aggregated} noise satisfies the required DP. However, this solution has two limitations in our scenario. First, distributed noise generation needs to add more noise to achieve the same privacy compared with server-side noise generation due to the collusion of malicious clients. Second, malicious clients can generate arbitrary values as the local Gaussian noise, which has a large impact on the robustness.

A possible solution to address the first limitation is to \emph{jointly} generate Gaussian noise as in \cite{pentyala2022training}, where no party learns or controls the true value of the noise (or a portion of the noise). However, the protocol in \cite{pentyala2022training} is designed only for additive secret sharing schemes, which only works for honest-but-curious parties and does not tolerate malicious parties. Moreover, in \cite{pentyala2022training}, the Gaussian noise is jointly generated by honest-but-curious and non-colluding parties, which does not address the second limitation as the clients can be malicious in our threat model discussed in Section \ref{sec:problem_statement}.

\textbf{Overview of DP-BREM\textsuperscript{+}.}
To achieve secure aggregation with verified inputs and secure Gaussian noise generation under the threat model of a malicious server and malicious minority of clients, our DP-BREM\textsuperscript{+} 1) leverages an existing SAVI protocol called EIFFeL \cite{roy2022eiffel} to achieve secure input validation; and 2) introduces a new protocol to achieve secure noise generation that is compatible with EIFFeL. The idea of \emph{jointly} generating Gaussian noise in DP-BREM\textsuperscript{+} is inspired by \cite{pentyala2022training}, but our design is based on Shamir's secret sharing \cite{shamir1979share} with robust reconstruction by following the design in EIFFeL, thus guarantees security under malicious minority. We present the preliminaries of Shamir's secret sharing and EIFFeL protocol in Appendix \ref{apx:secure_aggregation_preliminaries}.

\subsection{Design of DP-BREM\textsuperscript{+}}
\label{sec:design_secure_aggregation}

As discussed in Section \ref{sec:secure_aggregation_problem_overview}, the main task of DP-BREM\textsuperscript{+} is to securely compute the term $\sum\nolimits_{i\in\mathcal{I}_t}\mathsf{Clip}_C(\bar{\bm{m}}_{t,i} - \tilde{\bm{m}}_{t-1}) + \mathcal{N}(0,R^2\sigma^2 \mathbf{I}_d)$ shown in \eqref{equ:server_aggregate}. After computing local momentum $\bar{\bm{m}}_{t,i}$ via \eqref{equ:local_momentum}, each client $\mathsf{C}_i$ first implements centered clipping to get $\bm{z}_i\coloneqq \mathsf{Clip}_C(\bar{\bm{m}}_{t,i} - \tilde{\bm{m}}_{t-1})$, which is the private input for validation and aggregation. 

\textbf{Three-Phase Design.} 
In DP-BREM\textsuperscript{+}, clients and the server jointly implement three phases: 1) secure input validation to validate the client momentum is properly centered clipped by $C$, 2) secure noise generation, where clients generate shares of Gaussian noise which can be aggregated in Phase 3 to ensure DP, and 3) aggregation of valid inputs and noise to obtain the noisy global model. We assume the arithmetic circuit is computed over a finite field $\mathbb{F}_{2^K}$. The illustration of DP-BREM\textsuperscript{+} is shown in Figure \ref{fig:framework_secure_aggregation}. Due to limited space, we present the detailed steps \textcircled{\small 1}-\textcircled{\small 7} in Appendix \ref{apx:detailed_steps_of_secure_aggregation}.

\textbf{Phase 1: Secure Input Validation.} 
The validation function for an input $\bm{z}_i$ considered in DP-BREM\textsuperscript{+} is defined as $\mathsf{Valid}(\bm{z}_i)\coloneqq \mathbbm{1}(\|\bm{z}_i\|\leqslant C)$, where $\mathsf{Valid}(\bm{z}_i)=1$ if and only if the condition $\|\bm{z}_i\|\leqslant C$ holds. Since honest clients compute $\bm{z}_i= \mathsf{Clip}_C(\bar{\bm{m}}_{t,i} - \tilde{\bm{m}}_{t-1})$, verifying whether $\bm{z}_i$ is well-formed, with bounded L2-norm via $\mathsf{Valid}(\cdot)$, for all clients ensures centered clipping of client momentum $\bar{\bm{m}}_{t,i}$ (to achieve robustness as DP-BREM). We follow the design in EIFFeL \cite{roy2022eiffel} for secure input validation, which returns the validation result $\mathsf{Valid}(\bm{z}_i)$ (either 1 or 0) for client $\mathsf{C}_i$'s private input $\bm{z}_i$, corresponding to steps \textcircled{\small 1}, \textcircled{\small 2}, and \textcircled{\small 3} shown in Figure \ref{fig:framework_secure_aggregation}. Then, clients and the server can jointly verify all inputs $\{\bm{z}_i\}_{i\in\mathcal{I}_t}$, and obtain the set of valid inputs $\mathcal{I}_{\mathsf{Valid}}$, where $\mathsf{Valid}(\bm{z}_i)=1$ for all $i\in\mathcal{I}_{\mathsf{Valid}}$. In the later step, only inputs in $\mathcal{I}_{\mathsf{Valid}}$ are aggregated.

\textbf{Phase 2: Secure Noise Generation.}
We develop a new protocol for secure distributed Gaussian noise generation, which returns the shares (held by each client) of a random vector $\bm{\xi}$ of length $d$ from the Gaussian distribution $\mathcal{N}(0,R^2\sigma^2 \mathbf{I}_d)$, corresponding to steps \textcircled{\small 4} and \textcircled{\small 5} shown in Figure \ref{fig:framework_secure_aggregation}. The shares of noise can be reconstructed into a single Gaussian noise (for ensuring DP) with the guarantee that no parties know or control the generated noise, which protects the information of private inputs after the noisy aggregate is released.

\textbf{Phase 3: Aggregation of Valid Inputs and Noise.}
Finally, the server and clients can aggregate the valid inputs (obtained in Phase 1) and the generated Gaussian noise (obtained in Phase 2) by implementing steps \textcircled{\small 6} and \textcircled{\small 7} shown in Figure \ref{fig:framework_secure_aggregation}, ensuring nothing except the noisy aggregate can be learned.

\begin{figure}[!t]
    \centering
    \includegraphics[width=3.3in]{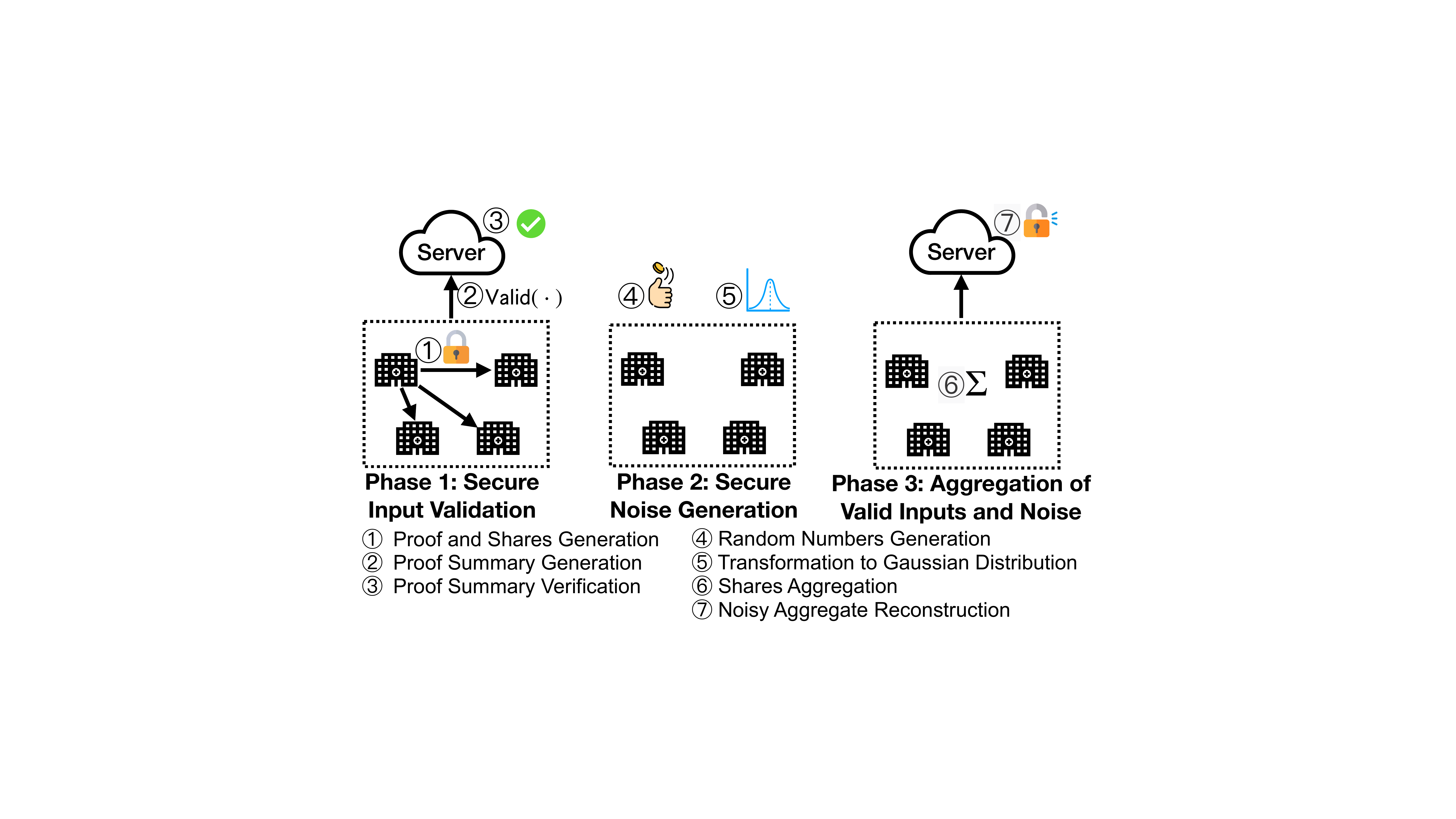}
    \vspace{-3mm}
    \caption{Illustration of DP-BREM\textsuperscript{+} (see Appendix \ref{apx:detailed_steps_of_secure_aggregation} for detailed steps \textcircled{\small 1}-\textcircled{\small 7})}
    \vspace{-3mm}
    \label{fig:framework_secure_aggregation}
\end{figure}

\textbf{Remark on Efficiency.}
DP-BREM\textsuperscript{+}'s usage of EIFFeL's secure input validation is due to efficiency considerations. Instead of having clients perform clipping and using secure input validation, one alternative is to use standard secure multi-party computation (MPC) for the clipping and aggregation. However, doing this under MPC would result in a very large computation/communication overhead due to the multiplication, min-operation, division, and L2-norm computation in the clipping operation $\mathsf{Clip}_C(\cdot)$ defined in \eqref{equ:clipping_function}. In contrast, the secure input validation protocol only requires the verifiers to check all the multiplication gates very efficiently with just one identity test.  The compatibility with secure input validation is one of the advantages of  DP-BREM. %We also note that other Byzantine robust aggregators, such as median-based methods, cannot directly use secure input validation to achieve secure aggregation because the median requires cross-client computation, while secure input validation is done individually.

\textbf{Complexity.}
According to EIFFeL \cite{roy2022eiffel}, the computation/communication complexity of secure aggregation with input validation is $O(mnd)$ for clients and $O(n^2+md\min\{n,m^2\})$ for the server in terms of the number of clients $n$, number of malicious clients $m$, and data dimension $d$. For the proposed secure noise generation (only clients are involved), the computation/communication complexity for total $n$ clients is  $O(mnd)$.

\begin{figure*}[!t]
    \centering
    \includegraphics[width=\linewidth]{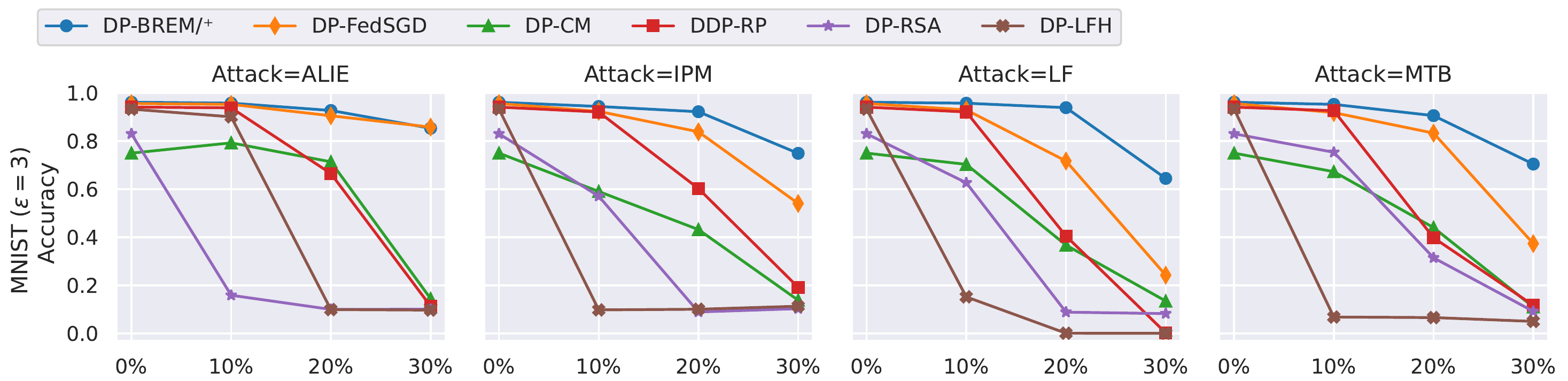}
    \includegraphics[width=\linewidth]{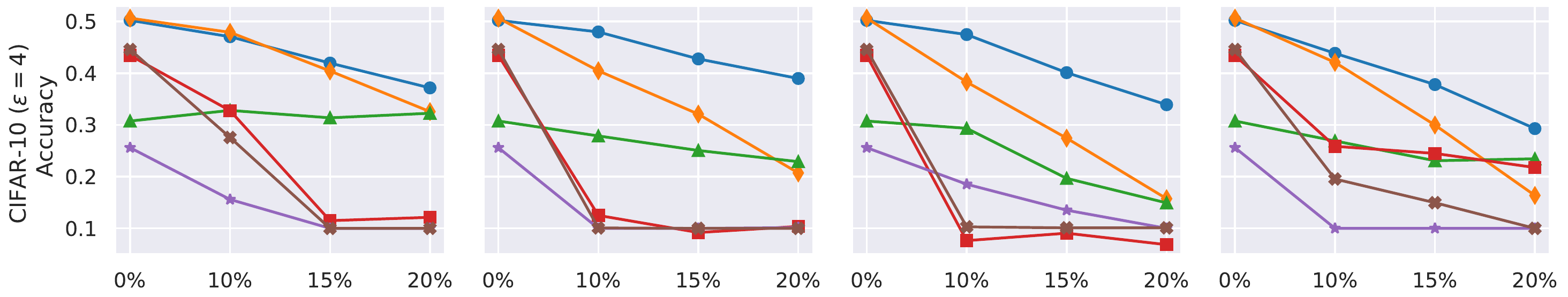}
    \includegraphics[width=\linewidth]{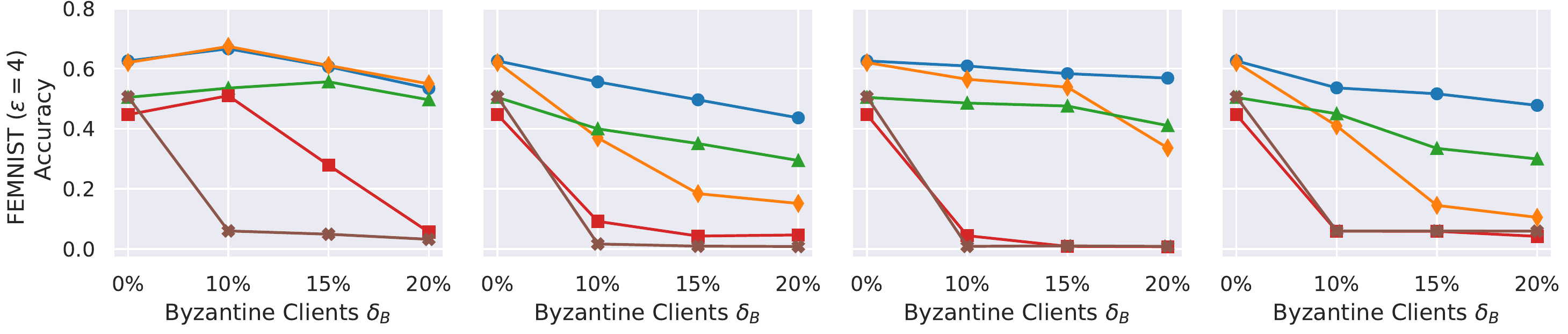}
    \vspace{-8mm}
    \caption{With fixed privacy budget $\epsilon$, varying the percentage of Byzantine clients $\delta_B$ for three datasets.}
    \vspace{-4mm}
    \label{fig:vary_K}
\end{figure*}

\subsection{Security Analysis}
\label{sec:security_analysis}
In comparison, EIFFeL \cite{roy2022eiffel} is a secure aggregation protocol with verified inputs (without guaranteeing DP), while our solution DP-BREM\textsuperscript{+} is a secure noisy aggregation protocol with verified inputs and jointly generated Gaussian noise, which provides DP on the aggregated results. Therefore, the only difference is the Gaussian noise that will be aggregated to the final result. We show the formal security guarantee of DP-BREM\textsuperscript{+} in the following theorem.

\begin{theorem}[Security Guarantees of DP-BREM\textsuperscript{+}]
\label{thm:security_analysis}
For the validation function $\mathsf{Valid}(\cdot)$ considered in Section \ref{sec:design_secure_aggregation}, given a security parameter $\kappa$, the secure noisy aggregation protocol in DP-BREM\textsuperscript{+} satisfies:

\textbf{1) Integrity.}
For a negligible function $\text{negl}(\cdot)$, the output of the protocol returns the noisy aggregate of a subset of clients $\mathcal{I}_{\mathsf{Valid}}$ and Gaussian noise $\bm{\xi}$, such that all clients in $\mathcal{I}_{\mathsf{Valid}}$ have well-formed inputs:
\begin{align*}
    \Pr[\text{output}=\sum\nolimits_{i\in\mathcal{I}_{\mathsf{Valid}}} \bm{z}_i + \bm{\xi}] \geqslant 1-\text{negl}(\kappa)
\end{align*}
where random vector $\bm{\xi}\sim\mathcal{N}(0,R^2\sigma^2\mathbf{I}_d)$, and $\mathsf{Valid}(\bm{z}_i)=1$ for all $i\in\mathcal{I}_{\mathsf{Valid}}$. Note that the set $\mathcal{I}_{\mathsf{Valid}}$ contains all honest clients (denoted by $\mathcal{I}_H$) and the malicious clients who submitted well-formed input (denoted by $\mathcal{I}_M^*$), i.e., $\mathcal{I}_{\mathsf{Valid}}=\mathcal{I}_H\cup\mathcal{I}_M^*$.

\textbf{2) Privacy.} For a set of malicious clients $\mathcal{I}_M$ and a malicious server $\mathsf{S}$, there exists a probabilistic polynomial-time (P.P.T.) simulator $\mathsf{Sim}(\cdot)$ such that:
\begin{align*}
    \mathsf{Real}\left(\{z_i\}_{i\in\mathcal{I}_H}, \Omega_{\mathcal{I}_M\cup\mathsf{S}}\right) \equiv_{\mathsf{C}} \mathsf{Sim}\left(\sum\nolimits_{i\in\mathcal{I}_{H}} \bm{z}_i + \bm{\xi},\mathcal{I}_H, \Omega_{\mathcal{I}_M\cup\mathsf{S}}\right)
\end{align*}
where $\{z_i\}_{i\in\mathcal{I}_H}$ denotes the input of all the honest clients, $\mathsf{Real}$ denotes a random variable representing the joint view of all the parties in the protocol's execution, $\Omega_{\mathcal{I}_M\cup\mathsf{S}}$ indicates a polynomial-time algorithm implementing the "next-message" function of the parties in $\mathcal{I}_M\cup\mathsf{S}$ (see \cite[Appendix~11.5]{roy2022eiffel}), and $\equiv_{\mathsf{C}}$ denotes computational indistinguishability. In summary, the server and clients learn nothing besides the final aggregated result.
\end{theorem}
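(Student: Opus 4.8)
The plan is to reduce both claims to the security of the underlying SAVI protocol EIFFeL \cite{roy2022eiffel}, treating the jointly generated Gaussian noise as the only new component. Since EIFFeL already provides secure aggregation with verified inputs, the core of the argument is to show that (i) the secure noise generation in Phase 2 produces consistent Shamir shares of a vector $\bm{\xi}\sim\mathcal{N}(0,R^2\sigma^2\mathbf{I}_d)$ that a coalition of a malicious server $\mathsf{S}$ and a minority of malicious clients can neither bias nor learn, and (ii) the linearity of Shamir's secret sharing lets these noise shares be folded into the aggregation of Phase 3 without affecting EIFFeL's privacy or integrity guarantees, so that the released value is exactly the noisy aggregate.

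For \textbf{Integrity}, I would first invoke EIFFeL's integrity guarantee: steps \textcircled{\small 1}--\textcircled{\small 3} identify the set $\mathcal{I}_{\mathsf{Valid}}=\mathcal{I}_H\cup\mathcal{I}_M^*$ of clients whose private inputs pass $\mathsf{Valid}(\bm{z}_i)=\mathbbm{1}(\|\bm{z}_i\|\leqslant C)$, and robust reconstruction recovers $\sum_{i\in\mathcal{I}_{\mathsf{Valid}}}\bm{z}_i$ except with probability $\text{negl}(\kappa)$. Next I would argue that the noise-generation protocol, by reusing EIFFeL's robust-reconstruction machinery over $\mathbb{F}_{2^K}$ with a sharing threshold chosen to tolerate the malicious minority, outputs consistent shares of a vector $\bm{\xi}$ with the target distribution: malformed contributions of malicious clients are either rejected by the consistency check or do not change the reconstructed value, so $\bm{\xi}$ remains unbiased. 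Because Shamir shares are additively homomorphic, aggregating (in step \textcircled{\small 6}) the share of $\sum_{i\in\mathcal{I}_{\mathsf{Valid}}}\bm{z}_i$ with the share of $\bm{\xi}$ and reconstructing (in step \textcircled{\small 7}) yields exactly $\sum_{i\in\mathcal{I}_{\mathsf{Valid}}}\bm{z}_i+\bm{\xi}$; a union bound over the two negligible failure events gives the stated probability.

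For \textbf{Privacy}, I would give a simulation-based proof via a hybrid over the three phases. Starting from $\mathsf{Real}(\{z_i\}_{i\in\mathcal{I}_H},\Omega_{\mathcal{I}_M\cup\mathsf{S}})$, the first hybrid replaces the input-validation and aggregation messages seen by $\mathcal{I}_M\cup\mathsf{S}$ with the output of EIFFeL's simulator, indistinguishable by EIFFeL's privacy guarantee, which needs only the revealed aggregate (here already the noisy sum). The second hybrid replaces the honest clients' noise shares sent to malicious clients by uniformly random field elements: since $\mathcal{I}_M$ holds at most $t$ shares of each secret while the sharing polynomials have degree $t$, the secrecy of Shamir's scheme \cite{shamir1979share} makes the two distributions identical. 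The last hybrid is the simulator $\mathsf{Sim}$ itself: given $\sum_{i\in\mathcal{I}_H}\bm{z}_i+\bm{\xi}$ it samples a consistent set of shares of this value, runs EIFFeL's simulator for Phase 3, and feeds the adversary random-looking shares in Phase 2; chaining the indistinguishability of the hybrids establishes $\mathsf{Real}\equiv_{\mathsf{C}}\mathsf{Sim}$.

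I expect the main obstacle to be the noise-generation argument inside the integrity proof: one must show that a malicious minority cannot skew $\bm{\xi}$ away from $\mathcal{N}(0,R^2\sigma^2\mathbf{I}_d)$ — the exact distribution on which the DP guarantee of Theorem \ref{thm:privacy_analysis} relies — while simultaneously guaranteeing that no honest party, nor the server, ever learns $\bm{\xi}$. Obtaining both properties from a single share-based construction requires carefully choosing the per-client noise variance and the reconstruction threshold so that the honest contributions alone already realize the target Gaussian and the malformed shares are provably neutralized by robust reconstruction; the remainder of the proof is bookkeeping on top of EIFFeL's guarantees and the linearity of secret sharing.
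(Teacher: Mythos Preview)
Your overall decomposition---reduce Phases 1 and 3 to EIFFeL and handle Phase 2 separately---matches the paper's proof structure. The paper, too, splits the integrity claim into an ``integrity of input'' lemma (inherited from EIFFeL), an ``integrity of Gaussian noise'' lemma (new), and an ``integrity of aggregate'' lemma (inherited from EIFFeL via robust reconstruction and linearity); and it splits privacy analogously into three phase-wise lemmas.

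The gap is in your treatment of Phase 2. You model the noise generation as if each client contributes a partial Gaussian share whose sum realizes $\mathcal{N}(0,R^2\sigma^2\mathbf{I}_d)$, and then worry about ``carefully choosing the per-client noise variance'' so that honest contributions alone suffice and robust reconstruction ``neutralizes'' malformed shares. That is not the mechanism in DP-BREM\textsuperscript{+}. In step \textcircled{\small 4} each client contributes a vector of raw random \emph{bits}, and the protocol computes the bitwise XOR $\bm{b}=\bigoplus_i\bm{b}_i$ (over $\mathbb{F}_2$) to obtain a shared uniform value $u\in[0,1]$; step \textcircled{\small 5} then applies the Box--Muller transform under a maliciously-secure MPC protocol to turn pairs of such uniforms into Gaussians. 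The integrity argument is therefore not about robust reconstruction filtering bad noise shares: it is the elementary fact that if at least one honest client's bit vector is uniformly random, the XOR is uniformly random regardless of what the malicious clients submit, combined with the correctness guarantee of the MPC used for Box--Muller. Your proposed argument (malformed shares are rejected or do not change the reconstructed value) does not establish that $\bm{\xi}$ has the target distribution, and there is no ``per-client variance'' parameter to tune.

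The same misconception affects your privacy argument for Phase 2. Replacing honest clients' noise shares with uniform field elements via Shamir secrecy is fine for the sharing layer, but the paper's key point is again the XOR: no coalition containing at least one missing honest contributor can learn $u$ (and hence $\bm{\xi}$), and the MPC protocol for step \textcircled{\small 5} leaks nothing beyond its (unrevealed) output. Once you replace your Phase 2 reasoning with the XOR-plus-MPC argument, the rest of your plan---linearity of Shamir sharing to fold $[\bm{\xi}]_j$ into the aggregate, a union bound for integrity, and the hybrid reduction to EIFFeL's simulator for privacy---is essentially what the paper does.
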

\begin{proof}
See Appendix \ref{apx:proof_of_thm_security_analysis}.
\end{proof}

\section{Experimental Evaluation}
\label{sec:experiments}

\begin{figure*}[!t]
    \centering
    \includegraphics[width=\linewidth]{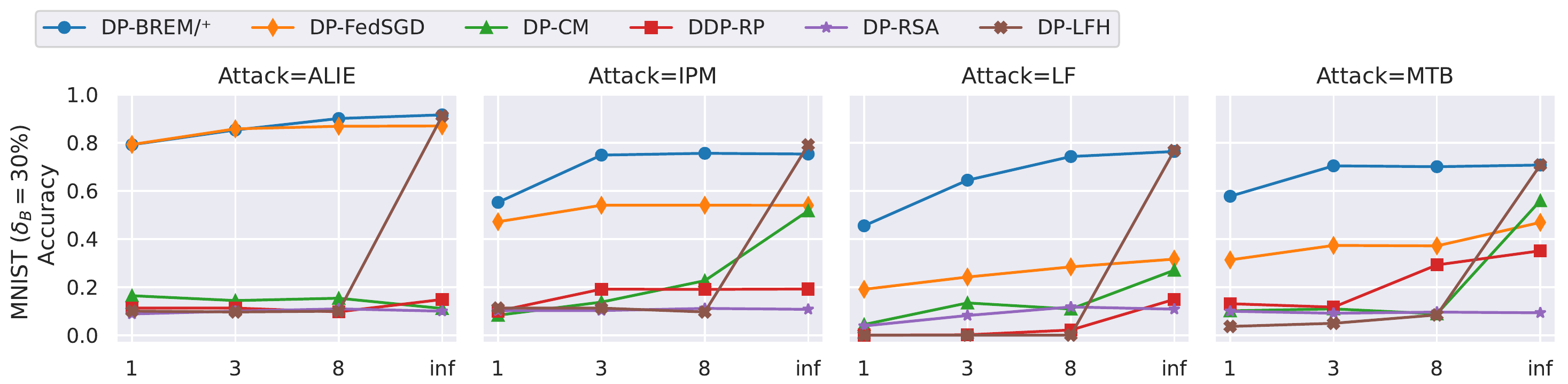}
    \includegraphics[width=\linewidth]{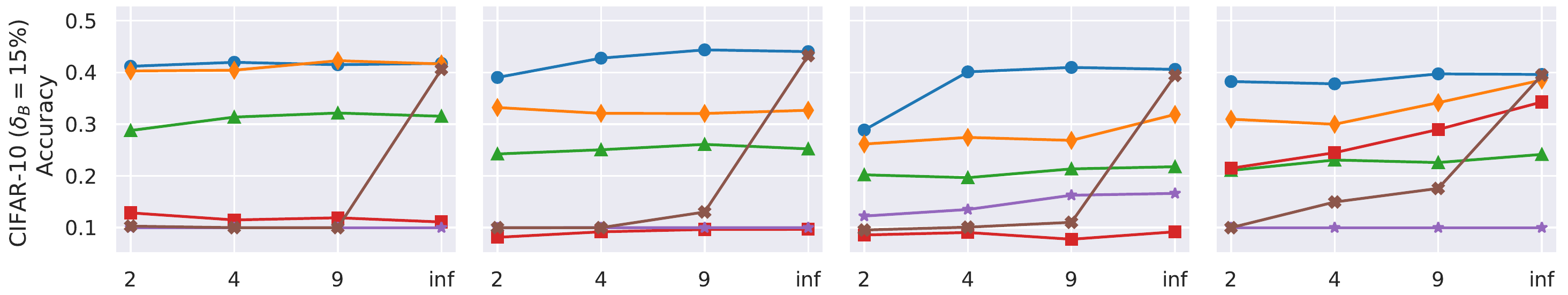}
    \includegraphics[width=\linewidth]{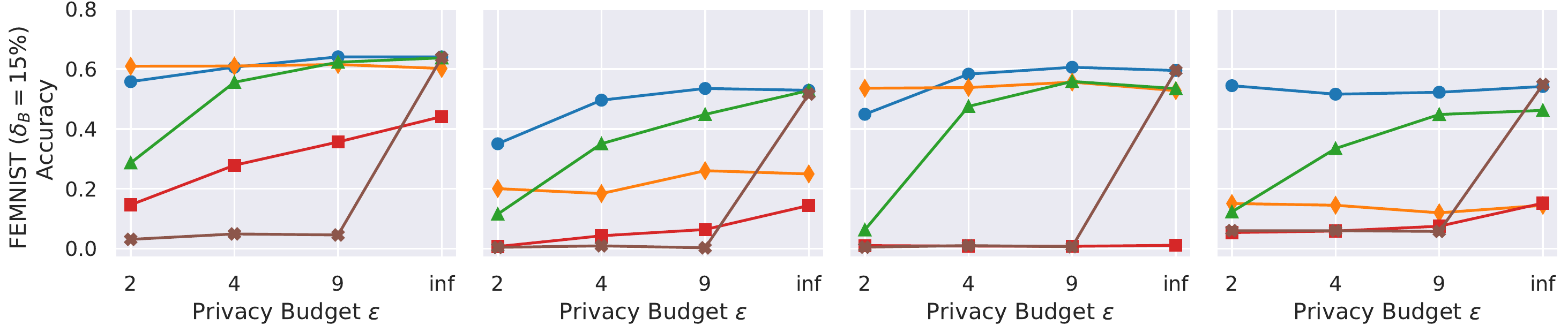}
    \vspace{-8mm}
    \caption{With fixed percentage of Byzantine clients $\delta_B$, varying privacy budget $\epsilon$ for three datasets.}
    \vspace{-4mm}
    \label{fig:vary_epsilon}
\end{figure*}

In this section, we demonstrate the effectiveness of the proposed DP-BREM/DP-BREM\textsuperscript{+} on achieving both good privacy-utility tradeoff and Byzantine robustness via experimental results on MNIST \cite{lecun1998mnist}, CIFAR-10 \cite{krizhevsky2009learning}, and FEMNIST \cite{caldas2018leaf} datasets with non-IID setting (refer to Appendix \ref{apx:experimental_setup} for more details on the datasets and model architectures). Note that MNIST and CIFAR-10 have 10 classes, while FEMNIST includes 62 classes. All experiments are developed via PyTorch\footnote{Our source code is available at https://github.com/xiaolangu/DP-BREM}.

\textbf{Byzantine Attacks.}
We consider four existing Byzantine attacks in our experiments, including ALIE ("a little is enough") \cite{baruch2019little}, IPM (inner-product manipulation) \cite{xie2020fall}, LF (label-flipping), and the state-of-the-art  MTB ("manipulating-the-Byzantine") \cite{shejwalkar2021manipulating}. Refer to Appendix \ref{apx:experimental_setup} for more details.

\textbf{Compared Methods.}
We compare the performance of six approaches against Byzantine attacks, including DP-BREM/\textsuperscript{+} (our approach)\footnote{Since DP-BREM\textsuperscript{+} achieves the same DP and robustness guarantees as DP-BREM, we did not perform the empirical experiments with secure aggregation because the accuracy results will be exactly the same as DP-BREM. We use DP-BREM/\textsuperscript{+} to denote both DP-BREM and DP-BREM\textsuperscript{+}, and the implementation follows Algorithm \ref{alg:DP_BREM}.}, a variant of DP-FedSGD \cite{mcmahan2018learning} with both record and client norm clipping, DDP-RP \cite{wang2022privacy}, DP-RSA \cite{zhu2022bridging}, a variant of CM \cite{yin2018byzantine} with DP noise, and DP-LFH. The comparison (on trust assumption and mechanism overview) of these approaches is provided in Table \ref{tab:DP_compare}, and Appendix \ref{apx:experimental_setup} shows more details of each approach. In summary,  DP-BREM/\textsuperscript{+}, DP-FedSGD, and DP-CM add central noise to the aggregation, but DP-BREM\textsuperscript{+} does not require a trusted server due to the secure aggregation technique. DDP-RP adds \emph{partial} local noise to the client's update with secure aggregation. DP-RSA and DP-LFH add local noise to the client's update. We fix $\delta=10^{-6}$ for $(\epsilon, \delta)$-DP in all experiments. For the setting of other parameters, refer to Appendix \ref{apx:parameters_in_experiments}.

\textbf{Evaluation Metric.}
We evaluate the testing accuracy of the global model within $T$ iterations. Considering the accuracy curve might be unstable under Byzantine attacks, we average the accuracy between $0.9T$ and $T$ as the final accuracy for comparison. Note that both DP noise and Byzantine attacks reduce the accuracy. A protocol achieves good Byzantine robustness if its accuracy does not decrease too much with an increased number of Byzantine clients.

\subsection{Robustness Evaluation with DP}

We consider a fixed privacy budget $\epsilon$ and implement each attack with different percentages of Byzantine clients $\delta_B=\frac{|\mathcal{B}|}{n}$ for the four attacks, and compare the accuracy among all approaches. We note that a complex dataset requires a more sophisticated model architecture and makes it more challenging to maintain good utility in the presence of DP and Byzantine attacks. Therefore, in our experiments with CIFAR-10 (which has three color channels) and FEMNIST (which includes 62 classes), we use slightly larger $\epsilon$ values and a smaller number of Byzantine clients. These choices are still within a reasonable range. Previous papers, such as \cite{abadi2016deep} and \cite{zheng2021federated}, also used larger privacy budgets for the CIFAR-10 dataset compared to the MNIST dataset. The results for MNIST (with $\epsilon=3$), CIFAR-10 (with $\epsilon=4$), and FEMNIST (with $\epsilon=4$) datasets are shown in Figure \ref{fig:vary_K}. Compared to the results on the MNIST dataset (with 10 classes), the accuracy on the FEMNIST dataset is lower due to the larger number of classes. Though the detailed results vary under different attacks and across three datasets, we have some general observations:

1)  When there is no attack, i.e., $\delta_B=0$, DP-BREM/\textsuperscript{+} achieves almost the same accuracy as DP-FedSGD, indicating the Byzantine-robust design (client momentum with centered clipping) has almost no impact on the utility in this case.

2) After increasing $\delta_B$, our DP-BREM/\textsuperscript{+} has the smallest accuracy decrease, indicating its success in providing Byzantine robustness. However, the accuracy of DP-LFH reduces sharply, demonstrating that the large aggregated local DP noise makes the robust aggregator more vulnerable to Byzantine attacks, which is consistent with our discussions of Limitation 2 in Section \ref{sec:challenges_and_baseline}. 

3) Though DP-FedSGD has client-level gradient clipping,  which can restrict malicious clients' impact, it is still vulnerable to some types of Byzantine attacks (such as IPM and MTB) under larger $\delta_B$ values.

4) CM with DP noise (or DP-CM) has a relatively small accuracy decrease for a relatively small $\delta_B$. It is the benefit of the median-based robust aggregator. But the sensitivity is larger than the average-based aggregators, as discussed in Example \ref{exp:sensitivity_average_median}, the aggregated DP noise is too large to obtain a high accuracy, even when $\delta_B=0$.

5) DDP-RP is more vulnerable to LF attack because it only checks the element-wise range. Also, the model replacement strategy in LF attack is more likely to change the positions that have small values in benign gradient vectors.

6) DP-RSA has relatively poor accuracy compared to other approaches, even when $\delta_B=0$. This is caused by the sign-SGD aggregator, which only aggregates element-wise \emph{signs} rather than the full precision gradient, leading to large information loss. Moreover, the local DP noise makes Byzantine attacks easier to succeed. We note that DP-RSA does not converge for the FEMNIST dataset (possibly caused by the sign aggregation), even without DP noise and Byzantine clients, and thus we do not present the results for this dataset.

7) Under the ALIE attack, it is possible for a small number of Byzantine clients to improve the accuracy of the model compared to the scenario without any Byzantine clients. For instance, an ALIE attack with $10\%$ Byzantine clients can achieve higher accuracy than the case with $0\%$ Byzantine clients across all defense aggregators except DP-LFH. This improvement occurs because the ALIE attack generates malicious gradients that are close to the averaged good gradients but deviate slightly using a scaling factor. This factor is determined based on the total number of clients and the proportion of Byzantine clients, designed to bypass any anomaly detection mechanisms employed by the central server. Consequently, when the number of Byzantine clients is relatively small, the malicious gradients can enhance model accuracy compared to benign gradients, where the record-level clipping is used to achieve DP.

\begin{figure*}[!t]
    \centering
    \includegraphics[width=\linewidth]{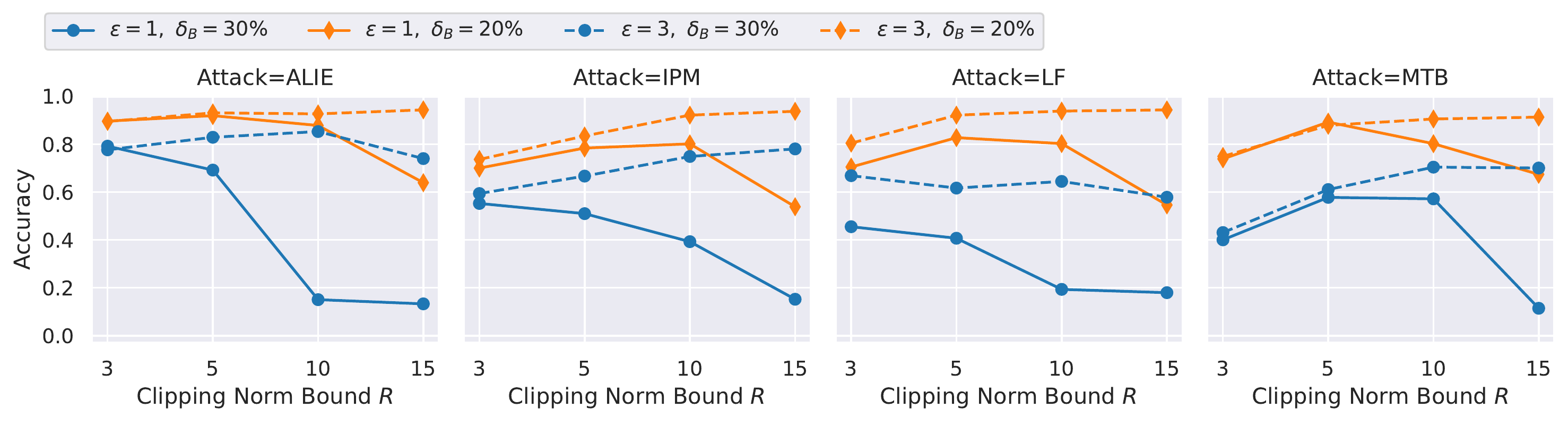}
    \vspace{-8mm}
    \caption{MNIST: Varying record-level clipping bound $R$ for DP-BREM under different settings.}
    \vspace{-4mm}
    \label{fig:mnist_vary_R}
\end{figure*}

\subsection{Privacy-Utility Tradeoff under Attack}
We consider a fixed percentage of Byzantine clients $\delta_B$ for each attack under different values of privacy budget $\epsilon$, and compare the accuracy of all approaches. The results for MNIST (with $\delta_B=30\%$), CIFAR-10 (with $\delta_B=15\%$), and FEMNIST (with $\delta_B=15\%$) datasets are shown in Figure \ref{fig:vary_epsilon}. For all three datasets, we consider four different levels of privacy, where $\epsilon=\inf$ means the standard deviation of DP noise is 0. However, we still implement record-level clipping to illustrate how the noise affects the results while keeping other settings including the clipping step the same.

It's essential to highlight that while the privacy-utility curve is generally monotonic in the absence of Byzantine attacks, this may not hold under Byzantine attacks due to two sources of perturbation. When malicious perturbation dominates, the impact of DP noise on utility is typically minimal. Additionally, different defense aggregators exhibit varying sensitivities to malicious perturbation and DP noise across various datasets, even when the number of malicious clients and $\epsilon$ values are the same. Consequently, observations can vary across defense aggregators, attacks, and datasets (with different parameters). For example, DP noise has a very small (or almost negligible) impact on DP-FedSGD compared to DP-BREM. This could be because DP-FedSGD aggregates more information than the momentum-based solution, leading to a better signal-to-noise ratio (SNR) and thus greater robustness to DP noise. However, the attack has a more significant impact on DP-FedSGD.

Though the detailed results vary under different attacks and across the three datasets, DP-BREM/\textsuperscript{+} generally achieves the highest accuracy among almost all approaches, especially under IPM and MTB attacks. The only exception is when $\epsilon=2$ for the FEMNIST dataset, where the accuracy of DP-BREM/\textsuperscript{+} is lower than that of DP-FedSGD. This is because the client momentum in DP-BREM/\textsuperscript{+} restricts the information that can be learned from each new iteration, making the increased DP noise have a greater impact on the model's accuracy.

Note that when $\sigma=0$ (i.e., $\epsilon=\inf$), both DP-BREM/\textsuperscript{+} and DP-LFH reduce to LFH, thus they have the same results in this case. We can observe that with a moderate privacy budget, such as $\epsilon\geqslant 2$, DP noise only has a negligible impact on the accuracy. But if $\epsilon$ is too small, such as $\epsilon=1$ for the MNIST dataset in Figure \ref{fig:vary_epsilon}, DP-BREM/\textsuperscript{+} suffers a relatively larger impact (but still acceptable) from DP noise. Note that when there exist Byzantine attacks, reducing the DP noise to $\sigma=0$ (i.e., $\epsilon=\inf$)  does not significantly improve the accuracy of DP-BREM/\textsuperscript{+} compared with $\epsilon<\inf$, because Byzantine clients' perturbations largely impact the performance. However, the accuracy of DP-LFH is greatly reduced when $\epsilon<\inf$, since the local DP noise impacts the robustness of the aggregator. This observation is consistent with our theoretical analysis in Limitation 2 of DP-LFH (Section \ref{sec:challenges_and_baseline}).

\begin{table}
\begin{threeparttable}
    \small
    \centering
    \caption{Running time\tnote{\emph{1}} (in milliseconds) per round per client on the MNIST dataset. }
    \begin{tabular}{c|ccc|c}
    \hline
    \makecell[c]{Batch \\ Size}   & \makecell[c]{Baseline \\ {\footnotesize(FedSGD)}} & \makecell[c]{FedSGD+DP \\ {\footnotesize(efficient\tnote{\emph{2}}~~)}} & \makecell[c]{DP-BREM \\ {\footnotesize(DP+robust)}} & \makecell[c]{FedSGD+DP \\ {\footnotesize(inefficient\tnote{\emph{3}}~~)}}\\
    \hline
    30     & 11.80  & 13.31 & 13.72 & 41.06\\
    60     & 18.23 & 19.79 & 20.27 & 76.70\\
    120     & 31.22 & 33.18 & 33.70 & 149.32\\
    \hline
    \end{tabular}
    \begin{tablenotes}
    \footnotesize
        \item[\emph{1}] Our GPU device is NVIDIA Tesla P100-PCIE-16GB. Using other GPU devices may have different results.
        \item[\emph{2}] By default, our implementation uses \emph{efficient} per-record gradient clipping by following Opacus library's implementation with parallel clipping and optimized einsum  (refer to \url{https://opacus.ai/api/_modules/opacus/optimizers/optimizer.html#DPOptimizer})
        \item[\emph{3}] To illustrate the improvement of \emph{efficient} clipping, we also show the results of the \emph{inefficient} implementation, which clips per-record gradient sequentially and without using optimized einsum.
    \end{tablenotes}
    \vspace{-4mm}
    \label{tab:efficiency_result}
    \end{threeparttable}
\end{table}

\subsection{Other Results}

\textbf{Efficiency Evaluation of DP and Byzantine Robustness.}
We note that DP and Byzantine Robustness designs in our solution only introduce a small computation overhead, because 1) the clipping step of DP can be implemented efficiently; 2) our robustness is essentially a clipped summation of client momentum without any complex computations.  Due to limited resources, we implemented the distributed training of FL on a single machine (by running all the clients and the server code sequentially). We evaluate the efficiency of DP-BREM via the running time (per round per client) on the MNIST dataset. The results shown in Table \ref{tab:efficiency_result} indicate that the DP noise and Byzantine robustness only incur $8\%\sim 16\%$ additional running time (depending on batch size).

\begin{table}
    \centering
    \small
    \caption{Model accuracy when varying $C$ of DP-BREM/\textsuperscript{+} with $\epsilon=4$ under IPM and MTB attacks on FEMNIST dataset.} 
    \vspace{2mm}
    \begin{tabular}{c|ccccc}
    \hline
      $\delta_B$   & $C=0.5$ & $C=1$  & $C=2$ & $C=3$ & $C=4$\\
    \hline
     0\%  & 0.622  & \textbf{0.647} & 0.625 & 0.621 &  0.627 \\
     IPM 10\%    & 0.407 & 0.524 & \textbf{0.555} & 0.528 & 0.514\\
     IPM 20\%    & 0.060  & 0.305  & \textbf{0.436} & 0.413 & 0.392 \\
     MTB 10\%    & 0.591 & \textbf{0.605} & 0.535 & 0.525 & 0.545\\
     MTB 20\%    & \textbf{0.554}  & 0.537  & 0.477 & 0.426 & 0.426 \\
    \hline
    \end{tabular}
    \vspace{-4mm}
    \label{tab:vary_C}
\end{table}

\textbf{Impact of $R$ in DP-BREM/\textsuperscript{+}.}
Figure \ref{fig:mnist_vary_R} shows how the accuracy changes w.r.t. the record-level clipping bound $R$ in DP-BREM/\textsuperscript{+}. The results demonstrate that when there are fewer Byzantine clients (i.e., smaller $\delta_B$) or the noise multiplier $\sigma$ is smaller (i.e., larger $\epsilon$), we need to set a larger $R$ to obtain better accuracy. This observation is consistent with the theoretical analysis of parameter tuning discussed in Theorem \ref{thm:aggregation_error} and its interpretation. 

\textbf{Impact of $C$ in DP-BREM/\textsuperscript{+}.}
We use the fixed client-level clipping bound $C$ for each dataset in previous experiments. Table \ref{tab:vary_C} illustrates how varying values of $C$ (while keeping the default and fixed $R$) can influence model accuracy. In the absence of Byzantine attacks, the value of $C$ has a relatively small impact on the model accuracy. However, in the presence of Byzantine attacks, the effect of $C$ varies depending on the nature of the attack. For instance, attacks like the IPM attack, which deviate significantly from benign gradients, benefit from a slightly larger $C$ as it allows more useful information (from benign clients) to be retained. Conversely, for attacks like the MTB attack, which aim to evade detection by aligning more closely with benign gradients, a slightly smaller $C$ can improve accuracy by reducing the impact of the attack on the aggregated gradient.

\textbf{Impact of $q$ in DP-BREM/\textsuperscript{+}.}
In previous experiments, we set client-level sampling rate $q=1$ by default. As discussed in Sec. \ref{sec:our_algorithm}, aggregating a subset $I_t$ of clients in \eqref{equ:server_aggregate} is one of the major differences from LFH. Table \ref{tab:vary_q} demonstrates the utility improvement by optimizing $q$ under different attack percentages $\delta_B$. Intuitively, without attacks, a smaller $q$ enhances privacy amplification, reducing the required $\sigma$ for a given $\epsilon$ in DP; however, too small a $q$ increases aggregation variance. Under Byzantine attacks, a smaller $q$ mitigates attack impact as only a subset of Byzantine clients are aggregated. Thus, with higher $\delta_B$ the optimal $q$ (highlighted in Table \ref{tab:vary_q}) decreases.

\begin{table}
    \centering
    \small
    \caption{Model accuracy when varying $q$ of DP-BREM/\textsuperscript{+} with $\epsilon=2$ under MTB attack on the CIFAR-10 dataset.} 
    \vspace{2mm}
    \begin{tabular}{c|ccccc}
    \hline
      $\delta_B$   & $q=1$ & $q=0.8$  & $q=0.6$ & $q=0.4$ & $q=0.2$\\
    \hline
     0\%  & 0.503  & \textbf{0.525} & 0.504 & 0.491 &  0.485 \\
     10\%    & 0.435 & 0.434 & \textbf{0.465} & 0.449 & 0.438\\
     20\%    & 0.255  & 0.284  & 0.297 & \textbf{0.328} & 0.241 \\
    \hline
    \end{tabular}
    \vspace{-3mm}
    \label{tab:vary_q}
\end{table}

\section{Related Work}
Due to limited space, we only discuss the most relevant defenses below and put other related work in Appendix \ref{apx:other_related_work}. Other works either only achieve DP or Byzantine robustness (but
not both), or combine secure aggregation with Byzantine robustness
without realizing DP.

Wang et al. \cite{wang2022privacy} proposed DDP-RP, an FL scheme offering Distributed DP (via encryption) and robustness (via range-proof technologies). However, this scheme only verifies if local model weights are within a bounded range, providing weak robustness. Our solution, in contrast, employs client momentum and centered clipping for Byzantine robustness with provable convergence. Zhu et al. \cite{zhu2022bridging} uses \emph{sign} aggregation for robustness, thus each client has limited impact, and adds DP noise to local gradients before sign operations. This method suffers from information loss, resulting in degraded convergence, and only accounts for the privacy cost of one iteration, underestimating the overall cost. Our solution, based on original SGD with momentum, considers the privacy cost of all iterations. Experimental results show that DP-BREM outperforms both approaches.

\section{Conclusions}
This paper aims to achieve FL in the cross-silo setting with both DP and Byzantine robustness. We first proposed DP-BREM, a DP version of LFH-based FL protocol with a robust aggregator based on client momentum, where the server adds noise to the aggregated momentum. Then we further developed DP-BREM\textsuperscript{+} which relaxes the server's trust assumption, by combining secure aggregation techniques with verifiable inputs and a new protocol for secure joint noise generation.  DP-BREM\textsuperscript{+} achieves the same DP and robustness guarantees as DP-BREM, under a malicious server (for privacy) and malicious minority clients. We theoretically analyze the error and convergence of DP-BREM, and conduct extensive experiments that empirically show the advantage of DP-BREM/\textsuperscript{+} in terms of privacy-utility tradeoff and  Byzantine robustness over five baseline protocols. In the future, we will extend our work to other types of robust aggregators.

\section*{Acknowledgments}
The authors would like to thank the anonymous reviewers and the shepherd for their valuable comments and suggestions. Li Xiong was partly supported by NSF grants CNS-2124104, CNS-2125530, IIS-2302968, and NIH grants R01LM013712, R01ES033241.

\bibliographystyle{plain}
\bibliography{mybibfile}

%%
%% If your work has an appendix, this is the place to put it.
\appendix

\section{Proof of Theorem \ref{thm:privacy_analysis} (Privacy Analysis)}
\label{apx:proof_of_thm_privacy_analysis}
\begin{proof}
Since the added Gaussian noise in \eqref{equ:server_aggregate} has standard deviation $R\sigma$, and the aggregation sensitivity is shown in \eqref{equ:aggregation_sensitivity}, then the noise multiplier (defined by the ratio between Gaussian noise's standard deviation and the sensitivity) is 
\begin{align*}
    \sigma_i = \frac{R\sigma}{S_i}=\max\left\{\frac{R\sigma}{2C},\sigma p_i|\mathcal{D}_{i}|\right\}=\sigma\cdot\max\left\{\frac{R}{2C},p_i|\mathcal{D}_{i}|\right\}
\end{align*}
Also, due to the client-level sampling (i.e., each client was selected by the server w.p. $q$) and record-level sampling (i.e., each record was selected by client $\mathsf{C}_i$ w.p. $p_i$), the overall sampling rate is $qp_i$. Then, by applying the privacy accountant of Gaussian DP \cite{dong2019gaussian} (Lemma \ref{lem:GDP_privacy_account} in Appendix \ref{apx:GDP}), DP-BREM satisfies $\mu_i$-GDP with $\mu_i$ shown in \eqref{equ:mu_i}. Finally, by converting $\mu_i$-GDP to $(\epsilon_i, \delta)$-DP via Lemma \ref{lem:GDP_to_DP}, we get \eqref{equ:delta_epsilon}, which finishes the proof.
\end{proof}
\textbf{Remark: privacy accountant in practice.} 
Eq. \eqref{equ:delta_epsilon} provides the formula of $\delta$ when $\epsilon_i$ is given and $\mu_i$ is computed from \eqref{equ:mu_i}. In practice, however, we need to compute the value of privacy budget $\epsilon_i$ with a fixed $\delta$, where  $\delta$ is conventionally set to be less than $1/n$. In our experiments, we utilize the computation tool\footnote{https://github.com/woodyx218/Deep-Learning-with-GDP-Pytorch} in \cite{bu2020deep} to solve $\epsilon_i$ from \eqref{equ:delta_epsilon}. For the value of $\sigma_i$ in \eqref{equ:mu_i}, we usually have $p_i|\mathcal{D}_{i}|>\frac{R}{2C}$ in practice, then $\sigma_i=\sigma p_i|\mathcal{D}_{i}|$. In this case, the clipping bounds $R$ and $C$ are just hyperparameters that may affect the utility of the algorithm, but has no influence on the privacy analysis.

\section{Proof of Theorem \ref{thm:convergence_rate} (Convergence Rate)}
\label{apx:proof_of_thm_convergence_rate}
\begin{proof}
The proof of DP-BREM's convergence rate is based on the result of DP-BREM's aggregation error shown in Theorem \ref{thm:aggregation_error}, and LFH's convergence rate derived from LFH's aggregation error. Note that all differences between DP-BREM and LFH, including per-record clipping and the DP noise, are reflected by the aggregation error. Comparing with the aggregation error of $O(\rho^2|\mathcal{B}|/n)$ (ignoring constants and higher order terms) in LFH \cite[Lemma~9]{karimireddy2021learning}, our aggregation error shown in \eqref{equ:E_m_t_m_t_informal} replaces the term $|\mathcal{B}|$ by $|\mathcal{B}|+\sqrt{d}\sigma/q$, which means a slower convergence due to DP noise. Then, following the result in \cite[Theorem~VI]{karimireddy2021learning} and its informal version in \eqref{equ:convergence_rate_of_LFH}, we get the convergence rate of our algorithm as in  \eqref{equ:convergence_rate}. Note that our aggregation utilizes a client-level sampling rate $q$, i.e., approximate $nq$ clients participate in the aggregation for one iteration. We need to replace the term of $\frac{1}{n}$ in \eqref{equ:convergence_rate_of_LFH} by $\frac{1}{nq}$ in  \eqref{equ:convergence_rate}.
\end{proof}

\section{Detailed Steps of DP-BREM\textsuperscript{+} in Figure \ref{fig:framework_secure_aggregation}}
\label{apx:detailed_steps_of_secure_aggregation}

\textcircled{\small 1} Proof and Shares Generation: $\bm{z}_i, \mathsf{Valid}(\cdot)\rightarrow [\bm{z}_i]_j, [\pi_i]_j~(\forall j\neq i)$. For generating the proof, client $\mathsf{C}_i$ first evaluates the circuit $\mathsf{Valid}(\cdot)$ on its private input $\bm{z}_i$ to obtain the value of every wire in the arithmetic circuit corresponding to the computation of $\mathsf{Valid}(\bm{z}_i)$, then uses these wire values to generate the proof $\pi_i$ (refer to \cite{corrigan2017prio,roy2022eiffel} for the detailed format). Then, client $\mathsf{C}_i$ splits the private input $\bm{z}_i$ and proof $\pi_i$ to generate shares $[\bm{z}_i]_j$ and $ [\pi_i]_j~(\forall j\neq i)$, and send them to other clients $\{\mathsf{C}_j\}_{\forall j\neq i}$ via Shamir's secret sharing.

\textcircled{\small 2} Proof Summary Computation: $[\bm{z}_i]_j, [\pi_i]_j~(\forall j\neq i)\rightarrow[\sigma_i]_j~(\forall j\neq i)$. Each client except $\mathsf{C}_i$ first verifies the validity of the received secret shares via verifiable secret shares \cite{feldman1987practical}, and then locally constructs the shares of every wire in $\mathsf{Valid}(\bm{z}_i)$ via affine operations on the shares $[\bm{z}_i]_j$ and$ [\pi_i]_j$ to get the shares of proof summary $[\sigma_i]_j$ (refer to \cite{roy2022eiffel} for the detailed format), which will be sent to the server.

\textcircled{\small 3} Proof Summary Verification: $[\sigma_i]_j~(\forall j\neq i)\rightarrow\mathsf{Valid}(\bm{z}_i)$. After receiving shares of proof summary $[\sigma_i]_j(\forall j\neq i)$ from clients $\{\mathsf{C}_j\}_{\forall j\neq i}$, the server recovers the value of $\sigma_i$ via robust reconstruction, which is resilient to incorrect shares submitted by the malicious clients, and then checks the values in proof summaries. Finally, the validation result $\mathsf{Valid}(\bm{z}_i)=1$ if and only if $\sigma_i$ has the correct value.

\textcircled{\small 4} Random Numbers Generation: $l,d\rightarrow \{([u_k]_j,[v_k]_j)\}_{k=1}^{\lceil d/2 \rceil}~(\forall j)$. In this step, clients jointly generate the shares of $\lceil d/2 \rceil$-pairs of random numbers $\{(u_k,v_k)\}_{k=1}^{\lceil d/2 \rceil}$, where all of them are i.i.d. from uniform distribution in the range $[0,1]$. Denote $l$ as the fractional precision of the power 2 ring representation of real numbers. To obtain the share of one random number $u$, each client $\mathsf{C}_i~(\forall i)$ generates $l$ random bits in the binary filed $\mathbb{F}_2$, denoted by a binary vector $\bm{b}_i$ with length $l$, then generate and distributes the shares $[\bm{b}_i]_j$ to other clients (via Shamir's secret sharing). After receiving all shares from other clients, each client $\mathsf{C}_j~(\forall j)$ locally adds these shares to get $[\bm{b}]_j=[\sum_i\bm{b}_i]_j\in\mathbb{F}_2^l$, where vector $\bm{b}\in\mathbb{F}_2^l$ is actually the bitwise XOR of vectors $\{\bm{b}_i\}_{\forall i}$ because the computation is implemented in the binary field $\mathbb{F}_2^l$. We define the binary vector $\bm{b}$ as the binary representation of the fractional part of $u\in[0,1]$. Note that the Shamir's secret sharing scheme of Phase 1 is implemented in a finite filed $\mathbb{F}_{2^K}$, where $K>l$. Therefore, the client $\mathsf{C}_j$ can locally compute the arithmetic share $[u]_j\in\mathbb{F}_{2^K}$ from the share of binary representation $[\bm{b}]_j\in\mathbb{F}_{2}^l$. Since all possible discrete values with power 2 ring representation evenly span the range $[0,1]$, the generated random real number $u$ is uniformly distributed in $[0,1]$.

\textcircled{\small 5} Transformation to Gaussian Distribution: $\{([u_k]_j,[v_k]_j)\}_{k=1}^{\lceil d/2 \rceil}~(\forall j)\rightarrow [\bm{\xi}]_j~(\forall j)$. For each pair of $(u_k,v_k)$, clients can jointly compute a secret sharing of $a_k=\sqrt{-2\ln(u_k)}\cdot\cos(2\pi v_k)$ and of $b_k=\sqrt{-2\ln(u_k)}\cdot\sin(2\pi v_k)$ by utilizing Secure Multiparty Computation (MPC) protocols \cite{keller2020mp} that guarantees security (i.e., privacy and integrity) with malicious minority. According to Box and Muller Transformation \cite{box1958note}, $a_k$ and $b_k$ are i.i.d. random variables from the Gaussian distribution with mean 0 and variance 1. Then, by locally implementing secure multiplication with a constant (i.e., $R\sigma$),  $a_k$ and $b_k$ are i.i.d random numbers following a Gaussian distribution with the desired standard deviation of $R\sigma$. Finally, by concatenating shares of $d$ numbers in   $\{(a_k,b_k)\}_{k=1}^{\lceil d/2 \rceil}$, clients obtains the shares of random vector $\bm{\xi}$ with length $d$ from Gaussian distribution $\mathcal{N}(0,R^2\sigma^2 \mathbf{I}_d)$.

\textcircled{\small 6} Shares Aggregation: $\{[\bm{z}_i]_j\}_{i\in\mathcal{I}_\mathsf{Valid}}, [\bm{\xi}]_j~(\forall j)\rightarrow[\sum_{i\in\mathcal{I}_\mathsf{Valid}}\bm{z}_i+\bm{\xi}]_j~(\forall j)$. Due to the linearity of Shamir's secret sharing scheme, each client $\mathsf{C}_j$ can locally compute the share of the noisy aggregate by adding the shares of all valid inputs and the share of Gaussian noise: $[\sum_{i\in\mathcal{I}_\mathsf{Valid}}\bm{z}_i+\bm{\xi}]_j=\sum_{i\in\mathcal{I}_\mathsf{Valid}}[\bm{z}_i]_j+[\bm{\xi}]_j$, and sends that share to the server. 

\textcircled{\small 7} Noisy Aggregate Reconstruction: $[\sum_{i\in\mathcal{I}_\mathsf{Valid}}\bm{z}_i+\bm{\xi}]_j~(\forall j)\rightarrow\sum_{i\in\mathcal{I}_\mathsf{Valid}}\bm{z}_i+\bm{\xi}$. After receiving all shares of the noisy aggregate, the server recovers it using robust reconstruction. % that tolerates some incorrect shares sent by malicious clients.

\section{Experimental Setup}
\label{apx:experimental_setup}

\textbf{FL Implementation.}
Due to limited resources, we simulate the distributed training of FL by running a single machine sequentially for clients and the server. The real-world implementation of FL is out of the scope of this paper.

\textbf{Datasets (non-IID) and Model Architecture.} We use three datasets for our experiments: MNIST \cite{lecun1998mnist} CIFAR-10 \cite{krizhevsky2009learning} and FEMNIST \cite{caldas2018leaf}, where the number of total clients is $n=100$ for the former two datasets, and $n=400$ for FEMNIST dataset. Note that the MNIST and CIFAR-10 datasets only have 10 classes, while the FEMNIST dataset has 62 classes (including 10 digits, 26 lowercase letters, and 26 uppercase letters). For the MNIST dataset, we use the CNN model from PyTorch example\footnote{\url{https://github.com/pytorch/opacus}}. For the CIFAR-10 dataset, we use the CNN model from the TensorFlow tutorial\footnote{\url{https://www.tensorflow.org/tutorials/images/cnn}}, like the previous works \cite{zheng2021federated,mcmahan2018learning}. To simulate the heterogeneous data distributions, we make non-i.i.d. partitions of the datasets, which is a similar setup as \cite{zheng2021federated} and is described below. For the FEMNIST dataset, we use a CNN model with 2 convolution layers and 2 fully connected layers. 

1) Non-IID MNIST: The MNIST dataset contains 60,000 training images and 10,000 testing images of 10 classes. There are 100 clients, each holds 600 training images. We sort the training data by digit label and evenly divide it into 400 shards. Each client is assigned four random shards of the data, so that most of the clients have examples of three or four digits.

2) Non-IID CIFAR-10: The CIFAR-10 dataset contains 50,000 training images and 10,000 test images of 10 classes. There are 100 clients, each holds 500 training images. We sample the training images for each client using a Dirichlet distribution with hyperparameter 0.9.

3) Non-IID FEMNIST: The FEMNIST dataset is pre-partitioned based on the writer of the characters, simulating a non-IID scenario. Each client's local dataset consists of samples written by individual users, introducing variability in handwriting styles. We use the TensorFlow-Federated API\footnote{\url{https://www.tensorflow.org/federated/api_docs/python/tff/simulation/datasets/emnist/load_data}} to load the first 400 partitions, representing data from 400 clients. Unlike the MNIST dataset, which includes only digits, FEMNIST includes both digits and uppercase and lowercase letters, spanning 62 classes (10 digits + 52 letters).

\textbf{Byzantine Attacks.}
We consider four different Byzantine attacks in our experiments.   

1) ALIE ("a little is enough") \cite{baruch2019little}. The attacker uses the empirical variance (estimated from the data of corrupted clients) to determine the perturbation range, in which the attack can deviate from the mean without being detected or filtered out.

2) IPM (inner-product manipulation) \cite{xie2020fall}. The attacker manipulates the submitted gradient to be the negative direction of the mean of other honest clients' gradients, thus the negative inner-product of the true gradient and the aggregation prevents the descent of the loss. Note that the original IPM attack assumes the \emph{omniscient} attacker (i.e., knows the data/gradient of all other clients), which is contradicted to our assumption that the attacker only has access to the data of the corrupted clients (otherwise, the privacy is already leaked and no need to provide DP). Thus, in the experiments, we use the data of corrupted clients to estimate the aggregated gradient of honest clients, and then manipulate the inner-product (i.e., non-omniscient attack).

3) LF (label-flipping). The attacker modifies the labels of all examples of corrupted clients' data and computes a new gradient, then uses a gradient replacement strategy (similar to \cite{bagdasaryan2020backdoor}) to enhance the impact on the global model. Specifically, the attacker computes a benign gradient $\mathbf{g}_\text{benign}$ with non-flipped labels and also a bad gradient $\mathbf{g}_\text{bad}$ with flipped labels. Finally, each malicious client submits the difference $\mathbf{g}_\text{bad}-\mathbf{g}_\text{benign}$ to the server to achieve the goal that the aggregated global gradient (averaged over all clients) is close to $\mathbf{g}_\text{bad}$.

4) MTB ("manipulating-the-Byzantine") \cite{shejwalkar2021manipulating}. The attacker computes a benign reference aggregate using some benign data samples obtained from corrupted clients, then computes a malicious perturbation vector, and an optimized scaling factor to get the malicious update with the goal of evading detection by robust aggregation algorithms. The optimization of the scaling factor can be tailored or agnostic to the aggregator. Considering our scheme and the baselines do not detect malicious clients, we use the agnostic setting (including min-max and min-sum) for simplicity because tailoring MTB attack to all defense aggregators is nontrivial. In our experiments, we implement the min-max attack since it has a larger impact on the global model.

\textbf{Byzantine Defenses with DP.}
We compare the performance of our approaches with the following five competitors against Byzantine attacks. All of them satisfy record-level DP via record-level clipping and DP noise added to the local gradient/momentum. Note that privacy budget $\epsilon$ in Theorem \ref{thm:privacy_analysis} is the same for different clients because clients have the same size of local datasets $|\mathcal{D}_i|$ and same record-level sampling rate (i.e., same $|\mathcal{D}_i|$ and $p_i$ for different clients $\mathsf{C}_i$). 

1) DP-FedSGD. Note that the original DP-FedSGD in \cite{mcmahan2018learning} clips the client gradient to achieve \emph{client-level} DP. For a fair comparison, we also implement record-level gradient clipping on top of the original DP-FedSGD to guarantee record-level DP. Though DP-FedSGD is not designed for robustness, its client-level clipping can restrict malicious clients' capability, thus providing some level of Byzantine robustness. We take this as a baseline to illustrate that client-level clipping can provide some level of robustness, but may not be enough to defend against strong attackers (either advanced attack strategy or a larger number of malicious clients).

2) DP-CM. As a baseline that adds DP to median-based robust aggregators (discussed in Section \ref{sec:challenges_and_baseline}), we implement the Byzantine-robust aggregator Coordinate-wise Median (CM) \cite{yin2018byzantine} with DP noise added to the median result. Note that only DP-CM uses median-based aggregation, while other methods use average-based aggregation. As discussed in Section \ref{sec:problem_statement} and Example \ref{exp:sensitivity_average_median}, the median-based aggregation has large sensitivity and poor privacy-utility tradeoff. 

3) DDP-RP \cite{wang2022privacy}. By leveraging encryption techniques, DDP-RP guarantees Distributed DP with secure aggregation. It allows clients to add smaller noise in the local gradient than the Local DP, with the knowledge of the lower bound of trusted clients, thus providing enhanced privacy-utility tradeoff than local DP protocols. To guarantee Byzantine robustness, DDP-RP uses range-proof (RP) technologies to securely verify whether the local model/gradient weights are in a (predefined) bounded range.

4) DP-RSA \cite{zhu2022bridging}. It replaces the \emph{value} aggregation to \emph{sign} aggregation, which  provides robustness because each client has limited impact on the aggregation. The DP noise is added to the local gradient before the sign operation.

5) DP-LFH.  The baseline (Section \ref{sec:challenges_and_baseline}) directly combines DP-SGD based momentum with LFH. Each client adds DP noise to the local gradient, and then computes the local momentum to be aggregated with centered clipping by the server. 

\section{Parameters in Experiments}
\label{apx:parameters_in_experiments}

\textbf{Basic Parameters.}
\begin{itemize}[leftmargin=7pt]
    \item Total number of iterations $T$: 1000 for MNIST and FEMNIST; 2000 for CIFAR-10.
    \item Learning rate $\eta_t$: For MNIST and FEMNIST datasets, $\eta_t$ is linearly reduced from 0.1 to 0.01 w.r.t. iterations. For CIFAR-10 dataset, $\eta_t$ is linearly reduced from 0.05 to 0.0025 w.r.t. iterations.
\end{itemize}

\textbf{DP-related Parameters.}
\begin{itemize}[leftmargin=7pt]
    \item Record-level sampling rate $p_i$: 0.05 for all $i$ on MNIST and CIFAR-10; 0.1 for all $i$ on FEMNIST (because each client has fewer data records).
    \item Client-level sampling rate $q$: the default value is 1. We evaluate the influence of $q$ (from 0.2 to 1) on the accuracy in Table \ref{tab:vary_q}.
    \item Record-level clipping bound $R$: linearly reduced from $R_0$ to $0.3R_0$ w.r.t. iterations. Note that in Figure \ref{fig:mnist_vary_R}, the different value of $R$ in x-axis is the value of the above $R_0$. For MNIST and FEMNIST, we set $R_0=10$ by default.  For CIFAR-10, we set $R_0=20$ by default, but $R_0=15$ only for the case of 
    $\epsilon=2$ in Figure \ref{fig:vary_epsilon}.
    \item Privacy parameter $\delta$ in DP: $10^{-6}$
    \item Noise multiplier $\sigma$: For MNIST (with $T=1000$ and each client has $|\mathcal{D}|_i=60000/100=600$ examples), $\sigma\in\{0.15, 0.06, 0.029, 0\}$ for $\epsilon\in\{1,3,8,\text{inf}\}$. For CIFAR-10 (with $T=2000$ and each client has $|\mathcal{D}|_i=50000/100=500$ examples), $\sigma\in\{0.14, 0.077, 0.042, 0\}$  for $\epsilon\in\{2,4,9,\text{inf}\}$. For FEMNIST (with $T=1000$ and each client has around $300$ examples), $\sigma\in\{0.16, 0.09, 0.047, 0\}$  for $\epsilon\in\{2,4,9,\text{inf}\}$.
\end{itemize}

\textbf{Robustness-related Parameters.}
\begin{itemize}[leftmargin=7pt]
    \item Client-level clipping bound $C$ (only for DP-BREM and DP-LFH): linearly reduced from $C_0$ to $0.3C_0$ w.r.t. iterations, where $C_0=1$ for MNIST, $C_0=5$ for CIFAR-10, and $C_0=2$ for FEMNIST.
    \item Momentum parameter $\beta=0.9$ for all three datasets.
\end{itemize}

\section{Proof of Theorem \ref{thm:aggregation_error} (Aggregation Error)}
\label{apx:proof_of_thm_aggregation_error}

Before proving Theorem \ref{thm:aggregation_error}, we first show some notations and assumptions. In $t$-th iteration, denote the selected honest clients $\mathcal{H}_t=\mathcal{H}\cup \mathcal{I}_t$ and selected Byzantine clients $\mathcal{B}_t=\mathcal{B}\cup \mathcal{I}_t$. For momentum updates in $t$-th iteration, we simplify the following notation (ignoring the subscript $t$) for convenience,
\begin{align*}
    \bm{y}_0\coloneqq \tilde{\bm{m}}_{t-1},\quad
    \bm{y}_i\coloneqq \bm{y}_0+\bm{z}_i \quad \text{with } \bm{z}_i\coloneqq \mathsf{Clip}_C(\bar{\bm{m}}_{t,i} - \bm{y}_0)
\end{align*}
where $\bar{\bm{m}}_{t,i}$ is the client momentum computed from gradient with record-level clipping. Then, we can rewrite the noisy global momentum as $\tilde{\bm{m}}_{t}=\frac{\sum_{i\in\mathcal{I}_t} \bm{y}_i +\bm{\xi}}{|\mathcal{I}_t|}$, where $\bm{\xi}\sim\mathcal{N}(0,R^2\sigma^2)$.

We assume $\{\bm{m}_{t,i}\}_{i\in\mathcal{H}}$ are i.i.d. with expectation $\bm{\mu}\coloneqq\mathbb{E}[\bm{m}_{t,i}]$ and variance  is bounded (in terms of L2-norm) $\mathbb{E}\|\bm{m}_{t,i}-\bm{\mu}\|^2\leqslant\rho^2$. Therefore, the record-level gradient clipped ones $\{\bar{\bm{m}}_{t,i}\}_{i\in\mathcal{H}}$ are also i.i.d., and we denote the expectation $\bar{\bm{\mu}}\coloneqq\mathbb{E}[\bar{\bm{m}}_{t,i}]$. Due to the clipping operation, the variance is reduced, and we assume $\mathbb{E}\|\bar{\bm{m}}_{t,i}-\bar{\bm{\mu}}\|^2\leqslant[\rho R/(R+c)]^2$, where $R$ is the record-level clipping bound and $c$ is some positive constant. Also, there is a gap between $\bm{\mu}$ and $\bar{\bm{\mu}}$ and we assume $\|\bar{\bm{\mu}}-\bm{\mu}\|^2\leqslant(\kappa/R)^2$. We assume $\bm{y}_0$ is not very far away from both $\bm{\mu}$ and $\bar{\bm{\mu}}$: $\|\bm{y}_0-\bar{\bm{\mu}}\|^2\leqslant\phi^2$ and $\|\bm{y}_0-\bm{\mu}\|^2\leqslant\tau^2$.

\begin{proof}
Our proof heavily relies on several useful lemmas shown in Appendix \ref{apx:useful_lemmas}, where Lemma \ref{lem:useful_inequality} splits the L2-norm of summation of vectors into \emph{weighted} summation of vectors'L2-norm, and Lemma \ref{lem:KKT_optimal_solution} provides the optimal strategy to choose these weights. 

We first consider the bound of $\mathbb{E}\|\tilde{\bm{m}}_{t}-\bm{\mu}\|^2$. Recall that the selected client set is $\mathcal{I}_t=\mathcal{H}_t\cup\mathcal{B}_t$, where the honest clients set $\mathcal{H}_t$ and Byzantine clients set $\mathcal{B}_t$ are disjoint. For any positive values $\gamma_1, \gamma_2, \gamma_3>0$ with $\gamma_1 + \gamma_2 + \gamma_3 = 1$, we have
{\small
\begin{align*}
    &\qquad|\mathcal{I}_t|^2\cdot\mathbb{E}\|\tilde{\bm{m}}_{t}-\bm{\mu}\|^2
     = |\mathcal{I}_t|^2\cdot\mathbb{E}\left\|\frac{\left(\sum_{i\in\mathcal{I}_t} \bm{y}_i\right) +\bm{\xi}}{|\mathcal{I}_t|}-\bm{\mu}\right\|^2 \\
    &\overset{(\mathsf{a})}{=} \mathbb{E}\left\| \sum\nolimits_{i\in\mathcal{H}_t}(\bm{y}_i-\bm{\mu}) + \sum\nolimits_{j\in\mathcal{B}_t}(\bm{y}_j-\bm{\mu}) + \bm{\xi}\right\|^2 \\
    &\overset{(\mathsf{b})}{\leqslant} \frac{1}{\gamma_1}\underbrace{\mathbb{E}\left\|\sum\nolimits_{i\in\mathcal{H}_t}(\bm{y}_i-\bm{\mu})\right\|^2}_{\mathcal{T}_1}
    + \frac{1}{\gamma_2}\underbrace{\mathbb{E}\left\|\sum\nolimits_{j\in\mathcal{B}_t}(\bm{y}_j-\bm{\mu})\right\|^2}_{\mathcal{T}_2}
    + \frac{1}{\gamma_3}\underbrace{\mathbb{E}\|\bm{\xi}\|^2}_{\mathcal{T}_3}
\end{align*} }
where $(\mathsf{a})$ used the fact that $\mathcal{I}_t=\mathcal{H}_t\cup\mathcal{B}_t$ and $\mathcal{H}_t\cap\mathcal{B}_t=\emptyset$; $(\mathsf{b})$ used the result in Lemma \ref{lem:useful_inequality}. From the above inequality, the error can be decomposed into three terms: $\mathcal{T}_1$ corresponds to the error of honest clients (who follow the protocol honestly) due to the randomness of clients' training data and bias introduced by clipping,  $\mathcal{T}_2$ corresponds to the error of Byzantine clients (who submit arbitrary $\bar{\bm{m}}_{t,i}$ but will be clipped by the server), and $\mathcal{T}_3$ corresponds to the error introduced by added Gaussian noise for privacy purpose. We will analyze each of the three errors in turn.

\textbf{Bounding $\mathcal{T}_1$.} 
Since $\mathbb{E}\|X\|^2=\|\mathbb{E}[X]\|^2 + \mathbb{E}\|X-\mathbb{E}[X]\|^2$ for any random vector $X$,  we can rewrite $\mathcal{T}_1$ as
{\small
\begin{align*}
    \mathcal{T}_1=\underbrace{{\left\|\sum\nolimits_{i\in\mathcal{H}_t}(\mathbb{E}[\bm{y}_i]-\bm{\mu})\right\|^2}}_{\mathcal{T}_{11}} + \underbrace{{\mathbb{E}\left\|\sum\nolimits_{i\in\mathcal{H}_t}(\bm{y}_i-\mathbb{E}[\bm{y}_i])\right\|^2}}_{\mathcal{T}_{12}}
\end{align*}}
where $\mathcal{T}_{11}$ corresponds to the bias introduced by the clipping operations, and $\mathcal{T}_{12}$ is the variance of honest clients' submissions. Rewrite $\bm{z}_i=\alpha_i\cdot(\bar{\bm{m}}_{t,i} - \bm{y}_0)$, where $\alpha_i=\min\{1, \frac{C}{\|\bar{\bm{m}}_{t,i} - \bm{y}_0\|}\}\in(0,1]$. Let $\mathbbm{1}_i$ be an indicator variable denoting if the momentum difference $\bar{\bm{m}}_{t,i} - \bm{y}_0$ was  clipped. Therefore, if $\|\bar{\bm{m}}_{t,i} - \bm{y}_0\|\leqslant C$, then $\mathbbm{1}_i=0$ and  $\alpha_i=1$; if $\|\bar{\bm{m}}_{t,i} - \bm{y}_0\|>C$, then $\mathbbm{1}_i=1$ and  $0<\alpha_i<1$.  Then, for each $i\in\mathcal{H}_t$, we have
{\small
\begin{align*}
    &\quad\mathbb{E}\|\bm{z}_i-(\bar{\bm{m}}_{t,i} - \bm{y}_0)\|
    = \mathbb{E}[(1-\alpha_i)\cdot\|\bar{\bm{m}}_{t,i} - \bm{y}_0\|]  \\
    &\leqslant \mathbb{E}[\mathbbm{1}_i\cdot\|\bar{\bm{m}}_{t,i} - \bm{y}_0\|] 
    \leqslant \frac{\mathbb{E}[\mathbbm{1}_i\cdot\|\bar{\bm{m}}_{t,i} - \bm{y}_0\|^2]}{C}
    \leqslant \frac{\mathbb{E}\|\bar{\bm{m}}_{t,i} - \bm{y}_0\|^2}{C}
\end{align*}}
where
{\small
\begin{align*}
    &\qquad\mathbb{E}\|\bar{\bm{m}}_{t,i} - \bm{y}_0\|^2 
    =\mathbb{E}\|(\bar{\bm{m}}_{t,i}-\bar{\bm{\mu}})+(\bar{\bm{\mu}}-\bm{y}_0)\|^2 \\
    &\overset{(\mathsf{a})}{\leqslant} \frac{\mathbb{E}\|\bar{\bm{m}}_{t,i}-\bar{\bm{\mu}}\|^2}{\gamma} + \frac{\mathbb{E}\|\bar{\bm{\mu}}-\bm{y}_0\|^2}{1-\gamma} 
    \leqslant \frac{[\rho R/(R+c)]^2}{\gamma} + \frac{\phi^2}{1-\gamma}\\
    &
    \overset{(\mathsf{b})}{=}[\rho R/(R+c)+\phi]^2
\end{align*}}
where $(\mathsf{a})$ is obtained by using Lemma \ref{lem:useful_inequality} for any $\gamma\in(0,1)$; $(\mathsf{b})$ is obtained by taking $\gamma=\frac{\rho R/(R+c)}{\rho R/(R+c)+\phi}$. Therefore,
{\small
\begin{align*}
    &\qquad\|\mathbb{E}[\bm{y}_i]-\bm{\mu}\|^2
    \overset{(\mathsf{a})}{=}\|\mathbb{E}[\bm{y}_0+\bm{z}_i-\bar{\bm{m}}_{t,i}] + (\bar{\bm{\mu}}-\bm{\mu})\|^2 \\
    &\overset{(\mathsf{b})}{\leqslant} \frac{\|\mathbb{E}[\bm{z}_i-(\bar{\bm{m}}_{t,i} - \bm{y}_0)]\|^2}{\gamma} + \frac{\|\bar{\bm{\mu}}-\bm{\mu}\|^2}{1-\gamma} \\
    &\overset{(\mathsf{c})}{\leqslant}\frac{\left(\mathbb{E}\|\bm{z}_i-(\bar{\bm{m}}_{t,i}-\bm{y}_0)\|\right)^2}{\gamma} + \frac{\|\bar{\bm{\mu}}-\bm{\mu}\|^2}{1-\gamma} \\
    &\overset{(\mathsf{d})}{\leqslant}\frac{[\rho R/(R+c)+\phi]^4}{\gamma C^2} + \frac{(\kappa/R)^2}{1-\gamma} \\
    &\overset{(\mathsf{e})}{=}\left[\frac{[\rho R/(R+c)+\phi]^2}{C} + \kappa/R\right]^2 
\end{align*}}
where $(\mathsf{a})$ is obtained from the definitions $\bm{y}_i=\bm{y}_0+\bm{z}_i$ and $\mathbb{E}[\bar{\bm{m}}_{t,i}]=\bar{\bm{\mu}}$; $(\mathsf{b})$ is obtained by using Lemma \ref{lem:useful_inequality} for any $\gamma\in(0,1)$; $(\mathsf{c})$ is derived from Jensen's Inequality, i.e., $\mathbb{E}[f(X)]\geqslant f(\mathbb{E}[X])$ for convex function $f(X)\coloneqq\|X\|$; $(\mathsf{d})$ is obtained by plugging in the previous two inequalities; $(\mathsf{e})$ is obtained by taking $\gamma=\frac{[\rho R/(R+c)+\phi]^2}{[\rho R/(R+c)+\phi]^2+C\kappa/R}$. Now, we can bound $\mathcal{T}_{11}$ by:
{\small
\begin{align*}
    \mathcal{T}_{11}
    \leqslant |\mathcal{H}_t|\sum_{i\in\mathcal{H}_t} \left\|\mathbb{E}[\bm{y}_i]-\bm{\mu}\right\|^2 
    \leqslant |\mathcal{H}_t|^2\left[\frac{[\rho R/(R+c)+\phi]^2}{C} + \frac{\kappa}{R}\right]^2
\end{align*}}
where the first inequality is obtained by using Lemma \ref{lem:useful_inequality}. On the other hand, we can bound $\mathcal{T}_{12}$ by
{\small
\begin{align*}
    \mathcal{T}_{12} 
    &\overset{(\mathsf{a})}{=} \mathbb{E}\sum\nolimits_{i\in\mathcal{H}_t}\|\bm{y}_i-\mathbb{E}[\bm{y}_i]\|^2
    \overset{(\mathsf{b})}{\leqslant}
    \mathbb{E}\sum\nolimits_{i\in\mathcal{H}_t}\|\bar{\bm{m}}_{t,i}-\mathbb{E}[\bar{\bm{m}}_{t,i}]\|^2 \\
    &\leqslant |\mathcal{H}_t|\cdot[\rho R/(R+c)+\phi]^2
\end{align*}}
where $(\mathsf{a})$ used the assumption that
$\{\bar{\bm{m}}_{t,i}\}_{i\in\mathcal{H}_t}$ are independent, then the random variables $\{\bm{y}_i\}_{i\in\mathcal{H}_t}$ are also independent; $(\mathsf{b})$ used contractivity of a clipping (projection) step. Thus,
{\small
\begin{align*}
    \mathcal{T}_{1}&=\mathcal{T}_{11} +\mathcal{T}_{12} \leqslant |\mathcal{H}_t|^2\left(\frac{\psi^2}{C} + \frac{\kappa}{R+c}\right)^2 + |\mathcal{H}_t|\psi^2\\
    &\leqslant \frac{4|\mathcal{H}_t|^2\psi^4}{C^2}+|\mathcal{H}_t|\psi^2 
    \leqslant \left(\frac{2|\mathcal{H}_t|\psi^2}{C}+\sqrt{|\mathcal{H}_t|}\psi\right)^2
\end{align*}}
where $\psi\coloneqq\rho R/(R+c)+\phi$, and the second inequality holds with the assumption
$C\leqslant\psi^2R/\kappa$ (thus we have $\frac{\psi^2}{C}\geqslant\frac{\kappa}{R}$)

\textbf{Bounding $\mathcal{T}_2$.}
For any Byzantine client $\mathsf{C}_j$ with $j\in\mathcal{B}_t$, the error is bounded by the clipping step
{\small
\begin{align*}
    \mathbb{E}\|\bm{y}_j-\bm{\mu}\|^2
    &= \mathbb{E}\|\bm{z}_j + (\bm{y}_0-\bm{\mu})\|^2 
    \overset{(\mathsf{a})}{\leqslant}
    \frac{\mathbb{E}\|\bm{z}_j\|^2}{\gamma} + \frac{\mathbb{E}\|\bm{y}_0-\bm{\mu}\|^2}{1-\gamma}\\
    &\overset{(\mathsf{b})}{\leqslant} 
    \frac{C^2}{\gamma} + \frac{\tau^2}{1-\gamma}
    \overset{(\mathsf{c})}{=}(C+\tau)^2
\end{align*}}
where $(\mathsf{a})$ is obtained by using in Lemma \ref{lem:useful_inequality} for any $\gamma\in(0,1)$; $(\mathsf{b})$ is obtained by the definition of $\bm{z}_j$ and the assumption; $(\mathsf{c})$ is obtained by taking $\gamma=\frac{C}{C+\tau}$. Then, by using Lemma \ref{lem:useful_inequality}, we have
{\small
\begin{align*}
    \mathcal{T}_2
    \leqslant
    |\mathcal{B}_t|\cdot\sum_{j\in\mathcal{B}_t} \mathbb{E}\|\bm{y}_j-\bm{\mu}\|^2
    \leqslant |\mathcal{B}_t|^2(C+\tau)^2
\end{align*}}

\textbf{Bounding $\mathcal{T}_3$.} 
Since the random noise  $\bm{\xi}\sim\mathcal{N}(0,R^2\sigma^2 \mathbf{I}_d)\in\mathbb{R}^d$, we have $\mathcal{T}_3=dR^2\sigma^2$.

\textbf{Putting It All Together.}
Combining all terms, we have
{\small
\begin{align}
    \label{equ:E_m_t_mu}
    &\qquad\mathbb{E}\|\tilde{\bm{m}}_{t}-\bm{\mu}\|^2 \notag\\
    &\leqslant\frac{1}{|\mathcal{I}_t|^2}\left[\frac{1}{\gamma_1}\left(\frac{2|\mathcal{H}_t|\psi^2}{C}+\sqrt{|\mathcal{H}_t|}\psi\right)^2
    + \frac{|\mathcal{B}_t|^2(C+\tau)^2}{\gamma_2} +\frac{dR^2\sigma^2}{\gamma_3}\right] \notag\\
    &\overset{(\mathsf{a})}{=}\frac{1}{|\mathcal{I}_t|^2}\left[\left(\frac{2|\mathcal{H}_t|\psi^2}{C}+\sqrt{|\mathcal{H}_t|}\psi\right)
    + |\mathcal{B}_t|(C+\tau) +\sqrt{d}R\sigma\right]^2 \notag\\
    &\overset{(\mathsf{b})}{\leqslant}\frac{1}{|\mathcal{I}_t|^2}\left[\frac{2\kappa|\mathcal{H}_t|(\rho+\phi)^2}{\phi^2\cdot R} + (|\mathcal{B}_t|+\sqrt{d}\sigma)R + \sqrt{|\mathcal{H}_t|}(\rho+\phi) + |\mathcal{B}_t|\tau\right]^2 \notag\\
    &\overset{(\mathsf{c})}{=}\frac{1}{|\mathcal{I}_t|^2}\left[\frac{2(\rho+\phi)}{\phi}\sqrt{2\kappa|\mathcal{H}_t|(|\mathcal{B}_t|+\sqrt{d}\sigma)}
    + \sqrt{|\mathcal{H}_t|}(\rho+\phi) + |\mathcal{B}_t|\tau\right]^2 \notag\\
    &=\left[\underbrace{\frac{2(\rho+\phi)}{|\mathcal{I}_t|\phi}\sqrt{2\kappa|\mathcal{H}_t|(|\mathcal{B}_t|+\sqrt{d}\sigma)}
    + \frac{\sqrt{|\mathcal{H}_t|}(\rho+\phi)}{|\mathcal{I}_t|} + \frac{|\mathcal{B}_t|\tau}{|\mathcal{I}_t|}}_{\eqqcolon\Phi}\right]^2
\end{align}}
where $(\mathsf{a})$ is obtained by taking $\gamma_k=\frac{\sqrt{\Phi_k}}{\sqrt{\Phi_1}+\sqrt{\Phi_2}+\sqrt{\Phi_3}}$ for $k=1,2,3$, where $\Phi_1\coloneqq\left(\frac{2|\mathcal{H}_t|\psi^2}{C}+\sqrt{|\mathcal{H}_t|}\psi\right)^2, \Phi_2\coloneqq|\mathcal{B}_t|^2(C+\tau)^2, \Phi_3\coloneqq dR^2\sigma^2$; $(\mathsf{b})$ is obtained by considering $\psi=\rho R/(R+c)+\phi\leqslant(\rho+\phi)$ and taking the clipping bound $C=\frac{\phi^2}{\kappa}R$, which makes the previous assumption $C\leqslant\psi^2R/\kappa$ holds; $(\mathsf{c})$ is obtained by taking $R=\frac{\rho+\phi}{\phi}\sqrt{\frac{2\kappa|\mathcal{H}_t|}{|\mathcal{B}_t|+\sqrt{d}\sigma}}$, where $|\mathcal{H}_t|\approx|\mathcal{H}|q$ and $|\mathcal{B}_t|\approx|\mathcal{B}|q$. Since $|\mathcal{H}|+|\mathcal{B}|=n$ and $|\mathcal{B}|/n<1/2$, we can approximate the tuning by $R\propto  O(\rho\sqrt{n/(|\mathcal{B}_t|+\sqrt{d}\sigma/q})$.

\textbf{The Final Result.} On the other hand, we have
{\small
\begin{align}
    \label{equ:E_mu_m_t}
    \mathbb{E}\|\bm{\mu}-\bm{m}_t^*\|^2
    &=\frac{1}{|\mathcal{H}|^2}\mathbb{E}\left\|\sum\nolimits_{i\in\mathcal{H}}(\bm{m}_{t,i}-\bm{\mu})\right\|^2 \notag\\
    &=\frac{1}{|\mathcal{H}|^2}\mathbb{E}\sum\nolimits_{i\in\mathcal{H}}\left\|\bm{m}_{t,i}-\bm{\mu}\right\|^2 \leqslant\frac{\rho^2}{|\mathcal{H}|}
\end{align}}
where the first equality is obtained by the definition of $\bm{m}_t^*$; the second equality is obtained by the fact that all honest clients' momentum $\{\bm{m}_{t,i}\}_{i\in\mathcal{H}}$ are independent with each other; and the third  equality is obtained by the assumption $\left\|\bm{m}_{t,i}-\bm{\mu}\right\|^2\leqslant\rho^2$ for $i\in\mathcal{H}$. Finally, we have
{\small
\begin{align}
    \label{equ:E_m_t_m_t}
    &\qquad\mathbb{E}\|\tilde{\bm{m}}_{t}-\bm{m}_t^*\|^2  \notag\\
    &= \mathbb{E}\|(\tilde{\bm{m}}_{t}-\bm{\mu}) + (\bm{\mu}-\bm{m}_t^*)\|^2 
    \overset{(\mathsf{a})}{\leqslant}\frac{\mathbb{E}\|\tilde{\bm{m}}_{t}-\bm{\mu}\|^2}{\gamma} + \frac{\mathbb{E}\|\bm{\mu}-\bm{m}_t^*\|^2}{1-\gamma} \notag\\
    &\overset{(\mathsf{b})}{\leqslant} \frac{\Phi^2}{\gamma} + \frac{\rho^2/|\mathcal{H}|}{1-\gamma}
    \overset{(\mathsf{c})}{=} \left(\Phi+\frac{\rho}{\sqrt{|\mathcal{H}|}}\right)^2
\end{align}}
where $(\mathsf{a})$ is obtained by using Lemma \ref{lem:useful_inequality} for any $\gamma\in(0,1)$; $(\mathsf{b})$ is obtained from \eqref{equ:E_m_t_mu} and \eqref{equ:E_mu_m_t}, where $\Phi$ is defined in \eqref{equ:E_m_t_mu}; $(\mathsf{c})$ is obtained by taking $\gamma=\frac{\Phi}{\Phi+\frac{\rho}{\sqrt{|\mathcal{H}|}}}$. Furthermore, if we assume $\phi\leqslant O(\rho)$ and $\tau\leqslant O(\rho)$, we can rewrite \eqref{equ:E_m_t_m_t} as the following version 
\begin{align*}
    \mathbb{E}\|\tilde{\bm{m}}_{t}-\bm{m}_t^*\|^2 \leqslant O\left(\frac{\rho^2(|\mathcal{B}|+\sqrt{d}\sigma/q)}{n}\right)
\end{align*}
which finishes the proof of Theorem \ref{thm:aggregation_error}.
\end{proof}

\section{Useful Lemmas}
\label{apx:useful_lemmas}

\begin{lemma}
\label{lem:useful_inequality}
For any positive real values $\alpha_1,\cdots,\alpha_K\in\mathbb{R}^{+}$ and any $d$-dimensional vectors $\bm{x}_1,\cdots,\bm{x}_K\in\mathbb{R}^d$ with L2-norm $\|\cdot\|$, we have $\left\|\sum\nolimits_{k=1}^K\bm{x}_k\right\|^2
    \leqslant
    \left(\sum\nolimits_{k=1}^K \alpha_k\right) \cdot\left(\sum\nolimits_{k=1}^K \frac{\|\bm{x}_k\|^2}{\alpha_k}\right)$.
\end{lemma}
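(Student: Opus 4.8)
\textbf{Proof Proposal for Lemma \ref{lem:useful_inequality}.}

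The plan is to recognize this as a weighted Cauchy--Schwarz inequality and prove it by a direct application of Cauchy--Schwarz after an appropriate rescaling. First I would write each summand as $\bm{x}_k = \sqrt{\alpha_k}\cdot\frac{\bm{x}_k}{\sqrt{\alpha_k}}$, so that $\sum_{k=1}^K \bm{x}_k = \sum_{k=1}^K \sqrt{\alpha_k}\cdot\frac{\bm{x}_k}{\sqrt{\alpha_k}}$. Then taking L2-norms and applying the triangle inequality followed by the Cauchy--Schwarz inequality (in the form $\sum_k a_k b_k \leqslant \sqrt{\sum_k a_k^2}\sqrt{\sum_k b_k^2}$ with $a_k = \sqrt{\alpha_k}$ and $b_k = \|\bm{x}_k\|/\sqrt{\alpha_k}$) yields
\begin{align*}
\left\|\sum\nolimits_{k=1}^K \bm{x}_k\right\| \leqslant \sum\nolimits_{k=1}^K \sqrt{\alpha_k}\cdot\frac{\|\bm{x}_k\|}{\sqrt{\alpha_k}} \leqslant \sqrt{\sum\nolimits_{k=1}^K \alpha_k}\cdot\sqrt{\sum\nolimits_{k=1}^K \frac{\|\bm{x}_k\|^2}{\alpha_k}}.
\end{align*}
Squaring both sides gives exactly the claimed bound.

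Alternatively, and perhaps more cleanly for the vector-valued case, I would expand the square directly: $\left\|\sum_k \bm{x}_k\right\|^2 = \sum_{k}\sum_{l} \langle \bm{x}_k, \bm{x}_l\rangle$. Applying $\langle \bm{x}_k, \bm{x}_l\rangle \leqslant \|\bm{x}_k\|\|\bm{x}_l\| \leqslant \frac{1}{2}\left(\frac{\alpha_l}{\alpha_k}\|\bm{x}_k\|^2 + \frac{\alpha_k}{\alpha_l}\|\bm{x}_l\|^2\right)$ (Young's inequality with the weights chosen to telescope) and summing over $k,l$ collects the terms into $\left(\sum_k \alpha_k\right)\left(\sum_l \|\bm{x}_l\|^2/\alpha_l\right)$. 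Either route is short; I would present the first since it is the most transparent.

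There is essentially no obstacle here — this is a standard inequality and the proof is a one-liner. The only thing to be careful about is that all $\alpha_k$ are strictly positive (given in the hypothesis), so the division by $\sqrt{\alpha_k}$ is well-defined, and that the inequality is stated for general L2-norms of $d$-dimensional vectors rather than scalars, which is why I prefer the triangle-inequality-then-Cauchy--Schwarz argument (it reduces the vector case to the scalar Cauchy--Schwarz without fuss). I would also note in passing that equality holds when $\|\bm{x}_k\|/\alpha_k$ is constant in $k$ and all $\bm{x}_k$ are aligned, which matches how the lemma is used in the main proof — the weights $\gamma_k$ are tuned precisely to achieve this equality condition termwise.
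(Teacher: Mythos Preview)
Your proposal is correct and takes essentially the same approach as the paper: both use the rescaling $\bm{x}_k=\sqrt{\alpha_k}\cdot(\bm{x}_k/\sqrt{\alpha_k})$ and apply Cauchy--Schwarz. The only cosmetic difference is that the paper expands $\|\sum_k\bm{x}_k\|^2$ coordinate-wise and applies scalar Cauchy--Schwarz in each coordinate before summing, whereas you first reduce to scalars via the triangle inequality and then apply Cauchy--Schwarz once; both routes land on the same bound.
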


\begin{proof}
Denote $x_{ki}$ as the $i$-th element of the vector $\bm{x}_k$, then
{\small
\begin{align*}
    &\quad\left\|\sum\nolimits_{k=1}^K\bm{x}_k\right\|^2 
    = \sum\nolimits_{i=1}^d \left(\sum\nolimits_{k=1}^K x_{ki}\right)^2
    = \sum\nolimits_{i=1}^d  \left(\sum\nolimits_{k=1}^K \sqrt{\alpha_k}\cdot \frac{x_{ki}}{\sqrt{\alpha_k}}\right)^2 \\
    &\leqslant \sum\nolimits_{i=1}^d \left[\sum\nolimits_{k=1}^K \left(\sqrt{\alpha_k}\right)^2\cdot \sum\nolimits_{k=1}^K\left(\frac{x_{ki}}{\sqrt{\alpha_k}}\right)^2\right] \\
    &=\left(\sum\nolimits_{k=1}^K \alpha_k\right) \cdot\left(\sum\nolimits_{k=1}^K \frac{1}{\alpha_k}\sum\nolimits_{i=1}^d x_{ki}^2\right)
    =\left(\sum\nolimits_{k=1}^K \alpha_k\right) \cdot\left(\sum\nolimits_{k=1}^K \frac{\|\bm{x}_k\|^2}{\alpha_k}\right)
\end{align*} }
where the inequality is caused by Cauchy-Schwarz inequality.
\end{proof}

\begin{lemma}
\label{lem:KKT_optimal_solution}
Consider the following optimization problem 
\begin{align*}
    f^* = \min_{x_1,\cdots,x_K}~~\sum\nolimits_{k=1}^K c_k / x_k,\qquad
    \text{s.t.}\quad x_k>0,\quad \sum\nolimits_{k=1}^K x_k=1
\end{align*}
where $c_1,\cdots,c_K>0$.
Then, we have $f^*=(\sum\nolimits_{j=1}^K \sqrt{c_j})^2$, where the optimal solution is $x_k=\frac{\sqrt{c_k}}{\sum\nolimits_{j=1}^K \sqrt{c_j}}~(\forall k=1,\cdots,K)$.
\end{lemma}
\begin{proof}
The augmented Lagrange function is $\mathcal{L}(x_k;\lambda)=\sum\nolimits_{k=1}^K \frac{c_k}{x_k}+\lambda\cdot(\sum\nolimits_{k=1}^K x_k-1)$. By taking Karush-Kuhn-Tucker (KKT) conditions, we have
\begin{align*}
    \begin{cases}
    \frac{\partial \mathcal{L}}{\partial x_k}=0 \\
    \frac{\partial \mathcal{L}}{\partial \lambda}=0
    \end{cases}
    \Rightarrow~~
    \begin{cases}
    -\frac{c_k}{x_k^2}+\lambda=0 \\
    \sum\nolimits_{k=1}^K x_k=1
    \end{cases}
    \Rightarrow~~
    \begin{cases}
    x_k=\sqrt{\frac{c_k}{\lambda}} \\
    \sqrt{\lambda}=\sum\nolimits_{j=1}^K \sqrt{c_j}
    \end{cases}
\end{align*}
then we have $f^*=\left(\sum\nolimits_{j=1}^K \sqrt{c_j}\right)^2$, which finished the proof.
\end{proof}

\section{Proof of Lemma \ref{lem:aggregation_sensitivity} (Aggregation Sensitivity)}
\label{apx:proof_of_lem_aggregation_sensitivity}

\begin{proof}
    For the local momentum computation in \eqref{equ:local_momentum}, we can rewrite it as
    \begin{align*}
        \bm{m}_{t,i}
        =(1-\beta)(\bar{\bm{g}}_{t,i}+\beta \bar{\bm{g}}_{t-1, i}+\cdots+\beta^{t-2}\bar{\bm{g}}_{2,i})+\beta^{t-1} \bar{\bm{g}}_{1,i}
    \end{align*}
    For a neighboring dataset $\mathcal{D}_i^\prime$ which differs only one record from client $\mathsf{C}_i$'s data, i.e., $|\mathcal{D}_i-\mathcal{D}_i^\prime|=1$, we denote the corresponding local gradient (with per-record gradient clipping) and momentum as $\bar{\bm{g}}_{t,i}^\prime$ and $\bm{m}_{t,i}^\prime$, respectively. Since $\bar{\bm{g}}_{\tau,i}$ is computed by \eqref{equ:local_gradient} for $\tau=1,\cdots,t$, we have
    \begin{align*}
        \|\bar{\bm{g}}_{\tau,i}-\bar{\bm{g}}_{\tau,i}^\prime\|
        =\frac{1}{p_i|\mathcal{D}_i|}\|\mathsf{Clip}_R(\nabla_{\bm{\theta}}\ell(\bm{x}, \bm{\theta}_{\tau-1}) )\|\leqslant\frac{R}{p_i|\mathcal{D}_i|}
    \end{align*}
    where $\bm{x}=\mathcal{D}_i-\mathcal{D}_i^\prime$. Then,
    \begin{align*}
        \|\bm{m}_{t,i}-\bm{m}_{t,i}^\prime\|
        &\leqslant (1-\beta)[\|\bar{\bm{g}}_{t,i}-\bar{\bm{g}}_{t,i}^\prime\|+\beta \|\bar{\bm{g}}_{t-1, i}-\bar{\bm{g}}_{t-1, i}^\prime\|+\\
        &\quad\cdots+\beta^{t-2}\|\bar{\bm{g}}_{2,i}-\bar{\bm{g}}_{2,i}^\prime\|]+\beta^{t-1} \|\bar{\bm{g}}_{1,i}-\bar{\bm{g}}_{1,i}^\prime\|  \\
        &\leqslant [(1-\beta)(1+\beta+\cdots+\beta^{t-2}) + \beta^{t-1}]\cdot\frac{R}{p_i|\mathcal{D}_{i}|}  \\
        &= \left[(1-\beta)\cdot\frac{1-\beta^{t-1}}{1-\beta} + \beta^{t-1}\right] \cdot\frac{R}{p_i|\mathcal{D}_{i}|} 
        =\frac{R}{p_i|\mathcal{D}_{i}|} 
    \end{align*}
    where the first inequality is obtained by generalizing the triangle inequality; Therefore,
    \begin{align*}
        &\quad\|Q_t(\mathcal{D})-Q_t(\mathcal{D^\prime})\| \\
        &= \|\sum\nolimits_{j\in\mathcal{I}_t}\mathsf{Clip}_C(\bm{m}_{t,j} - \tilde{\bm{m}}_{t-1}) - \sum\nolimits_{j\in\mathcal{I}_t}\mathsf{Clip}_C(\bm{m}_{t,j}^\prime - \tilde{\bm{m}}_{t-1})\| \\
        &\overset{(\mathsf{a})}{=}\|\mathsf{Clip}_C(\bm{m}_{t,i} - \tilde{\bm{m}}_{t-1}) - \mathsf{Clip}_C(\bm{m}_{t,i}^\prime - \tilde{\bm{m}}_{t-1})\| \\
        &\overset{(\mathsf{b})}{\leqslant} \min\{2C,\|\bm{m}_{t,i}-\bm{m}_{t,i}^\prime\|\}
        =\min\{2C,\frac{R}{p_i|\mathcal{D}_{i}|}\}
    \end{align*}
    where $(\mathsf{a})$ is obtained due to $\bm{m}_{t,j}=\bm{m}_{t,j}^\prime$ for $j\neq i$; and $(\mathsf{b})$ is obtained according to Lemma \ref{lem:clipping_max_bound}.
    Now we finished the main proof of Lemma \ref{lem:aggregation_sensitivity}.
\end{proof}

\begin{table}[!t]
\centering
\caption{Notations}
\footnotesize
\begin{tabular}{c|c}
\hline
\textbf{Symbols}     &  \textbf{Description}\\
\hline
 $\bm{\theta}_t$    &  (global) model in $t$-th iteration, where $\bm{\theta}_t\in\mathbb{R}^d$ \\
 $\mathcal{D}_i$    & local training data of client $\mathsf{C}_i$ \\
 $\mathcal{D}_{i,t}$    & data batch sampled from $\mathcal{D}_i$ in $t$-th iteration \\
 $\bm{g}_{t,i}, \bm{m}_{t,i}$ &  client gradient and momentum at $t$-th iteration \\
 $\bm{m}_{t}$ & aggregation of $\bm{m}_{t,i}$ among multiple clients \\
 $\bar{\bm{g}}_{t,i}, \bar{\bm{m}}_{t,i}$ &  client gradient and momentum with record-level clipping   \\
 $\tilde{\bm{m}}_{t}$ & aggregation of $\bar{\bm{m}}_{t,i}$ with DP noise \\
 $p_i$ & record-level sampling rate (implemented by client $\mathsf{C}_i$) \\
 $q$ & client-level sampling rate (implemented by the server) \\
 $R$ &  record-level clipping bound (for DP) \\
 $C$ &  client-level clipping bound (for robustness) \\
  $\mathcal{H}$  &  the set of honest clients that follow the protocol honestly \\
  $\mathcal{B}$  & the set of Byzantine clients that are malicious \\
  $\delta_B$ & the percentage of Byzantine clients, i.e., $\delta_B=|\mathcal{B}|/n\times 100\%$\\
\hline
\end{tabular}
\end{table}

The above proof used the following lemma.
\begin{lemma}
    \label{lem:clipping_max_bound}
    For any vectors $x$ and $\delta$,  we have
    \begin{align*}
        \|\mathsf{Clip}_C(x) - \mathsf{Clip}_C(x+\delta)\|\leqslant  \min\{2C,\|\delta\|\}
    \end{align*}
    where $\mathsf{Clip}_C(x)\coloneqq \min\{1,C/\|x\|\} \cdot x$ and $\|\cdot\|$ denotes L2-norm.
\end{lemma}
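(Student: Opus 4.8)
The plan is to recognize $\mathsf{Clip}_C(\cdot)$ as the Euclidean projection onto the closed ball $B_C \coloneqq \{v : \|v\|\leqslant C\}$: when $\|x\|\leqslant C$ we have $\mathsf{Clip}_C(x) = x$, and when $\|x\| > C$ we have $\mathsf{Clip}_C(x) = (C/\|x\|)\,x$, which is exactly the point of $B_C$ closest to $x$. Once this identification is made, the bound $\|\mathsf{Clip}_C(x) - \mathsf{Clip}_C(x+\delta)\|\leqslant\|\delta\|$ is precisely the classical fact that projection onto a closed convex set is non-expansive (1-Lipschitz), with $x$ and $x+\delta$ as the two points being projected; and the bound $\|\mathsf{Clip}_C(x) - \mathsf{Clip}_C(x+\delta)\|\leqslant 2C$ follows immediately from $\|\mathsf{Clip}_C(v)\|\leqslant C$ for every $v$ together with the triangle inequality. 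Taking the smaller of the two bounds gives the $\min\{2C,\|\delta\|\}$ in the statement.

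For a self-contained proof of the non-expansive step I would proceed by the three elementary cases determined by whether $\|x\|$ and $\|x+\delta\|$ exceed $C$ (writing $a = x$, $b = x+\delta$). If both norms are $\leqslant C$, the difference is exactly $-\delta$, so the inequality is an equality. If both norms exceed $C$, expanding $\bigl\|\tfrac{C}{\|a\|}a - \tfrac{C}{\|b\|}b\bigr\|^2 = 2C^2 - \tfrac{2C^2}{\|a\|\|b\|}\langle a,b\rangle$ and comparing with $\|a-b\|^2 = \|a\|^2 + \|b\|^2 - 2\langle a,b\rangle$ reduces the desired bound, after using $\langle a,b\rangle\leqslant\|a\|\|b\|$ and $\|a\|\|b\| > C^2$, to $(\|a\| - \|b\|)^2\geqslant 0$. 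In the mixed case (say $\|a\|\leqslant C < \|b\|$), one writes $\mathsf{Clip}_C(b) = tb$ with $t = C/\|b\|\in(0,1)$ and verifies $\|a-b\|^2 - \|a-tb\|^2 = (1-t)\bigl[(1+t)\|b\|^2 - 2\langle a,b\rangle\bigr]\geqslant 0$, using $\langle a,b\rangle\leqslant\|a\|\|b\|\leqslant C\|b\| = t\|b\|^2$ and $2t\leqslant 1+t$. (Alternatively the step can be obtained in one shot from the variational inequality $\langle v - P(v),\, w - P(v)\rangle\leqslant 0$ for all $w\in B_C$, applied with $(v,w)=(x,\mathsf{Clip}_C(x+\delta))$ and $(v,w)=(x+\delta,\mathsf{Clip}_C(x))$, adding, and invoking Cauchy--Schwarz.)

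I do not expect a genuine obstacle here, since the result is standard (it is the "contractivity of a clipping step" invoked in the proof of Theorem~\ref{thm:aggregation_error}); the only place needing a little care is the mixed case, the sole case in which the two clipped outputs come from different branches of the $\min$, so one must keep track of which vector gets scaled and invoke $\|a\|\leqslant C$ at the right moment for the sign to come out correctly. Every other estimate is a one-line triangle-inequality or Cauchy--Schwarz bound.
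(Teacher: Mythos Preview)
Your proposal is correct and follows essentially the same approach as the paper: both prove the $\|\delta\|$ bound by case analysis on whether $\|x\|$ and $\|x+\delta\|$ exceed $C$, and both obtain the $2C$ bound from $\|\mathsf{Clip}_C(v)\|\leqslant C$ plus the triangle inequality. The paper splits the mixed situation into two separate cases (Case~2 and Case~3) with nearly identical computations, whereas you merge them into one by naming the vectors $a,b$ and assuming without loss of generality which one exceeds $C$; your algebra in each case is a mild repackaging of the paper's (e.g., in the ``both large'' case the paper appeals to $\tfrac{\|a\|^2+\|b\|^2}{\|a\|\|b\|}\geqslant 2$ directly, while you rearrange to the equivalent $(\|a\|-\|b\|)^2\geqslant 0$). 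Your additional framing of $\mathsf{Clip}_C$ as the Euclidean projection onto $B_C$, and the alternative one-shot argument via the variational inequality for projections, are not in the paper but are standard and valid shortcuts.
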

\begin{proof}
Our proof of Lemma \ref{lem:clipping_max_bound} mainly uses the triangle inequality of a norm. Note that for L-2 norm $\|\cdot\|$,  we have$\|a+b\|^2=\|a\|^2 + 2a^\top b+ \|b\|^2$ for any vectors $a$ and $b$. We first show that $\|\mathsf{Clip}_C(x) - \mathsf{Clip}_C(x+\delta)\|\leqslant \|\delta\|$, which is proved by enumerating all cases as follows.

\textbf{Case 1.} Assume $\|x\|\leqslant C$ and $\|x+\delta\|\leqslant C$. Then,
\begin{align*}
\|\mathsf{Clip}_C(x) - \mathsf{Clip}_C(x+\delta)\|=\|x-(x+\delta)\| = \|\delta\|
\end{align*}
 
\textbf{Case 2.} Assume $\|x\| > C$ and $\|x+\delta\|\leqslant C$. Then, $0<\frac{C}{\|x\|}<1$ and 
\begin{align*}
    &\quad\|\mathsf{Clip}_C(x) - \mathsf{Clip}_C(x+\delta)\|^2 \\
    &=\left\|\frac{C}{\|x\|}\cdot x-(x+\delta)\right\|^2
    = \left\|\left(1-\frac{C}{\|x\|}\right)\cdot x+\delta\right\|^2 \\
    &= \left(1-\frac{C}{\|x\|}\right)^2 \|x\|^2 + 2\left(1-\frac{C}{\|x\|}\right)x^\top\delta + \|\delta\|^2 \\
    &=\left(1-\frac{C}{\|x\|}\right) \left[\left(1-\frac{C}{\|x\|}\right)\|x\|^2 + 2x^\top\delta + \|\delta\|^2\right] + \frac{C\cdot\|\delta\|^2}{\|x\|} \\
    &=\left(1-\frac{C}{\|x\|}\right) \left[\|x+\delta\|^2
    -C\|x\|\right] + \frac{C\cdot\|\delta\|^2}{\|x\|} \\
    &< 0 + 1\cdot \|\delta\|^2 = \|\delta\|^2
\end{align*}
where $\|x+\delta\|^2\leqslant C^2 < C\|x\|$. Therefore, $\|\mathsf{Clip}_C(x) - \mathsf{Clip}_C(x+\delta)\| < \|\delta\|$ in this case.

\textbf{Case 3.} Assume $\|x\| \leqslant C$ and $\|x+\delta\|> C$. Then, $0<\frac{C}{\|x+\delta\|}<1$ and
{\small
\begin{align*}
    &\quad\|\mathsf{Clip}_C(x) - \mathsf{Clip}_C(x+\delta)\|^2\\
    &=\left\| x-\frac{C}{\|x+\delta\|}\cdot(x+\delta)\right\| ^2
    =\left\|\left(1-\frac{C}{\|x+\delta\|}\right)\cdot (x+\delta) - \delta\right\|^2 \\
    &= \left(1-\frac{C}{\|x+\delta\|}\right)^2\|x+\delta\|^2 - 2\left(1-\frac{C}{\|x+\delta\|}\right) (x+\delta)^\top \delta  + \|\delta\|^2 \\
    &=\left(1-\frac{C}{\|x+\delta\|}\right)\left[\left(1-\frac{C}{\|x+\delta\|}\right)\|x+\delta\|^2-2(x+\delta)^\top \delta + \|\delta\|^2\right] \\
    &\qquad + \frac{C\cdot\|\delta\|^2}{\|x+\delta\|} \\
    &=\left(1-\frac{C}{\|x+\delta\|}\right)\left[
    \|(x+\delta)-\delta\|^2-C\|x+\delta\|\right] + \frac{C\cdot\|\delta\|^2}{\|x+\delta\|} \\ 
    &< 0 + 1\cdot \|\delta\|^2 = \|\delta\|^2 
\end{align*} }
where $\|(x+\delta)-\delta\|^2=\|x\|^2\leqslant C^2 < C\|x+\delta\|$. Therefore, $\|\mathsf{Clip}_C(x) - \mathsf{Clip}_C(x+\delta)\| < \|\delta\|$ in this case.

\textbf{Case 4.} Assume $\|x\| > C$ and $\|x+\delta\|> C$. Then,
\begin{align*}
    &\quad\|\mathsf{Clip}_C(x) - \mathsf{Clip}_C(x+\delta)\|^2\\
    &=\left\| \frac{C}{\|x\|}\cdot x-\frac{C}{\|x+\delta\|}\cdot(x+\delta)\right\|^2 \\
    &=  \frac{C^2}{\|x\|^2}\cdot \|x\|^2 - \frac{C^2\cdot 2x^\top(x+\delta)}{\|x\|\cdot\|x+\delta\|} + \frac{C^2}{\|x+\delta\|^2}\cdot \|x+\delta\|^2 \\
    &= 2C^2 - \frac{C^2\cdot[\|x\|^2 + (\|x\|^2+2x^\top\delta+ \|\delta\|^2)]}{\|x\|\cdot\|x+\delta\|} + \frac{C^2\cdot\|\delta\|^2}{\|x\|\cdot\|x+\delta\|} \\
    &= 2C^2 - \frac{C^2\cdot[\|x\|^2 + \|x+\delta\|^2]}{\|x\|\cdot\|x+\delta\|} + \frac{C^2}{\|x\|\cdot\|x+\delta\|} \cdot\|\delta\|^2\\
    &< 2C^2 - C^2\cdot2 + 1\cdot \|\delta\|^2 = \|\delta\|^2
\end{align*}
where $\frac{\|x\|^2 + \|x+\delta\|^2}{\|x\|\cdot\|x+\delta\|}\geqslant 2$ due to $\|x\|^2 + \|x+\delta\|^2 - 2(\|x\|\cdot\|x+\delta\|)=(\|x\|-\|x+\delta\|)^2\geqslant0$, and $\frac{C^2}{\|x\|\cdot\|x+\delta\|}<1$ due to $\|x\| > C$ and $\|x+\delta\|> C$. Therefore, $\|\mathsf{Clip}_C(x) - \mathsf{Clip}_C(x+\delta)\| < \|\delta\|$ in this case.

\textbf{The Final Result.} By summarizing all cases above, we have $\|\mathsf{Clip}_C(x) - \mathsf{Clip}_C(x+\delta)\|\leqslant \|\delta\|$. On the other hand, since $\|\mathsf{Clip}_C(x)\|\leqslant C$ for any $x$, it is obvious that 
{\small
\begin{align*}
    \|\mathsf{Clip}_C(x) - \mathsf{Clip}_C(x+\delta)\|\leqslant   \|\mathsf{Clip}_C(x) \| + \|\mathsf{Clip}_C(x+\delta)\| \leqslant  2C
\end{align*} }
Thus, the upper bound of $\|\mathsf{Clip}_C(x) - \mathsf{Clip}_C(x+\delta)\|$ is $\min\{2C,\|\delta\|\}$.
\end{proof}

\section{Gaussian Differential Privacy (GDP)}
\label{apx:GDP}
\textbf{Privacy Accountant.} Since deep learning needs to iterate over the training data and apply gradient computation multiple times during the training process, each access to the training data incurs some privacy cost from the overall privacy budget $\epsilon$. The total privacy cost of repeated applications of additive noise mechanisms follow from the composition theorems and their refinements \cite{dwork2014algorithmic}. The task of keeping track of the accumulated privacy loss in the course of execution of a composite mechanism, and enforcing the applicable privacy policy, can be performed by the privacy accountant. Abadi et al. \cite{abadi2016deep} proposed \emph{moments accountant} to provide a tighter bound on the privacy loss compared to the generic advanced composition theorem \cite{dwork2010boosting}. Another new and more state-of-the-art privacy accountant method is Gaussian Differential Privacy (GDP) \cite{dong2019gaussian,bu2020deep}, which was shown to obtain a tighter result than moments accountant.

\textbf{Gaussian Differential Privacy.}
GDP is a new privacy notion which faithfully retains  hypothesis testing interpretation of differential privacy. By leveraging the central limit theorem of Gaussian distribution, GDP has been shown to possess an \emph{analytically tractable} privacy accountant (vs. moments accountant must be done by numerical computation). Furthermore, GDP can be converted to a collection of $(\epsilon,\delta)$-DP guarantees (refer to Lemma \ref{lem:GDP_to_DP}). Note that even in terms of $(\epsilon,\delta)$-DP, the GDP approach gives a tighter privacy accountant than moments accountant.  GDP utilizes a single parameter $\mu\geqslant0$ (called privacy parameter) to quantify the privacy of a randomized mechanism. Similar to the privacy budget $\epsilon$ defined in DP,  a larger  $\mu$ in GDP indicates less privacy guarantee. Comparing with $(\epsilon,\delta)$-DP,  the new notion $\mu$-GDP can losslessly reason about common primitives associated with differential privacy, including composition, privacy amplification by sampling, and group privacy. In the following, we briefly introduce some important properties (that will be used in the analysis of our approach) of GDP as below. The formal definition and more detailed results can be found in the original paper \cite{dong2019gaussian}. 

\begin{lemma}[Gaussian Mechanism for GDP \cite{dong2019gaussian}]
Consider the problem of privately releasing a univariate statistic $f(D)$ of a dataset $D$.  Define the sensitivity of $f(\cdot)$ as $s_f=\sup_{D,D^\prime}|f(D)-f(D^\prime)|$, where the supremum is over all neighboring datasets.
Then,  the Gaussian mechanism $\mathcal{M}(D)=f(D)+\xi$, where $\xi\sim\mathcal{N}(0,s_f^2/\mu^2)$, satisfies  $\mu$-GDP. 
\end{lemma}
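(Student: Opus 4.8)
The plan is to work directly with the hypothesis-testing (trade-off function) formulation of GDP that underlies \cite{dong2019gaussian}. Recall that $\mathcal{M}$ is $\mu$-GDP if and only if, for every pair of neighboring datasets $D,D'$, the trade-off function $T(\mathcal{M}(D),\mathcal{M}(D'))$ dominates pointwise the Gaussian trade-off function $G_\mu := T(\mathcal{N}(0,1),\mathcal{N}(\mu,1))$, where $T(P,Q)(\alpha)$ denotes the smallest achievable type-II error of any test of $P$ against $Q$ whose type-I error is at most $\alpha$. Thus the task reduces to evaluating (or lower-bounding) the trade-off function between the two possible output distributions of the Gaussian mechanism and comparing it with $G_\mu$.

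First I would fix neighboring $D,D'$ and observe that, since $\xi\sim\mathcal{N}(0,s_f^2/\mu^2)$, the outputs are $\mathcal{M}(D)\sim\mathcal{N}(f(D),\sigma^2)$ and $\mathcal{M}(D')\sim\mathcal{N}(f(D'),\sigma^2)$ with common variance $\sigma^2=s_f^2/\mu^2$. Writing $\Delta:=|f(D)-f(D')|$, distinguishing these two distributions is a location-shift Gaussian testing problem with standard deviation $\sigma$; by location/scale invariance it is equivalent to testing $\mathcal{N}(0,1)$ against $\mathcal{N}(\Delta/\sigma,1)$. By the Neyman–Pearson lemma the most powerful test at each level is the likelihood-ratio test, which for equal-variance Gaussians is a one-sided threshold test on the observation; the routine computation of its power as a function of its size yields exactly $T(\mathcal{N}(0,1),\mathcal{N}(\Delta/\sigma,1))=G_{\Delta/\sigma}$, i.e. the map $\alpha\mapsto\Phi\bigl(\Phi^{-1}(1-\alpha)-\Delta/\sigma\bigr)$. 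Hence $T(\mathcal{M}(D),\mathcal{M}(D'))=G_{\Delta/\sigma}=G_{\mu\Delta/s_f}$.

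Finally I would use the monotonicity of the Gaussian trade-off family: $\nu\mapsto G_\nu$ is nonincreasing in $\nu$ (a larger mean separation makes the two hypotheses easier to tell apart, pushing the whole curve down), which is immediate from the explicit formula since $\Phi$ is increasing. Because $\Delta\le s_f$ by the definition of sensitivity, we have $\mu\Delta/s_f\le\mu$, so $G_{\mu\Delta/s_f}\ge G_\mu$ pointwise. Combining, $T(\mathcal{M}(D),\mathcal{M}(D'))\ge G_\mu$ for every pair of neighboring datasets, which is precisely the statement that $\mathcal{M}$ is $\mu$-GDP. The only genuinely substantive step is the explicit evaluation of the Gaussian-versus-Gaussian trade-off function via Neyman–Pearson; the invariance reduction, the monotonicity of $G_\nu$, and the use of the sensitivity bound are all bookkeeping, and one may alternatively just quote the primitive identity $T(\mathcal{N}(0,1),\mathcal{N}(\nu,1))=G_\nu$ directly from \cite{dong2019gaussian}.
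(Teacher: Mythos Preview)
Your proof is correct and is essentially the argument from \cite{dong2019gaussian}. Note, however, that the paper does not provide its own proof of this lemma: it is stated in the appendix purely as a cited result from \cite{dong2019gaussian}, so there is no in-paper proof to compare against. Your write-up supplies precisely the trade-off-function argument one would expect---reduction to a location-shift Gaussian test, Neyman--Pearson to identify $T(\mathcal{N}(0,1),\mathcal{N}(\nu,1))=G_\nu$, and monotonicity of $G_\nu$ together with $\Delta\le s_f$---which is the standard derivation.
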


\begin{lemma}[Composition Theorem of GDP \cite{dong2019gaussian}]
The $m$-fold composition of $\mu_i$-GDP mechanisms is $\sqrt{\mu_1^2+\cdots+\mu_m^2}$-GDP. 
\end{lemma}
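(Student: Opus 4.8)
The plan is to prove the claim through the trade-off--function (hypothesis-testing) formulation of $f$-DP that underlies Gaussian DP, following Dong et al.~\cite{dong2019gaussian}. Write $T(P,Q)$ for the trade-off function of a simple-versus-simple test of $P$ against $Q$, and set $G_\mu \coloneqq T(\mathcal{N}(0,1),\mathcal{N}(\mu,1))$; a mechanism is $\mu$-GDP precisely when, for every pair of neighboring datasets $D,D'$, the trade-off function between $\mathcal{M}(D)$ and $\mathcal{M}(D')$ dominates $G_\mu$. Two structural ingredients then suffice: (i)~the general composition theorem for $f$-DP, and (ii)~a closed form for the tensor product of Gaussian trade-off functions.

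First I would invoke the general composition theorem for $f$-DP: if $\mathcal{M}_i$ is $f_i$-DP, then the (adaptive) $m$-fold composition is $(f_1\otimes\cdots\otimes f_m)$-DP, where $\otimes$ denotes the tensor product of trade-off functions. The underlying argument fixes neighbors $D,D'$, lets $P_i,Q_i$ be the (history-dependent) output laws of $\mathcal{M}_i$ under $D$ and $D'$, and shows that the trade-off function of the joint output under $D$ versus $D'$ is lower-bounded by $T(P_1,Q_1)\otimes\cdots\otimes T(P_m,Q_m)$; since each factor dominates $f_i$ and $\otimes$ is monotone in each argument, the composition is $(f_1\otimes\cdots\otimes f_m)$-DP. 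I would cite this (Theorem~3.2 of~\cite{dong2019gaussian}) rather than reproduce the adaptivity bookkeeping.

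Next I would establish, by induction on $m$, the identity $G_{\mu_1}\otimes\cdots\otimes G_{\mu_m}=G_{\sqrt{\mu_1^2+\cdots+\mu_m^2}}$, for which it is enough to treat $m=2$. By definition $G_{\mu_1}\otimes G_{\mu_2}=T\big(\mathcal{N}(\mathbf{0},I_2),\,\mathcal{N}((\mu_1,\mu_2)^\top,I_2)\big)$. The log-likelihood ratio at an observation $(x_1,x_2)$ is an increasing function of $\mu_1 x_1+\mu_2 x_2$, so the statistic $Z\coloneqq(\mu_1 x_1+\mu_2 x_2)/\sqrt{\mu_1^2+\mu_2^2}$ is sufficient; it is distributed as $\mathcal{N}(0,1)$ under the null and as $\mathcal{N}(\sqrt{\mu_1^2+\mu_2^2},1)$ under the alternative. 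Hence the Neyman--Pearson optimal tests, and therefore the entire trade-off curve, coincide with those of the one-dimensional problem $\mathcal{N}(0,1)$ versus $\mathcal{N}(\sqrt{\mu_1^2+\mu_2^2},1)$, i.e.\ $G_{\sqrt{\mu_1^2+\mu_2^2}}$; iterating gives the $m$-fold identity.

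Finally I would combine the two: each $\mu_i$-GDP mechanism is $G_{\mu_i}$-DP, so by~(i) the composition is $(G_{\mu_1}\otimes\cdots\otimes G_{\mu_m})$-DP, and by~(ii) this equals $G_{\sqrt{\mu_1^2+\cdots+\mu_m^2}}$-DP, i.e.\ $\sqrt{\mu_1^2+\cdots+\mu_m^2}$-GDP, as claimed. The main obstacle is step~(i): making the adaptive $f$-DP composition rigorous---the reduction to pairs of output distributions and the associativity and monotonicity of $\otimes$---is the substantive part, and is exactly where I would lean on the cited result. The Gaussian computation in~(ii), while it is the reason the $\ell_2$-aggregation of the $\mu_i$ emerges, is short once the sufficiency reduction of the bivariate Gaussian test is in place.
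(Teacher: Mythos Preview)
Your proposal is correct and follows exactly the argument in Dong et al.~\cite{dong2019gaussian}: general $f$-DP composition via the tensor product of trade-off functions, together with the sufficiency reduction showing $G_{\mu_1}\otimes G_{\mu_2}=G_{\sqrt{\mu_1^2+\mu_2^2}}$. Note, however, that the paper you are comparing against does not actually prove this lemma at all---it is simply quoted from~\cite{dong2019gaussian} as a known result, so there is no ``paper's own proof'' to compare to; your sketch is essentially the proof from the cited source.
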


\begin{lemma}[Group Privacy of GDP \cite{dong2019gaussian}]
	If a mechanism is $\mu$-GDP, then it is $K\mu$-GDP for a group with size $K$. 
\end{lemma}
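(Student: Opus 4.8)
The plan is to prove the statement in the hypothesis-testing language in which $\mu$-GDP is natively defined in \cite{dong2019gaussian}. Recall that $\mathcal{M}$ is $\mu$-GDP if and only if, for every pair of neighboring datasets, the trade-off function $T(\mathcal{M}(D),\mathcal{M}(D'))$ (the optimal type-II error as a function of the type-I error level $\alpha$) dominates pointwise the canonical Gaussian trade-off function $G_\mu(\alpha)=\Phi(\Phi^{-1}(1-\alpha)-\mu)$. A group of size $K$ means $D$ and $D'$ differ in $K$ records, so the first step is to interpolate a chain $D=D_0,D_1,\dots,D_K=D'$ in which each consecutive pair $D_{j-1},D_j$ differs in a single record and is therefore a genuine neighboring pair; the $\mu$-GDP hypothesis then gives $T(\mathcal{M}(D_{j-1}),\mathcal{M}(D_j))\geq G_\mu$ for every $j$.

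The core of the argument is a chaining inequality for trade-off functions, which I would prove by reusing a single test across three distributions $P,Q,S$ satisfying $T(P,Q)\geq g$ and $T(Q,S)\geq G_\mu$ for some non-increasing $g$. Fix any rejection rule $\phi$ with type-I error $\mathbb{E}_{P}[\phi]=\alpha$ and set $c\coloneqq\mathbb{E}_{Q}[\phi]$. Reading $\phi$ as a test of $P$ against $Q$, the bound $T(P,Q)\geq g$ forces type-II error $1-c\geq g(\alpha)$, i.e. $c\leq 1-g(\alpha)$; reading the \emph{same} $\phi$ as a test of $Q$ against $S$, the bound $T(Q,S)\geq G_\mu$ forces type-II error $\mathbb{E}_{S}[1-\phi]\geq G_\mu(c)$. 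Since $G_\mu$ is non-increasing and $c\leq 1-g(\alpha)$, this yields $\mathbb{E}_{S}[1-\phi]\geq G_\mu(1-g(\alpha))$, hence $T(P,S)\geq G_\mu(1-g(\cdot))$. I would then apply this with $P=\mathcal{M}(D_0)$, $Q=\mathcal{M}(D_k)$, $S=\mathcal{M}(D_{k+1})$, peeling off one link of the chain at a time so that the running bound evolves under the operation $g\mapsto G_\mu(1-g)$.

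It remains to show that this operation keeps the Gaussian family closed with parameters simply adding, i.e. the identity $G_\mu(1-G_{k\mu})=G_{(k+1)\mu}$. I would verify this from the elementary normal-CDF relations $1-\Phi(x)=\Phi(-x)$ and $\Phi^{-1}(1-\alpha)=-\Phi^{-1}(\alpha)$: setting $u=1-G_{k\mu}(\alpha)$ gives $\Phi^{-1}(1-u)=\Phi^{-1}(1-\alpha)-k\mu$, whence $G_\mu(u)=\Phi(\Phi^{-1}(1-u)-\mu)=\Phi(\Phi^{-1}(1-\alpha)-(k+1)\mu)=G_{(k+1)\mu}(\alpha)$. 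With the base case $k=1$ being exactly the $\mu$-GDP assumption on the first link, induction over the $K$ links gives $T(\mathcal{M}(D),\mathcal{M}(D'))\geq G_{K\mu}$. Running the same chain in reverse and invoking the symmetry of $G_\mu$ handles the opposite null/alternative orientation, and since the group $D,D'$ of size $K$ was arbitrary, $\mathcal{M}$ is $K\mu$-GDP.

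The main obstacle I anticipate is the chaining inequality rather than any calculation: one must be careful that the single test $\phi$ is read off in the correct null/alternative orientation at each link and that monotonicity of the trade-off function is applied in the right direction, since a sign or direction slip there silently breaks the telescoping of the parameter. The closed-form normal-CDF identity and the single-record chain construction are routine by comparison.
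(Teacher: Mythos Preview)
The paper does not supply its own proof of this lemma; it is listed in Appendix~\ref{apx:GDP} as a property imported verbatim from \cite{dong2019gaussian}, alongside the other GDP primitives, with no argument given. Your proposal is correct and is essentially the proof that appears in the original Dong et al.\ paper: interpolate a single-record chain $D=D_0,\dots,D_K=D'$, use the trade-off-function lower bound on each link, and telescope via the identity $G_\mu(1-G_{k\mu})=G_{(k+1)\mu}$. Both the chaining inequality (reading the same test $\phi$ against two different nulls and using monotonicity of $G_\mu$) and the normal-CDF computation are handled correctly, so there is nothing to compare against and nothing to repair.
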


\begin{lemma}[$\mu$-GDP to $(\epsilon,\delta)$-DP \cite{dong2019gaussian}]
	\label{lem:GDP_to_DP}
	A mechanism is $\mu$-GDP if and only if it is $(\epsilon,\delta(\epsilon))$-DP for all $\epsilon\geqslant0$, where
	\begin{align*}
		\delta(\epsilon)=\Phi\left(-\frac{\epsilon}{\mu}+\frac{\mu}{2}\right)-e^\epsilon\cdot\Phi\left(-\frac{\epsilon}{\mu}-\frac{\mu}{2}\right),
	\end{align*}
	and $\Phi$ denotes the CDF of standard normal (Gaussian) distribution. 
\end{lemma}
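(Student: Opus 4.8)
The plan is to prove the equivalence through the hypothesis-testing (trade-off function) foundation on which GDP is built, exploiting the duality between $f$-DP and $(\epsilon,\delta)$-DP established in \cite{dong2019gaussian}. First I would recall the two characterizations I intend to bridge. Writing $T(P,Q)$ for the trade-off function of distinguishing $P$ from $Q$ (the optimal type-II error as a function of the type-I error $\alpha$), $\mu$-GDP means that $T(\mathcal{M}(D),\mathcal{M}(D'))\geqslant G_\mu$ pointwise for every pair of neighboring datasets, where $G_\mu(\alpha)\coloneqq\Phi(\Phi^{-1}(1-\alpha)-\mu)$ is the trade-off function of testing $\mathcal{N}(0,1)$ against $\mathcal{N}(\mu,1)$. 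In parallel, a standard primal--dual lemma characterizes $(\epsilon,\delta)$-DP in the same language: a mechanism with trade-off function $f$ is $(\epsilon,\delta)$-DP if and only if $f(\alpha)\geqslant f_{\epsilon,\delta}(\alpha)$ for all $\alpha\in[0,1]$, where $f_{\epsilon,\delta}(\alpha)=\max\{0,\,1-\delta-e^{\epsilon}\alpha,\,e^{-\epsilon}(1-\delta-\alpha)\}$. Both facts I would cite as known.

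For the forward direction I would fix $\epsilon\geqslant 0$ and compute the smallest $\delta$ for which $G_\mu\geqslant f_{\epsilon,\delta}$. Since $G_\mu$ is symmetric (it is its own inverse), the two nontrivial affine pieces of $f_{\epsilon,\delta}$ impose the same requirement, so it suffices to determine
\begin{align*}
    \delta(\epsilon)=\sup_{\alpha\in[0,1]}\bigl(1-e^{\epsilon}\alpha-G_\mu(\alpha)\bigr).
\end{align*}
I would solve this scalar optimization via the substitution $t=\Phi^{-1}(1-\alpha)$, which rewrites the objective as $1-e^{\epsilon}\Phi(-t)-\Phi(t-\mu)$; differentiating in $t$ and invoking the Gaussian-density identity $e^{\epsilon}\varphi(t)=\varphi(t-\mu)$ (with $\varphi$ the standard normal density) locates the maximizer at $t^\star=\epsilon/\mu+\mu/2$. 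Substituting back and applying $\Phi(-x)=1-\Phi(x)$ collapses the expression to exactly $\delta(\epsilon)=\Phi\!\left(-\epsilon/\mu+\mu/2\right)-e^{\epsilon}\Phi\!\left(-\epsilon/\mu-\mu/2\right)$, proving that $\mu$-GDP implies $(\epsilon,\delta(\epsilon))$-DP for every $\epsilon\geqslant 0$.

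For the converse I would argue by convex duality. Every trade-off function is convex, hence $G_\mu$ equals the upper envelope of its affine minorants, and by the primal--dual lemma each such minorant corresponds to an $(\epsilon,\delta)$ constraint. The computation above shows that for each $\epsilon$ the line $\alpha\mapsto 1-\delta(\epsilon)-e^{\epsilon}\alpha$ is tangent to $G_\mu$ at $\alpha^\star=\Phi(-t^\star)$; as $\epsilon$ ranges over $[0,\infty)$ this tangency point sweeps out half of $[0,1]$, and the symmetric pieces $e^{-\epsilon}(1-\delta(\epsilon)-\alpha)$ cover the rest. Consequently the family $\{f_{\epsilon,\delta(\epsilon)}\}_{\epsilon\geqslant0}$ has $G_\mu$ as its pointwise supremum, so a mechanism obeying all these constraints must satisfy $f\geqslant G_\mu$, i.e., is $\mu$-GDP. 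The main obstacle is precisely this converse envelope argument: one must verify rigorously that the $\delta(\epsilon)$-lines are genuinely the supporting lines of $G_\mu$ (not merely valid lower bounds) and that, together with their symmetric counterparts, their tangency points exhaust $[0,1]$, so that no point of the graph of $G_\mu$ escapes a tight constraint. This requires the smoothness and strict convexity of $G_\mu$ together with the monotone reparametrization $\epsilon\mapsto t^\star$, whereas the forward direction reduces to a routine one-variable calculation.
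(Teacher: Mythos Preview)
Your proposal is a correct and well-organized sketch of the proof, following precisely the trade-off function framework of \cite{dong2019gaussian}: the forward direction via the one-variable optimization $\sup_\alpha(1-e^\epsilon\alpha-G_\mu(\alpha))$ and the converse via the supporting-line envelope argument are exactly how the original source establishes the equivalence.

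However, you should be aware that the present paper does \emph{not} prove this lemma at all. It is stated in the appendix on Gaussian Differential Privacy purely as a cited result from \cite{dong2019gaussian}, alongside the other GDP preliminaries (Gaussian mechanism, composition, group privacy, privacy CLT), none of which are given proofs. The paper simply invokes Lemma~\ref{lem:GDP_to_DP} as a black box in the proof of Theorem~\ref{thm:privacy_analysis} to convert the computed $\mu_i$-GDP guarantee into the $(\epsilon_i,\delta)$ relation~\eqref{equ:delta_epsilon}. So there is no ``paper's own proof'' to compare against; your write-up is effectively reproducing the argument from the original GDP paper rather than from this one.
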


\begin{lemma}[Privacy Central Limit Theorem of GDP \cite{bu2020deep}]
\label{lem:GDP_privacy_account}

Denote $p$ as the sampling probability of one example in the training dataset, $T$ as the total number of iterations and $\sigma$ as the noise scale (i.e., the ratio between the standard deviation of Gaussian noise and the gradient norm bound). Then,
 algorithm DP-SDG asymptotically satisfies $\mu$-GDP with privacy parameter $\mu=p\sqrt{T(e^{1/\sigma^2}-1)}$. 
\end{lemma}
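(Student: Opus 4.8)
This is the standard privacy central-limit theorem for DP-SGD of \cite{dong2019gaussian,bu2020deep}, so the plan is to recast DP-SGD as an i.i.d.\ composition of subsampled Gaussian mechanisms, pass to the trade-off-function calculus, and apply the composition central limit theorem (CLT) for trade-off functions; the only genuine computation is identifying the limiting parameter.

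First I would isolate one iteration. After per-example clipping normalizes the per-record sensitivity to $1$, a single iteration releases the Poisson-subsampled (inclusion probability $p$) sum of clipped gradients corrupted by $\mathcal{N}(0,\sigma^2)$ noise. Adding or removing one record changes the output law from $Q=\mathcal{N}(0,\sigma^2)$ to the mixture $P=(1-p)\,\mathcal{N}(0,\sigma^2)+p\,\mathcal{N}(1,\sigma^2)$, so by the Gaussian-mechanism characterization (and the subsampling operator of \cite{dong2019gaussian}) one iteration has trade-off function $f_p := C_p(G_{1/\sigma})$, where $G_{1/\sigma}$ is the trade-off function of the unsubsampled $1/\sigma$-GDP Gaussian mechanism and $C_p$ is the $p$-subsampling operator (the convexified symmetrization of $p f + (1-p)\,\mathrm{Id}$). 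Since the $T$ iterations use independent fresh noise and independent subsampling, the overall mechanism has trade-off function $f_p^{\otimes T}$ by the composition theorem for trade-off functions (the refinement of the $\mu$-GDP composition lemma stated above).

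Next I would invoke the CLT for compositions of trade-off functions: as $T\to\infty$ in the regime where the per-iteration privacy loss vanishes, $f_p^{\otimes T}$ converges to a Gaussian trade-off function $G_\mu$ provided a Lindeberg/Berry--Esseen-type third-moment bound on the single-step privacy-loss variable $\log(dP/dQ)$ holds, and the limiting parameter satisfies $\mu^2 = \lim_{T\to\infty} 2T\cdot\mathrm{KL}(P\,\|\,Q)$ (equivalently it is governed by the $\chi^2$-divergence, since $\mathrm{KL}\approx\tfrac12\chi^2$ to leading order). A direct Gaussian integral gives the exact identity $\chi^2(P\,\|\,Q)=p^2\big(e^{1/\sigma^2}-1\big)$: writing $dP/dQ=1+p(r-1)$ with $r=\exp((2x-1)/(2\sigma^2))$, expanding the square, and using $\mathbb{E}_{\mathcal{N}(0,\sigma^2)}[r]=1$ and $\mathbb{E}_{\mathcal{N}(0,\sigma^2)}[r^2]=e^{1/\sigma^2}$. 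Hence $\mu^2 = T p^2 (e^{1/\sigma^2}-1)$, i.e.\ $\mu = p\sqrt{T(e^{1/\sigma^2}-1)}$, which is the claim; if an $(\epsilon,\delta)$ form is wanted one then plugs this $\mu$ into Lemma~\ref{lem:GDP_to_DP}.

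The main obstacle is the CLT step: unlike the plain Gaussian mechanism, the subsampled mechanism's trade-off function $C_p(G_{1/\sigma})$ is not itself of GDP form, so convergence cannot be read off from the exact composition of $\mu$-GDP mechanisms and instead requires verifying the moment hypotheses of the trade-off-function CLT (controlling $\mathbb{E}_P|\log(dP/dQ)-\mathbb{E}_P\log(dP/dQ)|^{3}$ at the right rate), together with making precise the ``asymptotic'' regime ($T\to\infty$ with $p\to 0$) under which the stated $\mu$ is attained. Pinning down the exact constant $e^{1/\sigma^2}-1$ --- rather than its small-$1/\sigma^2$ approximation $1/\sigma^2$ --- is the one place where the explicit Gaussian moment computation is essential, but it is short.
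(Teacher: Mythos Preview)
The paper does not prove this lemma at all: it is stated in Appendix~\ref{apx:GDP} as a cited result from \cite{bu2020deep,dong2019gaussian} and then invoked as a black box inside the proof of Theorem~\ref{thm:privacy_analysis}. There is therefore no ``paper's own proof'' to compare against.

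That said, your proposal is a faithful sketch of the argument in the cited references. The decomposition into a single subsampled Gaussian step with trade-off function $C_p(G_{1/\sigma})$, the $T$-fold tensor composition, and the $\chi^2$ computation $\chi^2(P\|Q)=p^2(e^{1/\sigma^2}-1)$ are all correct and are exactly how \cite{bu2020deep} obtains the constant. One small imprecision: writing $\mu^2=\lim 2T\cdot\mathrm{KL}(P\|Q)$ is not quite the statement of the trade-off-function CLT in \cite{dong2019gaussian}; the limiting parameter there is identified through specific functionals of the per-step trade-off function (which in this setting reduce exactly to $T\cdot\chi^2(P\|Q)$, not $2T\cdot\mathrm{KL}$). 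Since you immediately pass to the $\chi^2$ computation and land on the right value, this does not affect the conclusion, but if you were writing this up you would want to quote the CLT hypotheses and limiting formula from \cite{dong2019gaussian} precisely rather than via the $\mathrm{KL}\approx\tfrac12\chi^2$ heuristic.
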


In this paper, we use $\mu$-GDP as our primary privacy accountant method due to its good property on composition and accountant of privacy amplification and then convert the $\mu$-GDP result to $(\epsilon,\delta)$-DP. We note that other privacy accountant methods, such as moments accountant \cite{abadi2016deep} and R{\'e}nyi DP (RDP) \cite{mironov2017renyi}, are also applicable to the proposed scheme and theoretical analysis, but might lead to suboptimal results.

\section{Preliminaries for Crypto Primitives}
\label{apx:secure_aggregation_preliminaries}

\textbf{Shamir's Secret Sharing with Robust Reconstruction.}
Due to the assumption of a malicious minority, the utilized crypto primitives should be able to tolerate the wrong or missing messages of malicious clients. Shamir's $t$-out-of-$n$ Secret Sharing Scheme \cite{shamir1979share} allows distributing a secret $s$ among $n$ parties such that: 1) the complete secret can be reconstructed from any combination of $t$ shares; 2) any set of $t-1$ or fewer shares reveals no information about $s$, where $t$ is the threshold of the secret sharing scheme. We denote $[s]_i$ as the share held by the $i$-th party. Shamir's secret sharing scheme is linear, which means a party can locally perform: 1) addition of two shares, 2) addition of a constant, and 3) multiplication by a constant. Furthermore, Shamir's secret sharing scheme is closely related to Reed-Solomon error correcting codes \cite{lin2001error}, which is a group of polynomial-based error correcting codes. Shamir's secret sharing scheme results in a $[n,t,n-t+1]$ Reed-Solomon code that can tolerate up to $q$ errors and $e$ erasures (message dropouts) such that $2q+e<n-t+1$. Given any subset of $n-e$ shares $\mathcal{Q}~(|\mathcal{Q}|\geqslant n-e)$ with up to $q$ errors, any standard Reed Solomon decoding algorithm, such as Gao's decoding algorithm \cite{gao2003new}, can robustly reconstruct the secret $s$. Due to the property of robust reconstruction, Shamir's secret sharing is able to guarantee security with malicious minority (as versus additive secret sharing \cite{cramer2005share} guarantees security with honest-but-curious parties).

\textbf{EIFFeL: An Instantiation of SAVI Protocol.} 
EIFFeL \cite{roy2022eiffel} is a SAVI protocol (with privacy and integrity guarantees) that securely aggregates only well-informed inputs. Its threat model assumes a malicious server (for privacy only) and a set of malicious clients (for both breaching privacy and submitting malformed inputs) that can arbitrarily deviate from the protocol, while the remaining honest clients are assumed to follow the protocol correctly and have well-formed inputs.  EIFFeL ensures privacy by using Shamir's secret sharing scheme \cite{shamir1979share}. Integrity is guaranteed via 1) secret-shared non-interactive proofs (SNIP) \cite{corrigan2017prio}, which is an information-theoretic zero-knowledge proof for secret-shared data; and 2) verifiable secret shares \cite{feldman1987practical}, which validates the correctness of the secret shares. Note that the original SNIP utilizes additive secret sharing scheme \cite{cramer2005share}, and its deployment setting uses $\geqslant2$ honest and non-colluding servers as the verifiers. In contrast, by leveraging Shamir's secret sharing with robust reconstruction, EIFFeL extends SNIP to a malicious threat model in a \emph{single} server setting, where all the other clients (some of them are malicious) and the server jointly act as the verifiers for the verification of client $\mathsf{C}_i$'s input. Therefore, EIFFeL is compatible to our system model (a single server) and the threat model discussed in Section \ref{sec:problem_statement}.

\begin{figure*}[!t]
    \centering
    \includegraphics[width=\linewidth]{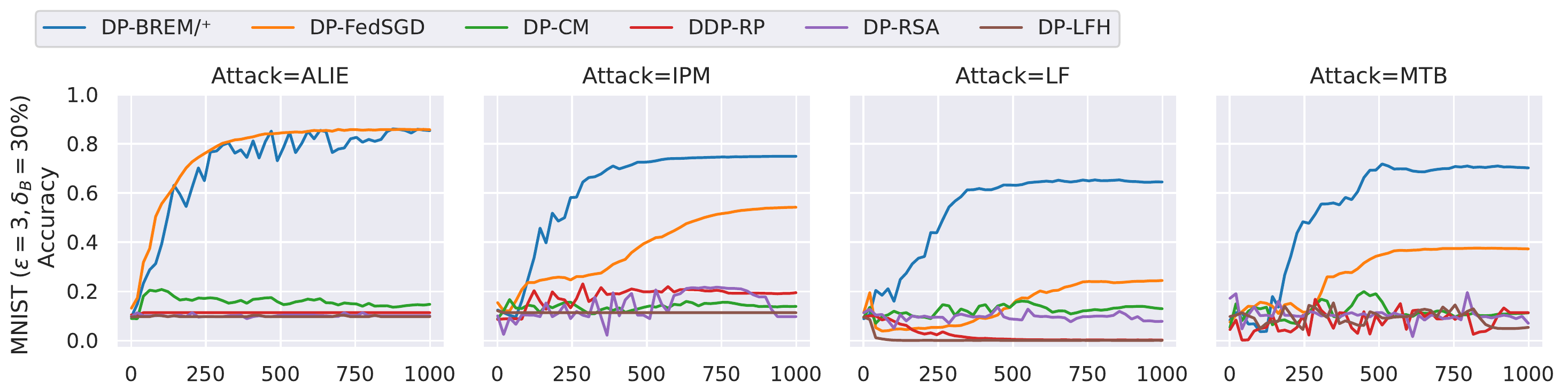}
    \includegraphics[width=\linewidth]{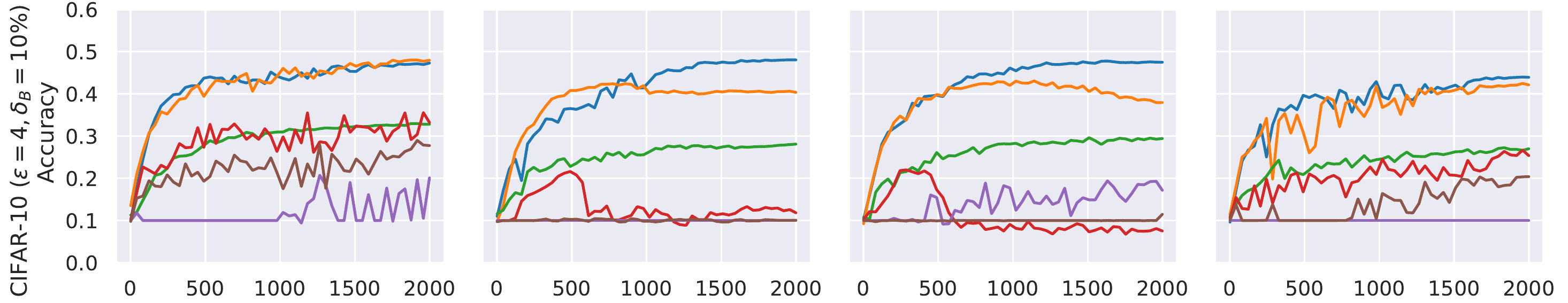}
    \includegraphics[width=\linewidth]{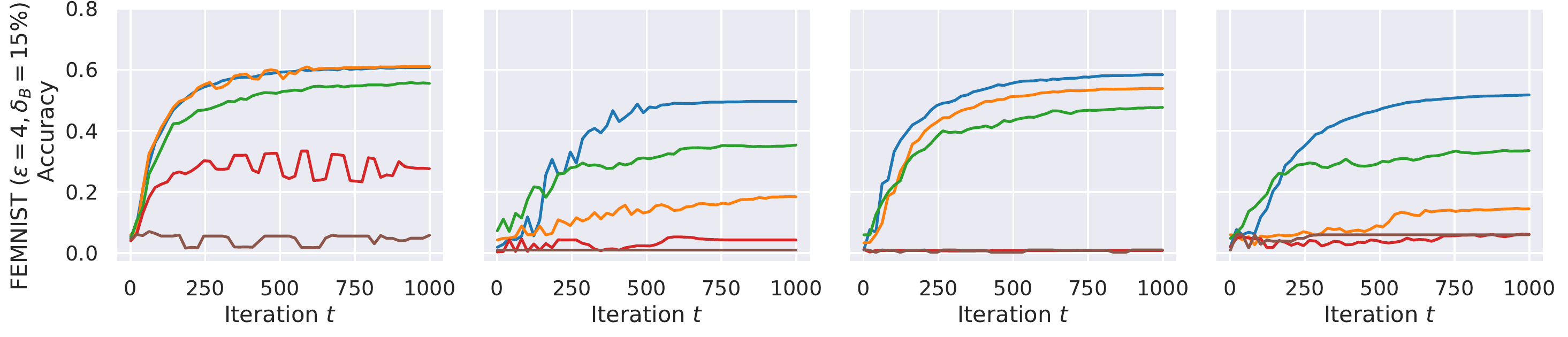}
    \vspace{-8mm}
    \caption{Iteration Curve for three datasets.}
    \vspace{-2mm}
    \label{fig:iteration_curve}
\end{figure*}

\section{Proof of Theorem \ref{thm:security_analysis} (Security Analysis)}
\label{apx:proof_of_thm_security_analysis}

\textbf{Integrity.} We prove that DP-BREM\textsuperscript{+} satisfies the integrity constraint using the following lemmas, where Lemma \ref{lem:integrity_of_input} and Lemma \ref{lem:integrity_of_aggregate} are derived from EIFFeL \cite{roy2022eiffel}.

\begin{lemma}[Integrity of Input]
\label{lem:integrity_of_input}
DP-BREM\textsuperscript{+} rejects all malformed inputs with probability $1-\text{negl}(\kappa)$.
\end{lemma}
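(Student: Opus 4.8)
The plan is to reduce this claim to the soundness of the secret-shared non-interactive proof (SNIP) machinery underlying EIFFeL's input-validation phase \cite{roy2022eiffel,corrigan2017prio}, since steps \textcircled{\small 1}--\textcircled{\small 3} of DP-BREM\textsuperscript{+} (Appendix \ref{apx:detailed_steps_of_secure_aggregation}) are exactly EIFFeL's validation sub-protocol instantiated with the predicate $\mathsf{Valid}(\bm{z}_i)=\mathbbm{1}(\|\bm{z}_i\|\leqslant C)$. First I would argue that this predicate admits a polynomial-size arithmetic circuit over $\mathbb{F}_{2^K}$: write it as the check $\|\bm{z}_i\|^2\leqslant C^2$, express $\|\bm{z}_i\|^2=\sum_{k=1}^d z_{i,k}^2$ with $d$ multiplication gates followed by additions, and reduce the single inequality to a bounded-range witness supplied by the prover (as in \cite{roy2022eiffel}); the resulting circuit has $O(d)$ gates. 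Because the only change DP-BREM\textsuperscript{+} makes to this phase relative to EIFFeL is appending the (independent) noise-generation steps \textcircled{\small 4}--\textcircled{\small 5}, whose outputs never feed the validation check, the soundness guarantee of EIFFeL's validation carries over unchanged.

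Next I would invoke SNIP soundness: if a prover $\mathsf{C}_i$ holds an input $\bm{z}_i$ with $\mathsf{Valid}(\bm{z}_i)\neq 1$, then for any (possibly adversarial) proof string $\pi_i$ it produces, the probability -- over the randomness of the single identity test performed by the verifiers in step \textcircled{\small 3} -- that the reconstructed proof summary $\sigma_i$ passes the check is at most $O(|\text{circuit}|/2^K)=O(d/2^K)$, which is $\text{negl}(\kappa)$ once $K=\Theta(\kappa+\log d)$. Two subtleties have to be closed here. First, a malicious $\mathsf{C}_i$ could distribute inconsistent Shamir shares of $\bm{z}_i$ or $\pi_i$; the verifiable secret sharing of \cite{feldman1987practical} used in step \textcircled{\small 2} forces the shares onto polynomials of the prescribed degree, so the value the server reconstructs in step \textcircled{\small 3} is well defined, and the soundness argument applies to whatever input/proof pair is thereby committed. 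Second, up to the minority $m$ of malicious verifiers may submit incorrect shares $[\sigma_i]_j$; since Shamir's scheme with the chosen threshold is a Reed--Solomon code correcting $m$ errors, robust reconstruction (e.g.\ Gao's decoder \cite{gao2003new}) recovers the $\sigma_i$ determined by the honest verifiers' shares, so malicious verifiers cannot force acceptance either. Finally I would take a union bound over the clients whose inputs are validated in the round, giving total failure probability $O(|\mathcal{I}_t|\cdot d/2^K)=\text{negl}(\kappa)$.

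The main obstacle I anticipate is not the core SNIP soundness -- that is inherited from EIFFeL -- but the careful bookkeeping that (i) the L2-norm-bound predicate really is arithmetizable with a small circuit over the \emph{same} field $\mathbb{F}_{2^K}$ used elsewhere in the protocol (including the binary-to-arithmetic conversions in step \textcircled{\small 4}), and (ii) the soundness error stays negligible in the concrete parameter regime, i.e.\ that no interaction between the proof field size $K$, the clipping bound $C$, and the fixed-point precision $l$ reopens a cheating path. A secondary point to verify is that restricting the aggregated set to the sampled clients $\mathcal{I}_t$ (rather than all $n$ clients) does not affect the verifier quorum needed for robust reconstruction, which I would handle by noting that the honest majority assumption is maintained within $\mathcal{I}_t$.
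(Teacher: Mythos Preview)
Your proposal is correct and follows essentially the same approach as the paper: the paper does not give a standalone proof of this lemma but simply states that it is ``derived from EIFFeL \cite{roy2022eiffel},'' since Phase~1 of DP-BREM\textsuperscript{+} is exactly EIFFeL's input-validation sub-protocol instantiated with the norm-bound predicate. Your write-up goes considerably further than the paper by unpacking the SNIP soundness argument, the VSS consistency check, the robust-reconstruction step, and the union bound---all of which is sound and faithful to how EIFFeL actually works---but the underlying route (reduce to EIFFeL's guarantees, noting that the added noise-generation phase is independent of validation) is identical.
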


\begin{lemma}[Integrity of Gaussian Noise]
\label{lem:integrity_of_Gaussian_noise}
In Phase 2 of DP-BREM\textsuperscript{+}, each client holds the share of random vector $\bm{\xi}$ that follows the Gaussian distribution $\mathcal{N}(0,R^2\sigma^2\mathbf{I}_d)$.
\end{lemma}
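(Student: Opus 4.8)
The plan is to verify the distributional claim by tracing the two sub-protocols of Phase~2 (steps \textcircled{\small 4} and \textcircled{\small 5} in Appendix~\ref{apx:detailed_steps_of_secure_aggregation}) in order: first that the jointly sampled scalars $\{(u_k,v_k)\}_{k=1}^{\lceil d/2\rceil}$ are i.i.d.\ uniform on $[0,1]$ and are held in valid Shamir shares even under a malicious minority; then that the Box--Muller map followed by the public scaling by $R\sigma$ turns each pair into a pair of i.i.d.\ $\mathcal{N}(0,R^2\sigma^2)$ variables, still in correctly held shares; and finally that concatenating the $d$ resulting scalars produces $\bm{\xi}\sim\mathcal{N}(0,R^2\sigma^2\mathbf{I}_d)$ together with each client's share $[\bm{\xi}]_j$.

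For the uniformity step I would argue as follows. For a single scalar $u$, every client $\mathsf{C}_i$ samples $\bm{b}_i\in\mathbb{F}_2^l$ uniformly at random and distributes Shamir shares of it; the clients then locally add shares to obtain shares of $\bm{b}=\bigoplus_i\bm{b}_i$. Since a strict majority of participating clients is honest, at least one honest, uniformly random $\bm{b}_i$ is present; it stays hidden by the secret sharing, so the malicious clients' contributions are independent of it, and the XOR of an arbitrary bit string with an independent uniform string is uniform. Hence $\bm{b}$ is uniform on $\{0,1\}^l$ regardless of adversarial behaviour. Reading $\bm{b}$ as the binary fractional expansion of $u$ makes $u$ uniform over the $2^l$ equally spaced points of $[0,1)$, and the local field embedding $\mathbb{F}_2^l\hookrightarrow\mathbb{F}_{2^K}$ is applied share-wise, so each client holds a valid share $[u]_j$; validity survives malicious behaviour because of the verifiable-secret-sharing checks and the robust Reed--Solomon reconstruction used throughout. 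The same argument applies independently to each $v_k$, and independence of honest clients' randomness across the $\lceil d/2\rceil$ invocations gives the joint i.i.d.\ claim.

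For the transformation step, given i.i.d.\ $u_k,v_k$ uniform on $[0,1]$, the Box--Muller transformation \cite{box1958note} makes $a_k=\sqrt{-2\ln u_k}\cos(2\pi v_k)$ and $b_k=\sqrt{-2\ln u_k}\sin(2\pi v_k)$ i.i.d.\ $\mathcal{N}(0,1)$. The clients compute shares of $a_k,b_k$ via the malicious-minority-secure MPC of \cite{keller2020mp}: its integrity guarantee makes the reconstructed values equal the intended outputs except with negligible probability, and its privacy guarantee prevents any sub-threshold coalition from learning them. Multiplying each share by the public constant $R\sigma$ is an affine operation on Shamir shares, so each client still holds a valid share, now of an $\mathcal{N}(0,R^2\sigma^2)$ variable; stacking the $d$ scalars $\{(R\sigma a_k,R\sigma b_k)\}_k$ (discarding the final coordinate when $d$ is odd) yields $\bm{\xi}$ with i.i.d.\ $\mathcal{N}(0,R^2\sigma^2)$ coordinates, i.e.\ $\bm{\xi}\sim\mathcal{N}(0,R^2\sigma^2\mathbf{I}_d)$, and by linearity of the sharing each client's locally assembled vector is exactly $[\bm{\xi}]_j$.

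The main obstacle I anticipate is the gap between the object actually produced and the ideal continuous Gaussian: $u_k$ lives on a $2^l$-point grid, and $\ln,\cos,\sin,\sqrt{\cdot}$ are realized by finite-precision polynomial or table approximations inside the MPC, so $\bm{\xi}$ is in fact drawn from a fixed-point approximation of $\mathcal{N}(0,R^2\sigma^2\mathbf{I}_d)$. The cleanest fix is to interpret the lemma (and the downstream DP statement) in terms of this discretized distribution and bound its total-variation distance to the true Gaussian by a quantity negligible in $l$ and in the MPC approximation degree, so that it contributes only a negligible additive slack to $\delta$; alternatively one can appeal to discrete-Gaussian-style privacy accounting. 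A secondary concern---ruling out that a malicious minority biases the transcendental-function evaluation---is already handled by the integrity of the secure computation and the robust reconstruction, so I would only cite those guarantees rather than re-prove them.
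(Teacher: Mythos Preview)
Your proposal is correct and follows essentially the same two-step argument as the paper's proof: the bitwise-XOR with at least one honest client's uniform contribution yields uniform $u$ in step~\textcircled{\small 4}, and the malicious-minority-secure MPC of \cite{keller2020mp} guarantees integrity of the Box--Muller transformation in step~\textcircled{\small 5}. Your treatment is more detailed than the paper's (which is only a few sentences) and, notably, you raise the finite-precision discretization issue that the paper's proof simply elides.
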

\begin{proof}
In the step \textcircled{\small 4} of Phase 2, the jointly generated random number $u$ follows uniform distribution in range $[0,1]$ as long as there is at least one honest client because $u$'s binary representation $\bm{b}$ is the result of bitwise XOR of clients' local random vectors $\{\bm{b}_i\}_{\forall i}$. In step \textcircled{\small 5}, since the utilized MPC protocol \cite{keller2020mp,lindell2017framework} guarantees computation integrity (meaning that the output is correctly computed) with malicious minority, the uniform distribution generated in step \textcircled{\small 4} will be correctly transformed to Gaussian distribution.
\end{proof}

Since clients locally add shares of valid inputs and noise together, DP-BREM\textsuperscript{+} satisfies integrity of aggregate shown in Lemma \ref{lem:integrity_of_aggregate}. Our integrity guarantee in Lemma \ref{lem:integrity_of_aggregate} directly follows  EIFFeL \cite{roy2022eiffel}, though the integrity of noise has different definition compared with the integrity of input. Note that the integrity of EIFFeL (and ours) relies on robust reconstruction property of Shamir's secret sharing \cite{shamir1979share}, and the details can be found from the paper\cite{roy2022eiffel}.

\begin{lemma}[Integrity of Aggregate]
\label{lem:integrity_of_aggregate}
    The aggregated output of DP-BREM\textsuperscript{+} must contain the inputs of all honest clients and the generated Gaussian noise.
    \begin{align*}
        \text{Aggregate}=\sum\nolimits_{i\in\mathcal{I}_H} \bm{z}_i + \sum\nolimits_{i\in\mathcal{I}_M^*} \bm{z}_i + \bm{\xi}
    \end{align*}
    where random vector $\xi\sim\mathcal{N}(0,R^2\sigma^2\mathbf{I}_d)$, $\mathcal{I}_H$ is the set of all honest clients, $\mathcal{I}_M^*$ is the set of malicious clients with well-formed inputs (i.e., $\mathcal{I}_M^*=\mathcal{I}_\mathsf{Valid}\backslash\mathcal{I}_H$)
\end{lemma}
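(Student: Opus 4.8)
The plan is to obtain Lemma~\ref{lem:integrity_of_aggregate} by composing the two integrity guarantees already in hand---Lemma~\ref{lem:integrity_of_input} for the validated inputs and Lemma~\ref{lem:integrity_of_Gaussian_noise} for the jointly generated noise---together with the linearity and robust-reconstruction properties of Shamir's secret sharing recalled in Appendix~\ref{apx:secure_aggregation_preliminaries}, essentially mirroring EIFFeL's \cite{roy2022eiffel} aggregate-integrity argument with the noise share as the only new ingredient.

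First I would pin down the set of accepted inputs. By the threat model of Section~\ref{sec:problem_statement}, every honest client $\mathsf{C}_i\in\mathcal{I}_H$ holds a well-formed input $\bm{z}_i=\mathsf{Clip}_C(\bar{\bm{m}}_{t,i}-\tilde{\bm{m}}_{t-1})$ with $\|\bm{z}_i\|\leqslant C$, so $\mathsf{Valid}(\bm{z}_i)=1$; by the correctness (``only if'') direction of the secure input validation of Phase~1, $\mathsf{C}_i$ is therefore placed in $\mathcal{I}_\mathsf{Valid}$. By Lemma~\ref{lem:integrity_of_input}, every malformed submission is rejected except with probability $\text{negl}(\kappa)$. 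Hence, up to this negligible failure event, $\mathcal{I}_\mathsf{Valid}=\mathcal{I}_H\cup\mathcal{I}_M^*$ with $\mathcal{I}_M^*$ collecting exactly the malicious clients whose secret-shared input passes the norm bound, which fixes the first two sums of the claimed aggregate.

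Next I would handle the noise. Lemma~\ref{lem:integrity_of_Gaussian_noise} gives that after Phase~2 each client holds a Shamir share of a single vector $\bm{\xi}\sim\mathcal{N}(0,R^2\sigma^2\mathbf{I}_d)$ (this uses that at least one honest client contributes to the bitwise XOR in step~\textcircled{\small 4} and the malicious-minority MPC integrity in step~\textcircled{\small 5}). In step~\textcircled{\small 6}, by linearity of Shamir's scheme, each client can locally form $[\sum_{i\in\mathcal{I}_\mathsf{Valid}}\bm{z}_i+\bm{\xi}]_j=\sum_{i\in\mathcal{I}_\mathsf{Valid}}[\bm{z}_i]_j+[\bm{\xi}]_j$ purely by affine operations on shares it already holds, so every honest client sends a correct share of the target sum. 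In step~\textcircled{\small 7} the server reconstructs via a robust Reed--Solomon decoder (e.g., Gao's algorithm): at least the $|\mathcal{I}_H|$ honest shares are correct and at most the $m$ malicious shares are erroneous, so for the chosen threshold the error-correction bound $2m<n-t+1$ holds, and robust reconstruction recovers the unique underlying secret $\sum_{i\in\mathcal{I}_H}\bm{z}_i+\sum_{i\in\mathcal{I}_M^*}\bm{z}_i+\bm{\xi}$ except with probability $\text{negl}(\kappa)$, which is exactly the asserted output.

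The main obstacle---and the point requiring care---is ruling out corruption of the output during aggregation even after inputs have been validated: malicious clients may submit inconsistent shares in step~\textcircled{\small 6}, and the server (malicious for privacy) could mishandle reconstruction, potentially dropping an honest contribution. This is precisely what robust reconstruction of Shamir's secret sharing is built to prevent, so the argument reduces to checking that the parameter choices inherited from EIFFeL satisfy the Reed--Solomon bound for the assumed malicious fraction and that the verifiable-secret-sharing checks of Phase~1 and the MPC integrity of Phase~2 guarantee the shares being summed are mutually consistent; beyond that, the statement follows verbatim from EIFFeL's reconstruction argument, with the noise term supplied by Lemma~\ref{lem:integrity_of_Gaussian_noise}.
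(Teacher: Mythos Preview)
Your proposal is correct and follows essentially the same approach as the paper: both reduce the claim to EIFFeL's aggregate-integrity argument, invoking the linearity of Shamir's scheme for the local share-addition in step~\textcircled{\small 6} and robust Reed--Solomon reconstruction in step~\textcircled{\small 7}, with Lemma~\ref{lem:integrity_of_Gaussian_noise} supplying the noise term as the only new ingredient. In fact your write-up is considerably more detailed than the paper's own treatment, which simply states that the result ``directly follows EIFFeL'' via the robust-reconstruction property of Shamir's secret sharing and defers all details to \cite{roy2022eiffel}.
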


\textbf{Privacy.}
DP-BREM\textsuperscript{+} guarantees: nothing can be learned about a private input $\bm{z}_i$ for an honest client $\mathsf{C}_i$, except:

1) $\bm{z}_i$ passes the integrity check, i.e., $\mathsf{Valid}(\bm{z}_i)=1$ .

2) anything that can be learned from the noisy aggregation of well-formed inputs (thus achieving the same DP guarantee as the original DP-BREM).

We prove this privacy property using the following lemmas, where Lemma \ref{lem:privacy_input_validation} and Lemma \ref{lem:privacy_aggregation} are derived from EIFFeL \cite{roy2022eiffel}.

\begin{lemma}
    \label{lem:privacy_input_validation}
    In Phase 1, for an honest client $\mathsf{C}_i$, DP-BREM\textsuperscript{+} reveals nothing about the private input $\bm{z}_i$ except $\mathsf{Valid}(\bm{z}_i)=1$. 
\end{lemma}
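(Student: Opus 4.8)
The plan is to reduce the claim to the privacy guarantee of EIFFeL's secure input validation, observing that Phase~1 of DP-BREM\textsuperscript{+} (steps \textcircled{\small 1}--\textcircled{\small 3}) is exactly EIFFeL's validation subprotocol instantiated with the predicate $\mathsf{Valid}(\bm{z}_i)\coloneqq\mathbbm{1}(\|\bm{z}_i\|\leqslant C)$, which is an arithmetic circuit over $\mathbb{F}_{2^K}$. First I would isolate which messages can depend on $\bm{z}_i$: the only ones leaving the honest client $\mathsf{C}_i$ are the Shamir shares $[\bm{z}_i]_j$ and $[\pi_i]_j$ for $j\neq i$ together with the Feldman commitments that make those shares verifiable; and the only quantity derived from $\bm{z}_i$ that the server sees is the proof summary $\sigma_i$ it robustly reconstructs from the shares $[\sigma_i]_j$ it collects. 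Everything else in Phase~1 is independent of $\bm{z}_i$.

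Next I would construct a P.P.T. simulator $\mathsf{Sim}$ that takes only the bit $\mathsf{Valid}(\bm{z}_i)$, the identity $\mathsf{C}_i$, and the next-message function $\Omega_{\mathcal{I}_M\cup\mathsf{S}}$ of the corrupted parties, and produces a transcript computationally indistinguishable from the real one. The simulator replaces the shares delivered to the corrupted clients and the server by uniformly random field elements: since a malicious minority is assumed and, following \cite{roy2022eiffel}, the Shamir threshold is chosen strictly larger than $|\mathcal{I}_M|$, any sub-threshold collection of shares is distributed identically regardless of whether it hides $\bm{z}_i$, $\pi_i$, or arbitrary values, so this substitution is perfect. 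The Feldman commitments are simulated by invoking their (computational) hiding property, and the reconstructed proof summary $\sigma_i$ is simulated using the zero-knowledge and soundness properties of the secret-shared non-interactive proof (SNIP): the reconstructed value leaks exactly the bit $\mathsf{Valid}(\bm{z}_i)$, so $\mathsf{Sim}$ sets it to be consistent with its input bit. Composing these substitutions via the hybrid argument already carried out in \cite{roy2022eiffel} gives the indistinguishability.

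Finally I would address the subtlety that the server is malicious rather than honest-but-curious: in EIFFeL the server and all clients jointly act as verifiers, so a malicious server possibly colluding with the malicious minority may deviate in how it reconstructs $\sigma_i$ or in dropping messages. The key point is that the robust reconstruction property of Shamir's scheme tolerates exactly this — honest clients contribute enough correct shares that the protocol either outputs the correct $\mathsf{Valid}(\bm{z}_i)$ or aborts in a manner independent of $\bm{z}_i$ — and, crucially, the coalition $\mathcal{I}_M\cup\{\mathsf{S}\}$ still holds fewer than the reconstruction threshold of shares of both $\bm{z}_i$ and the proof, so it cannot recover $\bm{z}_i$. The main obstacle I anticipate is making this last step airtight: precisely tracking the threshold arithmetic (honest clients vs.\ corrupted clients vs.\ the extra ``verifier view'' of the server) to confirm no reconstructable coalition ever forms, and arguing that the \emph{adaptive} deviations captured by $\Omega_{\mathcal{I}_M\cup\mathsf{S}}$ grant no additional leverage. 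Since none of these ingredients differ between EIFFeL and Phase~1 of DP-BREM\textsuperscript{+}, the cleanest route is to state the reduction explicitly and defer the low-level bookkeeping to the corresponding privacy lemma of \cite{roy2022eiffel}.
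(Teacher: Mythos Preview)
Your proposal is correct and takes essentially the same approach as the paper: Phase~1 of DP-BREM\textsuperscript{+} is exactly EIFFeL's input-validation subprotocol, so its privacy guarantee is inherited directly from \cite{roy2022eiffel}. In fact the paper does not give an explicit proof of this lemma at all --- it simply states that the result is ``derived from EIFFeL'' --- so your sketch (isolating the $\bm{z}_i$-dependent messages, invoking Shamir threshold secrecy, Feldman hiding, and SNIP zero-knowledge, then deferring the hybrid argument to \cite{roy2022eiffel}) is already considerably more detailed than what the paper provides.
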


\begin{lemma}
    \label{lem:privacy_noise_generation}
    In Phase 2, DP-BREM\textsuperscript{+} reveals nothing about the generated Gaussian noise.
\end{lemma}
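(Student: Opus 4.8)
The plan is to establish privacy for Phase~2 in the standard simulation paradigm, matching the style of the privacy arguments for Phases~1 and~3: I would exhibit a P.P.T.\ simulator that reproduces the joint view of the adversary (a malicious server together with a minority coalition $\mathcal{I}_M$ of malicious clients) during steps \textcircled{\small 4} and \textcircled{\small 5}, \emph{without} ever being handed the generated noise $\bm{\xi}$. Since Phase~2 consists only of client-side messages — each of which is either a Shamir secret share or an internal message of the underlying MPC protocol — it suffices to simulate each of the two steps in turn and then appeal to sequential composition of secure subprotocols via a standard hybrid argument. Note that the server does not participate in the computation of Phase~2, so including $\mathsf{S}$ in the corrupted set is harmless.

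For step \textcircled{\small 4} (generating shares of the uniform reals $u_k,v_k$), I would argue as follows. Each honest client $\mathsf{C}_i$ draws its bit-vector $\bm{b}_i\in\mathbb{F}_2^l$ uniformly and distributes Shamir shares $[\bm{b}_i]_j$; under the malicious-minority assumption and the EIFFeL reconstruction threshold $t$ (chosen to exceed the coalition size, so $|\mathcal{I}_M|\leqslant t-1$), the coalition holds at most $t-1$ shares of any honest $\bm{b}_i$, so by the perfect secrecy of Shamir's scheme \cite{shamir1979share} those shares are distributed as uniform field elements independent of $\bm{b}_i$, and the simulator samples them exactly so. Because the XOR $\bm{b}=\bigoplus_i\bm{b}_i$ absorbs at least one honest contribution acting as a one-time pad, the resulting uniform real $u$ (and likewise $v$) is statistically independent of the adversary's view; the subsequent local conversion of $[\bm{b}]_j$ to an arithmetic share $[u]_j\in\mathbb{F}_{2^K}$ is a deterministic function of a single held share and leaks nothing further.

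For step \textcircled{\small 5} (the Box--Muller map \cite{box1958note} followed by the local scaling by $R\sigma$), I would invoke the security of the maliciously-secure MPC protocol \cite{keller2020mp} used to evaluate it: that protocol takes secret-shared inputs $[u_k]_j,[v_k]_j$ and returns secret-shared outputs $[\bm{\xi}]_j$, and its simulator $\mathsf{Sim}_{\mathrm{MPC}}$ reproduces the adversary's view from only the adversary's input and output shares. Since the input shares are already simulated as above and the at-most-$(t-1)$ output shares held by the coalition are (computationally) independent of $\bm{\xi}$, $\mathsf{Sim}_{\mathrm{MPC}}$ runs without knowledge of $\bm{\xi}$; chaining it with the step-\textcircled{\small 4} simulator yields a full simulator for Phase~2. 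The hard part will be step \textcircled{\small 5}: securely computing $\sqrt{-2\ln(\cdot)}$, $\cos$, and $\sin$ over fixed-point shares while ruling out that a deviating minority coaxes the honest parties into exposing any intermediate value of $\bm{\xi}$ — this is precisely where the argument must rely on the \emph{malicious} (not merely semi-honest) security of the invoked MPC protocol and on the robust-reconstruction threshold bounding how many shares the coalition can ever combine. Once that guarantee is cited, the remainder is the routine composition sketched above; the separate question of whether the output distribution is exactly $\mathcal{N}(0,R^2\sigma^2\mathbf{I}_d)$ under deviation is not needed here, being handled by Lemma~\ref{lem:integrity_of_Gaussian_noise}.
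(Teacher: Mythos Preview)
Your proposal is correct and follows essentially the same two-step argument as the paper: privacy of step~\textcircled{\small 4} via the one-time-pad property of the XOR with at least one honest contribution (plus Shamir secrecy of the individual $\bm{b}_i$), and privacy of step~\textcircled{\small 5} by invoking the security of the underlying maliciously-secure MPC protocol, noting that only shares are produced and no output is reconstructed. Your version is considerably more detailed than the paper's own proof, which is a two-sentence sketch; in particular, your explicit simulation framing, share-counting against the threshold, and hybrid-composition argument are all reasonable elaborations of what the paper leaves implicit.
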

\begin{proof}
In step \textcircled{\small 4}, no entity learns the uniformly random number $u$ as long as there is at least one honest client due to the bitwise XOR operation. In step \textcircled{\small 5}, nothing is revealed because the utilized MPC protocol \cite{lindell2017framework} guarantees information theoretic privacy about the input shares during computation for distribution transmission. Note that the step \textcircled{\small 5} only generates the shares hold by clients without outputting the final result.
\end{proof}

\begin{lemma}
    \label{lem:privacy_aggregation}
    In Phase 3, for an honest client $\mathsf{C}_i$, DP-BREM\textsuperscript{+} reveals nothing about the private input $\bm{z}_i$ except whatever can be leaned from the noisy aggregate. 
\end{lemma}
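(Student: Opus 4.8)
The plan is to prove Lemma \ref{lem:privacy_aggregation} in the standard simulation-based paradigm used by EIFFeL \cite{roy2022eiffel}, exhibiting a P.P.T. simulator that reproduces the Phase-3 view of the adversary (the malicious server $\mathsf{S}$ together with the malicious clients $\mathcal{I}_M$) from only the honest-side quantity $\sum_{i\in\mathcal{I}_H}\bm{z}_i+\bm{\xi}$. First I would pin down exactly what is exchanged during Phase 3: in step \textcircled{\small 6} each honest client $\mathsf{C}_j$ sends to the server \emph{only} the single combined share $[A]_j \coloneqq \sum_{i\in\mathcal{I}_\mathsf{Valid}}[\bm{z}_i]_j+[\bm{\xi}]_j$ of the aggregate $A\coloneqq\sum_{i\in\mathcal{I}_\mathsf{Valid}}\bm{z}_i+\bm{\xi}$, and in step \textcircled{\small 7} the server performs robust reconstruction to recover $A$. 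Crucially, no honest client ever transmits an individual share $[\bm{z}_i]_j$ or $[\bm{\xi}]_j$ during this phase; those were only distributed earlier, and their secrecy is covered by Lemmas \ref{lem:privacy_input_validation} and \ref{lem:privacy_noise_generation}. Hence the entire Phase-3 transcript visible to the adversary is $\{[A]_j\}_{j\in\mathcal{I}_H}$ together with the reconstructed value $A$.

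Second I would construct the simulator. Invoking the next-message function $\Omega_{\mathcal{I}_M\cup\mathsf{S}}$, the simulator extracts the well-formed malicious inputs $\{\bm{z}_i\}_{i\in\mathcal{I}_M^*}$ and the share values held by the corrupted parties, so it can form the full aggregate $A=(\sum_{i\in\mathcal{I}_H}\bm{z}_i+\bm{\xi})+\sum_{i\in\mathcal{I}_M^*}\bm{z}_i$ from its input $\sum_{i\in\mathcal{I}_H}\bm{z}_i+\bm{\xi}$. By the linearity of Shamir's secret sharing, $[A]_\cdot$ is exactly the evaluation of the degree-$(t-1)$ polynomial $p_A=\sum_{i\in\mathcal{I}_\mathsf{Valid}}p_{\bm{z}_i}+p_{\bm{\xi}}$ with $p_A(0)=A$. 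The simulator then samples a uniformly random degree-$(t-1)$ polynomial $\hat p$ subject to $\hat p(0)=A$ and to agreeing with the (at most $m<t$) share values already fixed by the corrupted parties, outputs $\{\hat p(\alpha_j)\}_{j\in\mathcal{I}_H}$ as the simulated honest shares (where $\alpha_j$ is client $j$'s public evaluation point), and finally emits $A$ as the reconstructed output.

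Third I would establish $\mathsf{Real}\equiv_{\mathsf{C}}\mathsf{Sim}$ by reducing to the perfect secrecy of Shamir's scheme. Because the adversary corrupts a minority ($m<t$), it holds fewer than $t$ shares of $A$; conditioned on those shares and on the secret $A$, the remaining honest shares are a deterministic evaluation of a degree-$(t-1)$ polynomial whose free coefficients are uniform and independent of how $A$ decomposes into the individual $\bm{z}_i$'s and $\bm{\xi}$. Therefore the real honest shares $\{[A]_j\}_{j\in\mathcal{I}_H}$ and the simulated $\{\hat p(\alpha_j)\}_{j\in\mathcal{I}_H}$ are identically distributed, and robust reconstruction yields the same $A$ in both worlds by the Reed-Solomon decoding guarantee. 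Since $\bm{\xi}$ enters $p_A$ indistinguishably from any other summand, this is precisely EIFFeL's aggregation-privacy argument with one extra shared addend, from which the lemma follows.

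The main obstacle I anticipate is the step guaranteeing that $p_A$ behaves as a \emph{fresh} uniformly random sharing of $A$ given the corrupted shares, i.e. that summing the individual polynomials does not accidentally leak a partial secret through correlations the adversary could exploit. This reduces to arguing that the honest contributions — the honestly generated $p_{\bm{z}_i}$ for $i\in\mathcal{I}_H$ and the jointly generated $p_{\bm{\xi}}$ — inject enough independent randomness into the non-constant coefficients of $p_A$; I would handle it by noting that each honest sharing is an independent uniform degree-$(t-1)$ polynomial with its prescribed constant term, so their sum is again uniform on the affine subspace fixed by $A$ and the corrupted evaluation points, leaving no residual information about the addends.
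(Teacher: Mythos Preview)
Your proposal is correct and follows precisely the approach the paper intends: the paper does not actually prove this lemma in detail but simply states that it is ``derived from EIFFeL \cite{roy2022eiffel},'' and your simulation-based argument via Shamir secrecy of the summed polynomial $p_A$ is exactly the EIFFeL aggregation-privacy proof specialized to the present setting (with the extra noise share $[\bm{\xi}]_j$ folded in as one more addend). In other words, you have written out what the paper leaves as a citation.
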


\section{More Experimental Results}
\label{apx:more_experimental_results}
\textbf{Iteration Curve.}
Figure \ref{fig:iteration_curve} shows how the accuracy changes with the training iterations across three datasets. The CIFAR-10 dataset, being more complex, has a larger total training iterations ($T=2000$), while the other two datasets have $T=1000$. The presence of Byzantine attacks results in less smooth iteration curves compared to the attack-free case.

\section{Other Related Work}
\label{apx:other_related_work}

\textbf{FL with DP.}
Differential Privacy (DP)  was originally designed for the centralized scenario where a trusted database server, which has direct access to all client's data in the clear, wishes to answer queries or publish statistics in a privacy-preserving manner by randomizing query results. In FL, McMahan et al. \cite{mcmahan2018learning} proposed DP-FedSGD and DP-FedAvg, which provide client-level privacy with a trusted server. Geyer et al. \cite{geyer2017differentially} uses an algorithm similar to DP-FedSGD for the architecture search problem, and the privacy guarantee acts on client-level and the trusted server too. Li et al. \cite{li2020differentially} studies online transfer learning and introduces a notion called task global privacy that works on record-level. However, the online setting assumes the client only interacts with the server once and does not extend to the federated setting. Zheng et al. \cite{zheng2021federated} introduced two privacy notions, that describe privacy guarantee against an individual malicious client and against a group of malicious clients (but not against the server) on record-level privacy, based on a new privacy notion called $f$-differential privacy. Note that, our solutions achieve record-level DP under either a trusted server or a malicious server.

\textbf{Byzantine-Robust FL.}
Recently, there have been extensive works on Byzantine-robust federated/distributed learning with a trustworthy server, and most of them play with median statistics of gradient contributions. Blanchard et al. \cite{blanchard2017machine} proposed Krum which uses the Euclidean distance to determine which gradient contributions should be removed. Yin et al. \cite{yin2018byzantine} proposed two robust distributed gradient descent algorithms, one based on the coordinate-wise median, and the other on the coordinate-wise trimmed mean. Mhamdi et al. \cite{mhamdi2018hidden} proposed a meta-aggregation rule called Bulyan, a two-step meta-aggregation algorithm based on the Krum and trimmed median, which filters malicious updates followed by computing the trimmed median of the remaining updates.

\textbf{Private (non-DP) and Byzantine-Robust FL.}
Recently, some works tried to simultaneously achieve both privacy and robustness of FL. He et al. \cite{he2020secure} proposed a Byzantine-resilient and privacy-preserving solution, which makes distance-based robust aggregation rules (such as Krum \cite{blanchard2017machine}) compatible with secure aggregation via MPC and secret sharing. So et al. \cite{so2020byzantine} developed a similar scheme based on Krum, but rely on different cryptographic techniques, such as verifiable Shamir's secret sharing and Reed-Solomon code. Velicheti et al. \cite{velicheti2021secure} achieved both privacy and Byzantine robustness via incorporating secure averaging among randomly clustered clients before filtering malicious updates through robust aggregation. However, these works only ensure the security of the aggregation step and do not achieve DP for the aggregated model. 

\end{document}